%% file: main.tex
\documentclass{article}
\usepackage{amsmath}
\usepackage{wright}
\usepackage{tikz}
\usetikzlibrary{matrix}

\newtheorem*{theorem*}{Theorem}

\newcommand{\dU}{\mathrm{d}U}
\newcommand{\reg}[1]{\mathsf{#1}}

\newcommand{\Sym}[2]{\mathrm{Sym}^{(#1,#2)}}
\newcommand{\Pisym}[2]{\Pi_{\mathrm{Sym}}^{(#1,#2)}}

\newcommand{\Werner}{\mathcal{C}_{\mathrm{W}}}
\newcommand{\Purify}{\mathrm{RPC}}
\newcommand{\states}{\mathcal{D}}
\newcommand{\Schur}{\mathcal{U}_{\mathrm{Sch}}}

\newcommand{\PCT}{\mathcal{C}_{\mathrm{PCT}}}

\newcommand{\Fid}{\mathrm{F}}

\newcommand{\dP}{\mathrm{d}P}

\newcommand{\Par}{\mathrm{Par}}

\newcommand{\SSYT}{\mathrm{SSYT}}

\newcommand{\Lift}{\mathrm{Lift}}
\newcommand{\WSS}{\mathrm{WSS}}

\newcommand{\sh}{\mathrm{shape}}
\newcommand{\Proj}{\mathrm{Proj}}

\newcommand{\Sep}{\mathrm{Sep}}

\newsavebox{\originalqed}
\sbox{\originalqed}{\qedsymbol}
\let\originalBox\Box
\makeatletter
\renewcommand{\Box}{\mathpalette\loweredBox@{}}
\newcommand{\loweredBox@}[2]{%
    \raisebox{-0.12\height}{$\m@th#1\originalBox$}%
}
\makeatother
\renewcommand{\qedsymbol}{\usebox{\originalqed}}

\title{Optimal cloning of mixed states}
\author{Marco Fanizza\thanks{Inria, T\'el\'ecom Paris -- LTCI, Institut Polytechnique de Paris, Palaiseau, France. \texttt{marco.fanizza@inria.fr}}  \and Dmitry Grinko\thanks{QuSoft \& Institute for Logic, Language and Computation \& Korteweg-de Vries Institute for Mathematics, University of Amsterdam, The Netherlands. \texttt{d.grinko@uva.nl}} \and Thilo Scharnhorst\thanks{UC Berkeley. \texttt{\{thilo,jspilecki\}@berkeley.edu}} \and Jack Spilecki\footnotemark[3]}
\date{}

\begin{document}

\maketitle

\begin{abstract}
We consider the problem of \emph{approximate cloning} of quantum states: given $n$ copies of an unknown state $\rho \in \C^{d \times d}$, prepare an $(n+k)$-copy state with high fidelity to $\rho^{\otimes (n+k)}$. Werner's pure state cloner is the optimal channel for the pure state case, and shows that $n = \Theta(kd/\eps)$ copies are necessary and sufficient to clone $k$ additional copies of an unknown pure state to fidelity $1-\eps$. The random purification channel gives a straightforward extension of Werner's cloner to mixed state inputs: given $n$ copies of a mixed state, randomly purify your input, apply Werner's channel in the larger Hilbert space, and then trace out the auxiliary registers. This gives a mixed state cloner using $n = O(krd/\eps)$ copies to clone rank-$r$ states. Can one do any better? We show that the answer is no: one must use $n = \Omega(krd/\eps)$ copies. We prove our lower bound by studying the special case of \emph{projector cloning}, in which the input state $\rho$ is promised to be of the form $P/r$, where $P$ is a rank-$r$ orthogonal projector.

As a further application of our techniques, we consider the closely related problem of \emph{approximate transposition} of quantum states, where one seeks to convert $\rho^{\otimes n}$ to a $k$-copy state with high fidelity to~$(\rho^T)^{\otimes k}$. Here, we again show $n = \Theta(krd/\eps)$ copies are necessary and sufficient for this task.
\end{abstract}

\newpage

\hypersetup{linktocpage}
\tableofcontents
\thispagestyle{empty}

\newpage

\section{Introduction}

\input{intro.tex}

\section{Preliminaries}

\input{prelims.tex}

\section{Sample-optimal cloning of mixed states} \label{sec:cloning}

\input{cloning.tex}

\section{Sample-optimal transposition of mixed states} \label{sec:transposition}

\input{transposition.tex}

\section*{Acknowledgments}
D.G.\ acknowledges support by NWO grant NGF.1623.23.025 (“Qudits in theory and experiment”).
J.S.\ is supported by John Wright's NSF CAREER award CCF-233971, and would like to thank Omar Alrabiah, Angelos Pelecanos, and particularly John Wright for many valuable discussions.

\paragraph{AI statement.} We used AI tools throughout the course of this project to discuss ideas, perform computations, prove low-level results, and assist with the preparation of the manuscript. The
central ideas of this work were human-generated, with the exception of the recursion used to prove \Cref{prop:untransposed_bound}.

\bibliographystyle{alpha}
\bibliography{wright}

\end{document}

%% file: intro.tex
How many copies of a quantum state does it take to make one more? If we require that the ``clone'' be exact, then the no-cloning theorem \cite{WZ82,Die82} famously tells us that no finite number of copies suffices: there is no channel $\calC$ such that $\calC(\rho^{\otimes n}) = \rho^{\otimes (n+k)}$ for all states $\rho$, for any fixed $k \geq 1$, no matter how large $n$ is taken to be. The no-cloning theorem is a foundational result that shapes the field of quantum information theory. 

However, exactness is a strict standard, and not one that quantum computations meet in practice anyway. The no-cloning theorem says nothing about the possibility of \emph{approximate cloning}, and it is the approximate version of our question that determines how no-cloning \emph{really} limits us. Therefore, we have a natural question: for fixed $k \geq 1$, how many copies $n$ of an unknown quantum state $\rho \in \C^{d \times d}$ do we need to prepare a state that approximates $\rho^{\otimes (n+k)}$ to some desired accuracy $\epsilon$?

In this work, we measure the quality of the output state by its fidelity\footnote{We will use the ``squared-fidelity'' convention: $\Fid(\rho, \sigma) \coloneq \norm{ \sqrt{\rho} \sqrt{\sigma} }_1^2$, which reduces to $| \braket{u}{v} |^2$ for pure states.} with $\rho^{\otimes (n+k)}$, and seek channels that attain the optimal scaling of $n$. That is, we want to construct channels $\calC$, which we call \emph{cloning channels} or \emph{cloners}, which use the fewest number of copies $n$ while maintaining the guarantee
\begin{equation*}
    \min_{\rho} \Big[ \Fid\big( \calC(\rho^{\otimes n}), \rho^{\otimes (n+k)} \big)\Big] \geq 1 - \eps.
\end{equation*}

The pure state special case of approximate cloning has been completely solved. Following the first constructions of approximate cloners \cite{BH96,GM97}, Werner constructed the unique optimal pure state cloner for arbitrary dimension $d$ \cite{Wer98}. On input $\ketbra{u}^{\otimes n}$, Werner's cloner outputs the state
\begin{equation*}
    \Werner\big( \ketbra{u}^{\otimes n} \big) = \frac{d[n]}{d[n+k]} \cdot \Pisym{d}{n+k} \cdot \big( \ketbra{u}^{\otimes n} \otimes I^{\otimes k} \big) \cdot \Pisym{d}{n+k}. 
\end{equation*}
Here, $\Pisym{d}{m}$ is the projector onto the symmetric subspace on $(\C^d)^{\otimes m}$, and $d[m]$ is the dimension of the symmetric subspace.
For every pure state input, this cloner achieves fidelity $d[n]/d[n+k]$, which Werner proved is optimal, and uniquely achieved by his construction.
By analyzing this ratio, it can be shown that $n = O(kd/\eps)$ copies are sufficient, and $n = \Omega(kd/\eps)$ copies are required, in order to successfully clone $k$ extra copies of an unknown pure state.
Thus, the sample complexity of pure state cloning is settled as $n = \Theta(kd/\eps)$.
Since then, despite substantial progress on related problems --- such as mixed state broadcasting \cite{DMP05,BDMP06,Chiribella_2006,DF07}, asymmetric cloning \cite{Cer00,FFC05,IAG+05,NPR21,NPR23}, and cloning special families of states \cite{BCI+01,GM06,Chiribella_2014,Chiribella_2013,BCM26} --- there is still no comparable picture for approximate cloning of mixed states.
In the words of Scarani et al. \cite{SIGA05}, \emph{``The optimal cloning of mixed states is thus a completely open domain''}, and this has remained true since.

However, the recently introduced \emph{random purification channel} suggests a natural way to close this gap. This channel, introduced in \cite{TWZ25}, takes $n$ copies of any unknown mixed state $\rho$ to $n$ copies of the same, albeit random, purification of $\rho$ on a larger Hilbert space. The random purification channel can be used to give straightforward reductions from many mixed state tasks to their often far simpler pure state special cases. Simply: use it to obtain copies of a purification, carry out the pure state protocol in the larger Hilbert space, and then trace out the auxiliary purifying registers.

Applying this framework to Werner's pure state cloner gives a natural mixed state cloner which first appeared in the literature in \cite{LTHC26}. The cloner schematically acts as
\begin{equation*}
    \rho_{\reg{A}}^{\otimes n} \xlongrightarrow{\mathrm{Purify}} \ketbra{\rho}_{\reg{AB}}^{\otimes n} \xlongrightarrow{\mathrm{Clone}}  \ketbra{\rho}_{\reg{AB}}^{\otimes (n+k)} \xlongrightarrow {\mathrm{Trace}}  \rho^{\otimes (n+k)}_{\reg{A}}.
\end{equation*}
We call this the \emph{PCT cloner}, for the steps of its action. Of course, the pure state cloning in the second step is approximate, so the final state is approximate as well, but with enough copies $n$ to perform the cloning well in the enlarged Hilbert space, the final mixed state is close to the desired output as well, by data processing. In particular, for rank-$r$ inputs, the purification $\ket{\rho}$ can be taken to live in an $rd$-dimensional space, and then Werner's cloner requires only $n = O(krd/\eps)$ copies as input. This begs the question:
\begin{center}
\textit{Is the PCT cloner sample-optimal for mixed state cloning?}
\end{center}
That is, must one use $\Omega(krd/\eps)$ copies?

There is a reasonable case for expecting either resolution. On the one hand, against optimality: by cloning an entire \emph{purification} of $\rho$, the PCT cloner appears to do much more than necessary, only to throw away the purification registers. One might also look to the classical special case, where the natural analogue of cloning --- \emph{sample amplification} --- is an easier task than learning a classical probability distribution \cite{AGSV24,AGH+24,HJW15}. On the other hand, in favor of optimality: for tomography, the RPC-based algorithms might also seem to do more than necessary by learning a purification, but nevertheless yield sample-optimal algorithms \cite{PSTW25,Yue23,SSW25}. Moreover, optimal pure state cloning is structurally tied to optimal pure state tomography \cite{Hay98}, via Chiribella's equation \cite{Chi11} (see also~\cite{Bruss_1998,Bae06,Chiribella_2006}), and for pure states, cloning one additional copy has the same sample complexity as learning the state, $\Theta(d/\eps)$\footnote{We stress that this is an equality of sample complexities, not an operational reduction: tomography consumes all the copies and leaves only a classical estimate, whereas the optimal cloner is acting coherently and is not a measure-and-prepare channel.}. If the correspondence between the complexities of cloning and tomography is expected to survive into the mixed state setting, then it seems reasonable to expect that cloning one additional copy of a mixed state requires $\Theta(rd/\eps)$ copies.

In this work, we resolve our question, and determine the sample complexity of mixed state cloning.  

\begin{theorem}[The PCT cloner is sample-optimal] \label{thm:main_result_intro}
    Suppose $\calC: \states(\C^d)^{\otimes n} \to \states(\C^d)^{\otimes (n+k)}$ is a cloning channel such that
    \begin{equation*}
         \Fid \big(  \calC(\rho^{\otimes n}) , \rho^{\otimes (n+k)}\big) \geq 1-\eps,
    \end{equation*}
    for all states $\rho$ of rank at most $r$. Then we must have $n = \Omega(krd/\eps)$, for $d \geq 2$, and $\epsilon \leq \epsilon_0$, where $\epsilon_0$ is a universal constant. Moreover, the PCT cloner achieves this guarantee with $n = O(krd/\eps)$ copies. Thus, the sample complexity of cloning $k$ additional copies of any unknown rank-$r$ input state $\rho^{\otimes n}$ to fidelity $1-\eps$ is $\Theta(krd/\eps)$. 
\end{theorem}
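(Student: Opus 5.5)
The plan has two parts: the upper bound, which is routine, and the lower bound, which carries the content of the theorem.

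\emph{Upper bound.} Apply the random purification channel to $\rho^{\otimes n}$. This yields $\ketbra{\psi}^{\otimes n}$ for a random purification $\ket{\psi} \in \C^d \otimes \C^r$, where every purification correctly reduces to $\rho$. Next apply $\Werner$ in dimension $D = rd$, and finally trace out the purifying registers. Werner's cloner achieves fidelity $D[n]/D[n+k]$ on every pure input, and fidelity is jointly concave over the mixture of purifications, so the total fidelity is at least $D[n]/D[n+k]$. Data processing under the partial trace carries this bound to the mixed output. Writing
\[
\frac{D[n]}{D[n+k]} = \prod_{j=1}^{k}\frac{n+j}{n+j+D-1} \ge \Bigl(1-\frac{D}{n}\Bigr)^{k} \ge 1-\frac{kD}{n},
\]
we see that $n = O(krd/\eps)$ copies suffice.

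\emph{Lower bound.} I would restrict to projector inputs $\rho = P/r$, with $P$ a uniformly random rank-$r$ projector. Any channel that works for all rank-$r$ states must in particular work for these. Because the fidelity targets form a unitarily covariant family, I would twirl the channel over $U^{\otimes n}\to U^{\otimes(n+k)}$. Concavity of fidelity in the output means twirling does not decrease the worst-case fidelity, so we may assume $\calC$ is $U(d)$-covariant and that its average fidelity equals its worst-case fidelity. Schur--Weyl duality then decomposes the input $(P/r)^{\otimes n}$ into isotypic blocks labelled by Young diagrams $\lambda$ with at most $r$ rows. Since $P^{\otimes n}$ commutes with the symmetric group, it lies in the commutant and acts within each block through the Schur projection. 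The output target $(P/r)^{\otimes (n+k)}$ decomposes the same way. The fidelity is then bounded blockwise by an optimization over covariant maps between irreps $\lambda \vdash n$ and $\mu \vdash n+k$, with $\mu \supseteq \lambda$. For projector inputs, the weight of $(P/r)^{\otimes n}$ is concentrated on near-rectangular $\lambda$, with rows of length about $n/r$.

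The key computation, and the step I expect to be the main obstacle, is bounding the block fidelity by roughly $1 - c\,k\,rd/n$. I would first handle $k=1$, adding one box to a near-rectangular $\lambda$. The candidate mechanism is that the ideal output corresponds to adding the box to one of the $r$ occupied rows, yet any covariant channel must spread the new box with weights governed by ratios of dimensions $\dim V^{U(d)}_\mu/\dim V^{U(d)}_\lambda$. These ratios behave like $(n/r + d)/(n/r)$ per row, giving a loss of order $d/(n/r) = rd/n$. To make this rigorous I would write the optimal covariant fidelity as a largest-eigenvalue or semidefinite-program value on a Pieri-type operator, and bound it by a recursion in $k$. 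I would try to show that each additional cloned copy loses at least $\Omega(rd/n)$, which would give $1-\eps \le \exp(-\Omega(krd/n))$ for $\eps \le \eps_0$. Concentration of $\lambda$ around the rectangle, for instance via Keyl's estimates or direct computation for projectors, then handles the atypical blocks. Rearranging yields $n = \Omega(krd/\eps)$.
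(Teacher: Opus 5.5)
Your upper bound is the paper's argument, but the order of steps needs fixing. First apply data processing under $\Tr_{\reg{B}}$ to each purification, so that every term has the common target $\rho^{\otimes(n+k)}$. Only then use concavity of $\sqrt{\Fid}$ in the first argument. Squared fidelity is not jointly concave, and before tracing out, the targets $\ketbra{\psi}^{\otimes(n+k)}$ differ across purifications.

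The lower bound has a genuine gap at the step where you bound things blockwise by covariant maps from $\lambda\vdash n$ to $\mu\vdash n+k$ ``with $\mu\supseteq\lambda$''. Nothing forces $\mu\supseteq\lambda$. A permutation-invariant, unitarily covariant channel can route block $\lambda$ to any $\mu$ with $\ell(\mu)\le r$; for example, it can discard its input and output the normalized projector onto the $\mu$-isotypic block. For $\mu\not\supseteq\lambda$, the only easy estimate, $q^d_\lambda(P)^T\preceq I$, gives $\frac{s_\lambda(1^d)}{s_\lambda(1^r)}\cdot\frac{s_\mu(1^r)}{s_\mu(1^d)}$. For $r\ge 2$ this is far larger than $1$ (take $\lambda$ near-rectangular with $r$ rows and $\mu=(n+k)$), so these blocks are not harmless. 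Controlling them is exactly the main technical content of the paper: $\int q^d_\lambda(P)^T\otimes q^d_\mu(P)\,\dP\preceq \frac{s_{\lambda\cup\mu}(1^r)}{s_{\lambda\cup\mu}(1^d)}\, I_{\calQ^d_\lambda}\otimes I_{\calQ^d_\mu}$ for \emph{every} $\mu$. With this in hand, ordering the at least $k$ boxes of $(\lambda\cup\mu)\setminus\lambda$ gives the whole product $\prod_{j=1}^k\frac{r+\lambda_1+j-1}{d+\lambda_1+j-1}$ at once. So your recursion in $k$ is unnecessary, and it is not obviously available anyway, since a general $n\to n+k$ channel need not factor through one-copy steps.

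Your proposed tool would not close this gap as stated, for two reasons. First, fidelity is concave, so bounds on the channel's output for each input block $\lambda$ do not directly give an upper bound for the mixture $\rho^{\otimes n}$; concavity points the wrong way. The paper linearizes via $\tr\big(\sigma P^{\otimes m}\big)\ge\Fid\big(\sigma,(P/r)^{\otimes m}\big)$. That is what permits averaging over $P$, passing to the Choi state, and using trace preservation $\sum_\mu\tr_{\calQ^d_\mu}J(\calC)_{\lambda\mu}=I_{\calQ^d_\lambda}$ to reduce to $\max_\mu$ of an operator norm. Second, the Choi state carries a transpose on the input. So the operator whose top eigenvalue matters is the \emph{partially transposed} integral on $(\calQ^d_\lambda)^*\otimes\calQ^d_\mu$, not a Pieri or Littlewood--Richardson-type operator. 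The latter easily controls the untransposed $\int q^d_\lambda(P)\otimes q^d_\mu(P)\,\dP$, but $X\preceq Y$ does not imply $X^\Gamma\preceq Y^\Gamma$.

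The missing idea is to show that the untransposed gap $\Delta_{\lambda\mu}$ is \emph{separable}, and hence remains PSD after partial transpose. The paper proves this by induction on $(r,d)$, using:
\begin{itemize}
\item the $U(d)\downarrow U(d-1)$ branching rule;
\item Haar-averaged embeddings (``lifts'');
\item the fact that $\lambda\cup\mu$ and $\lambda$ share their first row when $\lambda_1\ge\mu_1$.
\end{itemize}
Finally, your per-copy loss heuristic $rd/n$ should read $r(d-r)/n$, since it must vanish at $r=d$, where cloning is trivial. So the argument has to be run with rank $\min(r,d/2)$, as the paper does.
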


Our lower bound holds for an especially simple class of states: \emph{projector states}. These are the states of the form $\rho = P/r$, for $P$ a rank-$r$ orthogonal projector. Previous work has shown that these states form hard examples for tomographic tasks \cite{HHJ+16,SSW25, CLW26}; we show that they are also hard for cloning. This sheds light on what makes cloning quantum states difficult: even for states with completely known spectra, correctly reproducing the support already necessitates the full sample complexity.

\subsection{Technical overview}

\paragraph{An upper bound for cloning.} Our upper bound is by a reduction from mixed state cloning to pure state cloning, via the recently introduced random purification channel \cite{TWZ25}. 


Suppose that $\rho \in \C^{d \times d}$ is a quantum state with rank at most $r$. The random purification channel $\Purify^{(d,r,n)}$ takes in $n$ copies of $\rho$, and outputs the state
\begin{equation*}
\Purify^{(d,r,n)}(\rho_{\reg{A}}^{\otimes n}) = \E_{\ket{\brho}} \big[ \ketbra{\brho}_{\reg{AB}}^{\otimes n} \big].
\end{equation*}
Here, we are labeling the $n$ initial registers collectively $\reg{A}$, and the $n$ auxiliary $r$-dimensional registers we must add to support the purifications collectively $\reg{B}$. Also, by \emph{random purification}, we mean a purification distributed as $\ket{\brho} = (I \otimes \bU) \cdot \ket{\rho_0}$, for a Haar random $\bU \in U(r)$, and any fixed purification $\ket{\rho_0}$. It is not so important for us that we produce \emph{this particular} distribution of random purifications, more so that we are able to prepare \emph{some} distribution over purifications. 

The output of the RPC behaves as a classical mixture over purifications of $\ket{\brho}^{\otimes n}$. So from here, the overall idea of the upper bound is simple: use the Werner cloner to clone the purification, obtaining a state close to $\ket{\brho}_{\reg{AB}}^{\otimes (n+k)}$, and then trace out the auxiliary registers, to obtain a state close to $\rho_{\reg{A}}^{\otimes (n+k)}$. 

{
\floatstyle{boxed} 
\restylefloat{figure}
\begin{figure}[H]
Given $n$ copies of a rank-$r$ mixed state $\rho_{\reg{A}}$:
\begin{enumerate}
    \item Apply $\Purify^{(d,r,n)}$ to prepare $n$ copies of a random purification $\ket{\brho}_{\reg{AB}} \in \C^d \otimes \C^r$. Set $D = d \cdot r$. 
    \item Apply $\Werner^{(D, n, k)}$, yielding a mixed state $\bsigma_{\reg{AB}} \in \states( \C^d \otimes \C^r)^{\otimes (n+k)}$. 
    \item Trace out the auxiliary registers, and output $\Tr_{\reg{B}}(\bsigma_{\reg{AB}}) \in \states(\C^d)^{\otimes (n+k)}$. 
\end{enumerate}
\caption{A mixed state cloning channel, which we call the \emph{PCT cloner}, due to its action: \emph{purify--clone--trace}. We denote the PCT cloner by $\PCT^{(d,n,k)}$. As usual, we will sometimes drop the superscript's parameters when these are clear from context.}
\label{fig:reduction_tech_overview}
\end{figure}
}

Thus, mixed state cloning follows easily from the combination of the random purification channel and pure state cloning. This natural strategy has already been described in the literature \cite{LTHC26}. We show the following result.

\begin{proposition}[Sample complexity of the PCT cloner] \label{prop:upper_bound_PCT_tech_overview}
    With $n = O(krd/\eps)$ copies of a mixed state $\rho \in \states(\C^d)$, the PCT cloner $\PCT^{(d,n,k)}$ produces an $(n+k)$-copy state with fidelity at least $1-\eps$ with $\rho^{\otimes (n+k)}$. 
\end{proposition}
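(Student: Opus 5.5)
The plan is to combine three ingredients: the defining property of the random purification channel, Werner's fidelity guarantee for pure states, and joint concavity plus monotonicity of fidelity. Write $D = rd$ and let $\ket{\brho} = (I \otimes \bU)\ket{\rho_0}$ be the random purification. By the defining property of $\Purify^{(d,r,n)}$, the state after step 1 is $\E_{\bU}[\ketbra{\brho}^{\otimes n}]$. Because $\Werner^{(D,n,k)}$ and the partial trace $\Tr_{\reg B}$ are linear, the final output is
\begin{equation*}
\PCT(\rho^{\otimes n}) = \E_{\bU}\big[ \Tr_{\reg B}\, \Werner(\ketbra{\brho}^{\otimes n}) \big].
\end{equation*}
For each fixed $U$, the vector $\ket{\rho_U}^{\otimes(n+k)}$ is a purification of $\rho^{\otimes(n+k)}$. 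Therefore $\Tr_{\reg B}$ maps $\ketbra{\rho_U}^{\otimes(n+k)}$ to $\rho^{\otimes(n+k)}$ for every $U$.

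Next, bound the fidelity pointwise in $U$. Werner's guarantee, applied in dimension $D$, gives
\begin{equation*}
\Fid\big(\Werner(\ketbra{\rho_U}^{\otimes n}),\ketbra{\rho_U}^{\otimes(n+k)}\big) = \frac{D[n]}{D[n+k]}.
\end{equation*}
Data processing (monotonicity of fidelity under the channel $\Tr_{\reg B}$) then yields
\begin{equation*}
\Fid\big(\Tr_{\reg B}\Werner(\ketbra{\rho_U}^{\otimes n}),\rho^{\otimes(n+k)}\big) \ge \frac{D[n]}{D[n+k]}
\end{equation*}
for every $U$. Since the second argument $\rho^{\otimes(n+k)}$ does not depend on $U$, concavity of the squared fidelity in one argument lets this bound survive the average over $\bU$. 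We can avoid relying on that concavity: if $\sigma_U$ has fidelity at least $c$ with a fixed state $\tau$ for every $U$, then $\E_{\bU}\sigma_{\bU}$ also has fidelity at least $c$ with $\tau$. This follows from Uhlmann's theorem by choosing a fixed purification of $\tau$ and optimal purifications of each $\sigma_U$, together with the joint concavity of the root fidelity. Hence $\Fid(\PCT(\rho^{\otimes n}),\rho^{\otimes(n+k)}) \ge D[n]/D[n+k]$.

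It remains to make this ratio at least $1-\eps$. We compute
\begin{equation*}
\frac{D[n]}{D[n+k]} = \prod_{j=1}^{k}\frac{n+j}{n+j+D-1} \ge \Big(1-\frac{D-1}{n+1}\Big)^{k} \ge 1-\frac{k(D-1)}{n+1},
\end{equation*}
where the last step is Bernoulli's inequality. Thus $n \ge kD/\eps = krd/\eps$ suffices, so $n = O(krd/\eps)$.

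The only delicate step is transferring the fidelity bound through the average over random purifications. That step is handled by concavity of fidelity with a fixed second argument, which is standard. If the rank of $\rho$ is smaller than $r$, the same argument works, since the purification still lives in $\C^d \otimes \C^r$.
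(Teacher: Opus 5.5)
Your proposal is correct and follows essentially the same route as the paper. Both proofs use linearity to write the output as an average over random purifications, then Werner's fidelity $D[n]/D[n+k]$ together with data processing under $\Tr_{\reg{B}}$ for each fixed purification, then concavity of the root fidelity to carry the bound through the average, and finally a Bernoulli-type estimate on $D[n]/D[n+k]$. Two small remarks: the Uhlmann detour is unnecessary, since $\sqrt{\Fid(\E\sigma_{\bU},\tau)} \geq \E\sqrt{\Fid(\sigma_{\bU},\tau)}$ already gives the claim. Also, your intermediate step $\prod_j \frac{n+j}{n+j+D-1} \geq (1-\frac{D-1}{n+1})^k$ needs $n+1 \geq D-1$. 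That holds in the relevant regime $n \geq kD/\eps$, and otherwise your final bound is vacuous anyway, so nothing breaks; the paper's per-factor bound $\frac{n+1}{n+D}$ simply avoids this caveat.
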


The argument is simple, and we sketch the complete proof here. First suppose, after applying the RPC, we have copies of some purification $\ket{\brho}$. Werner's cloner then produces a state with fidelity
\begin{equation*}
\Fid \Big( \Werner\big(\ketbra{\brho}^{\otimes n} \big) , \ketbra{\brho}^{\otimes (n+k)}\Big) = \frac{D[n]}{D[n+k]}.
\end{equation*}
Here, $D[n]$ is the dimension of the symmetric subspace in dimension $D = d\cdot r$. It can be shown that $D[n]/D[n+k] \geq 1 - kD/n$, so in particular, the fidelity is at least $1-\epsilon$ for $n = O(kD/\epsilon) = O(krd/\eps)$. But then, since fidelity only increases upon tracing out the auxiliary registers,
\begin{equation*}
     \Fid \Big( \tr_{\reg{B}} \big(\Werner\big(\ketbra{\brho}^{\otimes n} \big)\big), \rho^{\otimes (n+k)}\Big) = \Fid \Big( \tr_{\reg{B}} \big(\Werner\big(\ketbra{\brho}^{\otimes n} \big)\big), \tr_{\reg{B}}\big(\ketbra{\brho}^{\otimes (n+k)}\big)\Big) \geq 1-\epsilon. 
\end{equation*}
This equation holds for any such purification $\ket{\brho}$, so in particular, it can also be shown to hold for the mixture over purifications, using the concavity of the (square-root) fidelity. Therefore, the following algorithm successfully clones an unknown rank-$r$ mixed state using $n = O(krd/\eps)$ copies. 

\paragraph{A lower bound for cloning.} Our lower bound is inspired by a lower bound for learning in fidelity given in \cite{SSW25}. In that work, the authors considered \emph{projector states}, states of the form $\rho = P/r$, for $P$ an orthogonal rank-$r$ projector. Projector states are a convenient family of states to analyze, owing to their simplicity and structure. Nevertheless, the authors showed that learning to fidelity $1-\eps$ requires $n = \Omega(rd/\eps)$ copies (given some simple restrictions on $r, d, \eps$). This matches the $O(rd/\eps)$ upper bound for learning rank-$r$ states \cite{PSW25}. Thus, the projector states were shown to form a set of hard instances for learning in fidelity. 

In this work, we likewise show that projector states form a family of hard instances for cloning.

\begin{proposition}[A lower bound for cloning projector states]\label{prop:projector_lower_bound_tech_overview}
Any channel $\calC: \states(\C^d)^{\otimes n} \to \states(\C^d)^{\otimes (n+k)}$ which clones rank-$r$ projector states to fidelity $1-\eps$ requires at least $n = \Omega(krd/\epsilon)$ copies as input, for $d \geq 2$, $r \leq d/2$, and $\epsilon \leq 1/16$. 
\end{proposition}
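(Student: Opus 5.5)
We will prove the bound by symmetrization, followed by an explicit optimization over the remaining invariant channels, and then an asymptotic estimate. The family of projector states $\rho = P/r$ is invariant under conjugation $P \mapsto UPU^\dagger$ for $U \in U(d)$. The fidelity is also invariant under permutations of the $n$ input and $n+k$ output registers.

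The first step is a twirling argument. By concavity of the square-root fidelity, averaging $\calC$ over $U(d)$-covariance and over $S_n \times S_{n+k}$ permutations does not decrease the worst-case fidelity. So we may assume $\calC$ is $U(d)$-covariant and permutation-covariant. For such a channel, the fidelity $\Fid(\calC((P/r)^{\otimes n}), (P/r)^{\otimes(n+k)})$ is independent of $P$. It therefore equals the average over Haar-random $P$. This average is what we will bound.

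The second step is to make the problem tractable through a Schur--Weyl decomposition. The state $(P/r)^{\otimes n} = r^{-n} P^{\otimes n}$ is proportional to the projector onto $(\mathrm{range}\,P)^{\otimes n}$. Under Schur--Weyl duality it decomposes as $\bigoplus_{\lambda \vdash n,\ \ell(\lambda)\le r} \Pi_\lambda$ restricted to the $r$-dimensional subspace. Similarly, $(P/r)^{\otimes(n+k)}$ is the normalized projector onto $(\mathrm{range}\,P)^{\otimes(n+k)}$. Its fidelity with any state $\sigma$ is $\Fid = r^{-(n+k)}\big(\tr\sqrt{\sigma^{1/2}P^{\otimes(n+k)}\sigma^{1/2}}\big)^2$. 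This is at most $r^{-(n+k)}\,\mathrm{rank}(\sigma)\cdot\tr(P^{\otimes(n+k)}\sigma)$ by Cauchy--Schwarz.

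A cleaner route is to lower bound the infidelity through the weight that $\sigma$ places outside $(\mathrm{range}\,P)^{\otimes (n+k)}$, i.e.\ $1-\Fid \ge \tr\big((I-P^{\otimes(n+k)})\sigma\big)$, using $\Fid(\sigma,\tau)\le \tr(\Pi_\tau \sigma)$ for $\Pi_\tau$ the support projector. The task then reduces to showing the following. For any covariant channel, the averaged leakage $\E_P \tr\big((I-P^{\otimes(n+k)})\calC(P^{\otimes n}/r^n)\big)$ is $\Omega(krd/n)$. Equivalently, the "support-preservation" quantity $\E_P \tr(P^{\otimes(n+k)}\calC(P^{\otimes n}))/r^n$ is at most $1-\Omega(krd/n)$.

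This quantity is linear in $\calC$, so it is a semidefinite program over the Choi operator. By covariance the Choi operator lies in the commutant of $U(d)$ acting on $n+(n+k)$ registers, i.e.\ the walled Brauer algebra. We would compute $\E_P P^{\otimes n}\otimes P^{\otimes(n+k)}$ via the Schur decomposition: Haar averages of $P^{\otimes m}$ give coefficients like $\dim(\lambda)_r/\dim(\lambda)_d$. We would then bound the SDP value by a dual feasible solution. The natural guess is that the optimum is attained by a Werner-like map $\Pi_\mu (X\otimes I^{\otimes k})\Pi_\mu$ acting irrep-by-irrep, with $\mu$ obtained from $\lambda$ by adding $k$ boxes. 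The value would then be a ratio like $\max_{\lambda}\,\frac{s_\mu(1^r)\,s_\lambda(1^d)}{s_\lambda(1^r)\,s_\mu(1^d)}$ summed appropriately. For $\lambda$ near the rectangle $(n/r)^r$ this ratio should be about $\prod (1 - (d-r)/(n/r))^{k}\approx 1-krd/n$ when $r\le d/2$.

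The main obstacle is this final step. It requires proving that no covariant channel beats the irrep-wise Werner-type construction, and then estimating the ratio of dimensions of $GL_r$ and $GL_d$ irreps uniformly over the relevant $\lambda$. This uniform control is needed especially for the unbalanced $\lambda$ that carry the weight of $P^{\otimes n}$. Here we would first try a recursion in $k$, adding one box at a time and using the hook-content formula for the ratio $\frac{s_{\lambda+e_i}(1^r)}{s_\lambda(1^r)} = \frac{r+c}{\text{hook ratio}}$. That recursion should bound each single-box step by $1-\Omega(rd/n)$. The condition $\eps\le 1/16$ would then absorb lower-order terms.
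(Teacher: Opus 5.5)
\textbf{There is a genuine gap: the step that carries the lower bound is missing.} Your reductions match the paper's first three steps: symmetrize, pass from fidelity to the overlap via $\Fid(\sigma,\tau)\le\tr(\Pi_\tau\sigma)$, replace the worst case by the Haar average over $P$, and treat the result as an SDP over the Choi operator. Your guessed answer is also right: $\mu=\lambda+k e_1$, giving a product of $(r+c)/(d+c)$ over $k$ first-row boxes.

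The step you label the main obstacle is the one you leave open. After imposing permutation invariance and trace preservation $\sum_\mu \tr_{\calQ^d_\mu} J_{\lambda\mu}=I_{\calQ^d_\lambda}$, covariance lets you take the dual certificate to be a multiple of the identity on each $\lambda$-block. Its feasibility then amounts to the operator inequalities
\[
\int q^d_\lambda(P)^T\otimes q^d_\mu(P)\,\dP \;\preceq\; c_\lambda\, I_{\calQ^d_\lambda}\otimes I_{\calQ^d_\mu}\qquad\text{for all }\mu\vdash n+k,
\]
and the lower bound needs them with $c_\lambda=s_{\lambda+ke_1}(1^r)/s_{\lambda+ke_1}(1^d)$.

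Note the transpose. You write $\E_P P^{\otimes n}\otimes P^{\otimes(n+k)}$, but the Choi pairing gives $(P^T)^{\otimes n}\otimes P^{\otimes(n+k)}$, and this is not cosmetic.
\begin{itemize}
\item Without the transpose, the bound takes two lines of Littlewood--Richardson: $q_\lambda\otimes q_\mu$ contains only $q_\tau$ with $\tau\supseteq\lambda$ and $|\tau|=|\lambda|+|\mu|$, and $\int q_\tau(P)\,\dP=\frac{s_\tau(1^r)}{s_\tau(1^d)}I$. This is exactly the easy transposition case in the paper.
\item With the transpose, the operator is covariant under $\bar U\otimes U$. The decomposition becomes a mixed (walled-Brauer) one that you do not carry out. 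The untransposed bound also does not transfer, because $X\preceq Y$ does not imply $X^\Gamma\preceq Y^\Gamma$.
\end{itemize}

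The paper's missing idea is to prove the stronger statement that
\[
\frac{s_{\lambda\cup\mu}(1^r)}{s_{\lambda\cup\mu}(1^d)}\,I\otimes I-\int q_\lambda(P)\otimes q_\mu(P)\,\dP
\]
lies in $\Sep_+$, which is closed under partial transpose. It does this by induction on $(d,r)$, using Gelfand--Tsetlin branching, a Haar-averaged lift from $\calQ^{d-1}_{\lambda^-}\otimes\calQ^{d-1}_{\mu'}$, and a hook-content identity relating $\lambda$ to $\lambda^-=(\lambda_2,\dots,\lambda_d)$. Your proposed recursion in $k$ via the hook-content formula only evaluates the value of your guessed Werner-type map. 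It says nothing about whether another channel does better, for instance one with $\mu\not\supseteq\lambda$, which is precisely the hard case.

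\textbf{A smaller issue in your last step.} A uniform per-$\lambda$ bound of $1-\Omega(rd/n)$ per added box is false. For $\lambda_1$ near $n$, the first-row factor is $1-\frac{d-r}{d+\lambda_1+j-1}=1-\Theta(d/n)$. You need to average over $\blambda\sim\WSS_n(1^r/r)$ instead of bounding uniformly. The paper does this with three ingredients: $\prod_j(1-u_j)\le 1/(1+k\min_j u_j)$, Jensen's inequality for the resulting convex function of $\blambda_1$, and $\E[\blambda_1]\le n/r+2\sqrt n$. Together these give $n\ge krd/(8\eps)$ for $r\le d/2$ and $\eps\le 1/16$.
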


Our main result, \Cref{thm:main_result_intro}, follows from our upper bound and this lower bound, applied using projector states of rank $\min(r, d/2)$. 


We now give an overview of how we show our lower bound. Our first step is to reduce from fidelity, which is non-linear and difficult to analyze for mixed state generally, to a friendlier linear proxy.
We define a simple quantity we call the \emph{overlap}: for a projector $\Pi$ of rank $R$, the overlap with a quantum state $\sigma$ is just $\tr(\sigma \Pi)$. By an application of Cauchy-Schwarz, it can be shown that 
\begin{equation*}
    \tr( \sigma \cdot \Pi) \geq \Fid( \sigma, \Pi/R).
\end{equation*}
We apply this with $\Pi \leftarrow P^{\otimes (n+k)}$, and $\sigma \leftarrow \calC(\rho^{\otimes n})$. But this shows that if $\calC$ can clone projector states to fidelity $1-\eps$, then we also necessarily have
\begin{equation} \label{eq:tech_overview_1}
    \tr\Big( \calC(\rho^{\otimes n}) \cdot P^{\otimes (n+k)} \Big) \geq 1-\eps, 
\end{equation}
for any projector state $\rho = P/r$. So it suffices to show that cloning projectors in overlap leads to a lower bound on $n$. Note that cloning in overlap is seemingly a much easier task than cloning in fidelity: for overlap, all that is asked of a cloner is that it puts most of its mass on the right support only; fidelity asks that the cloner also has a balanced spectrum on that support. Nevertheless, cloning in overlap still requires $n=\Omega(krd/\eps)$, matching the lower bound for cloning in fidelity.

We upper bound the worst-case overlap by the average-case overlap
\begin{equation*}
    \min_{\rho} \Big[ \tr\big( \calC(\rho^{\otimes n}) \cdot P^{\otimes (n+k)} \big)\Big] \leq \int_P \tr\big( \calC(\rho^{\otimes n}) \cdot P^{\otimes (n+k)} \big) \cdot \dP.
\end{equation*}
Here, $\dP$ is the measure on projectors induced by the Haar measure. Furthermore, we can rewrite the right-hand side in terms of the Choi--Jamiołkowski state $J({\calC})$ corresponding to $\calC$:
\begin{equation*}
    \tr \Big( J({\calC}) \cdot \int_P \rho^{T,\otimes n} \otimes P^{\otimes (n+k)} \cdot \dP \Big) = \frac{1}{r^n} \tr \Big( J({\calC}) \cdot \int_P P^{T,\otimes n} \otimes P^{\otimes (n+k)} \cdot \dP \Big).
\end{equation*}
The integral over projectors is a highly symmetric object, and can be approached using representation theory. In particular, if we rewrite it in the Schur basis, we get
 \begin{equation*}
    \int P^{T,\otimes n} \otimes P^{\otimes (n+k)} \cdot \dP \cong \bigoplus_{\substack{\lambda \in \Par(n,d) \\ \mu \in \Par(n+k,d)}} I_{\calP_\lambda} \otimes I_{\calP_{\mu}} \otimes \int_P q^d_\lambda(P)^T \otimes q^d_\mu(P) \cdot \dP
\end{equation*}
We write $(\calP_\lambda, p_\lambda)$ for the irrep of $S_n$ corresponding to a partition $\lambda$, and $(\calQ^d_\lambda, q^d_\lambda)$ for the irrep of $U(d)$ corresponding to $\lambda$. Here, we will also specify that we will always be taking our transposes in the Schur basis. Trace-preserving constraints can then be used to derive the following upper bound.
\begin{equation*}
    \frac{1}{r^n} \tr \Big( J({\calC}) \cdot \int_P P^{T,\otimes n} \otimes P^{\otimes (n+k)} \cdot \dP \Big) \leq \sum_{\substack{\lambda \vdash n \\ \ell(\lambda) \leq r}} \frac{\dim(\calP_\lambda) \cdot \dim(\calQ^d_\lambda)}{r^n} \cdot \max_{ \substack{\mu \vdash n+k \\ \ell(\mu) \leq r}} \norm{ \int_P q_\lambda(P)^T \otimes q_\mu(P) \cdot \dP }_\infty.
\end{equation*}
We can reinterpret this in terms of weak Schur sampling. The quantity $\dim(\calP_\lambda) \cdot \dim(\calQ_\lambda^r)/r^n$ is equal to the probability of obtaining the diagram $\lambda$ upon weak Schur sampling $\rho^{\otimes n}$. Consequently, the upper bound above can be rewritten as
\begin{equation} \label{eq:tech_overview_2}
\E_{\blambda} \bigg[ \frac{s_{\blambda}(1^d)}{s_{\blambda}(1^r)} \cdot \max_{ \substack{\mu \vdash n+k \\ \ell(\mu) \leq r}} \norm{ \int_P q_{\blambda}(P)^T \otimes q_\mu(P) \cdot \dP }_\infty \bigg].
\end{equation}
The quantity in the brackets tells us how large the overlap could possibly be, having obtained $\blambda$. Intuitively, the maximum then reflects that the optimal Choi state, conditioned on having measured $\blambda$, will align with the largest eigenvector of 
\begin{equation*}\bigoplus_{ \mu \in \Par(n+k,d)} I_{\calP_{\blambda}} \otimes I_{\calP_{\mu}} \otimes \int_P q^d_{\blambda}(P)^T \otimes q^d_\mu(P) \cdot \dP.
\end{equation*}

We proceed by proving the following lemma, which is our main technical result. 

\begin{lemma}[The partially-transposed integral bound] \label{lem:transposed_integral_tech_overview}
    Let $\lambda \in \Par(n,d)$ and $\mu \in \Par(n+k,d)$, with $\ell(\lambda) \leq r$ and $\ell(\mu) \leq r$. Then we have
    \begin{equation*}
        \int q_{\lambda}^d(P)^T \otimes q_{\mu}^d(P) \cdot \dP \preceq \frac{s_{\lambda \cup \mu}(1^r)}{s_{\lambda \cup \mu}(1^d)} \cdot I_{\calQ^d_\lambda} \otimes I_{\calQ^d_\mu}. 
    \end{equation*}
\end{lemma}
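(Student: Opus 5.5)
The plan is to reduce the operator inequality to a statement about dimensions of irreducible components, using the $U(d)$-symmetry of the integrand.

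\textbf{Step 1: rewrite the integrand as a twirl.} Write $P = U P_0 U^\dagger$ with $P_0 = I_r \oplus 0_{d-r}$ and $U$ Haar-distributed on $U(d)$. Since transposes are taken in the Schur (Gelfand--Tsetlin) basis, we have $q_\lambda(P)^T = q_\lambda(P^T) = \overline{q_\lambda(U)}\, q_\lambda(P_0)\, q_\lambda(U)^T$. Setting $X := q_\lambda(P_0)\otimes q_\mu(P_0)$, this gives
\begin{equation*}
\int q_\lambda(P)^T\otimes q_\mu(P)\,\dP = \int \big(\overline{q_\lambda}(U)\otimes q_\mu(U)\big)\, X\, \big(\overline{q_\lambda}(U)\otimes q_\mu(U)\big)^\dagger\,\dU .
\end{equation*}
So the left-hand side of the lemma is the twirl of the projector $X$ under the representation $\overline{\calQ^d_\lambda}\otimes\calQ^d_\mu$.

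\textbf{Step 2: decompose with Schur's lemma.} Decompose $\overline{\calQ^d_\lambda}\otimes\calQ^d_\mu \cong \bigoplus_\nu \calQ^d_\nu\otimes\C^{m_\nu}$ into (mixed, rational) irreducibles $\nu$ of $U(d)$. By Schur's lemma the twirl equals $\bigoplus_\nu I_{\calQ^d_\nu}\otimes M_\nu/\dim\calQ^d_\nu$, where $M_\nu$ is the partial trace of $X$ over the $\calQ^d_\nu$ factor. Hence the operator norm is at most
\begin{equation*}
\max_\nu \frac{\norm{M_\nu}_\infty}{\dim \calQ^d_\nu}.
\end{equation*}
To evaluate this, branch along $U(d)\supseteq U(r)\times U(d-r)$. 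The projector $q_\lambda(P_0)$ is exactly the projector onto the summand $\calQ^r_\lambda\otimes(\text{trivial of }U(d-r))$, since $\ell(\lambda)\le r$; the same holds for $q_\mu(P_0)$. Therefore $X$ projects onto the $U(d-r)$-invariant, "$U(r)$-only" part. It follows that $\norm{M_\nu}_\infty$ is bounded by the dimension of the $U(d-r)$-trivial part of $\calQ^d_\nu$. Concretely, this is the number of Gelfand--Tsetlin patterns of $\nu$ whose entries on levels $r,\dots,d$ do not vary in the trivial directions, which is a $U(r)$-dimension.

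\textbf{Step 3 (main obstacle): compare each $\nu$ with $\lambda\cup\mu$.} I then need to show that, for each $\nu$ occurring, the ratio of this restricted dimension to $\dim\calQ^d_\nu$ is at most $s_{\lambda\cup\mu}(1^r)/s_{\lambda\cup\mu}(1^d)$. The extremal case should be the "most spread out" component. For that component, the restriction to $U(r)$ of the relevant invariant vectors looks like an irrep of shape $\lambda\cup\mu$ (parts of $\mu$ and $\lambda$ interleaved), and the ratio is exactly the ratio of Schur polynomial dimensions.

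To prove the comparison uniformly I would induct, peeling off one dimension at a time, $d\to d-1$, from $d$ down to $r$. At each step I would use the branching rule (interlacing) to show that the one-step ratio factors as
\begin{equation*}
\frac{s_{\lambda\cup\mu}(1^{d-1})}{s_{\lambda\cup\mu}(1^{d})}
\end{equation*}
or something smaller. This would be done by comparing the hook-content (Weyl dimension) formula factors $\prod_{i<j}(\nu_i-\nu_j+j-i)/(j-i)$ term by term. Telescoping over $d, d-1, \dots, r+1$ then gives the claimed bound.

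If the direct comparison over all $\nu$ is messy, my fallback is a recursion. I would first prove an untransposed analogue of the bound, namely an estimate on $\int q_\lambda(P)\otimes q_\mu(P)\,\dP$ in the spirit of \Cref{prop:untransposed_bound}. I would then handle the conjugate factor by dualizing $\overline{\calQ^d_\lambda}\cong\calQ^d_\lambda{}^{*}$, i.e.\ tensoring with a power of $\det$ to convert to polynomial irreps. This step, controlling the multiplicity spaces $M_\nu$ together with the monotonicity of the dimension ratios, is where I expect the real work to be.
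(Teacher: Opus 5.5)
Your Step 1 is fine: in the Gelfand--Tsetlin basis $q_\lambda(P)^T=q_\lambda(P^T)$, so the left-hand side is the twirl of $X$ under $\overline{\calQ_\lambda}\otimes\calQ_\mu$. The Schur's-lemma block form in Step 2 is also fine.

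\textbf{Gap in Step 2.} The estimate $\|M_\nu\|_\infty\le\dim\calQ_\nu^{U(d-r)}$ uses only that $\mathrm{range}(X)$ consists of $U(d-r)$-fixed vectors. But the $U(d-r)$-fixed subspace of $\overline{\calQ_\lambda}\otimes\calQ_\mu$ is much larger than $\mathrm{range}(X)\cong\overline{\calQ^r_\lambda}\otimes\calQ^r_\mu$. The resulting ratio also depends on $\nu$ alone, not on $\lambda\cup\mu$. So the comparison you need in Step 3 is false.

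Concretely, take $\lambda=(1)$, $\mu=(2)$, $1\le r<d$. Then $\overline{\calQ_{(1)}}\otimes\calQ_{(2)}\cong\calQ_{(1,0,\dots,0)}\oplus\calQ_{(2,0,\dots,0,-1)}$. For $\nu=(1,0,\dots,0)$ the $U(d-r)$-fixed part of $\C^d$ is $r$-dimensional, so your bound gives $r/d$. The target is $s_{(2)}(1^r)/s_{(2)}(1^d)=r(r+1)/(d(d+1))<r/d$.

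Computing with the intertwiner $e_j\mapsto\sum_i\bar e_i\otimes\Pi_{\mathrm{sym}}(e_i\otimes e_j)$ gives $M_\nu/\dim\calQ_\nu=r(r+1)/(d(d+1))$ exactly. So the lemma is tight here and the loss is entirely in your estimate. Note also that the extremal block is the smallest one, $\nu=(1,0,\dots,0)$. The ``most spread out'' block $(2,0,\dots,0,-1)$ has a strictly smaller eigenvalue, so your Step 3 heuristic points the wrong way. To make a Schur's-lemma route work you would have to compute the multiplicity-space operators $M_\nu$ of the mixed tensor product exactly. That is precisely the mixed Schur--Weyl analysis the paper sidesteps.

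\textbf{Gap in the fallback.} A PSD bound on $\int q_\lambda(P)\otimes q_\mu(P)\,\dP$ does not transfer to the partially transposed integral, because $X\preceq Y$ does not imply $X^\Gamma\preceq Y^\Gamma$. Twisting by $\det^N$ does not help. It relabels $\overline{\calQ_\lambda}$ as $\calQ_{\lambda^c}$, but it sends $\overline{q_\lambda(P_0)}$ to the projector onto weights whose last $d-r$ entries equal $N$, not to $q_{\lambda^c}(P_0)$. So you do not land back on an untransposed integral of the same form.

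\textbf{Missing idea.} Prove the stronger statement that
$\Delta_{\lambda\mu}=\frac{s_{\lambda\cup\mu}(1^r)}{s_{\lambda\cup\mu}(1^d)}I_\lambda\otimes I_\mu-\int q_\lambda(P)\otimes q_\mu(P)\,\dP$
lies in $\Sep_+$, a cone closed under partial transposition; this is \Cref{prop:untransposed_bound}. The paper proves it by induction $(d,r)\to(d-1,r-1)$, assuming $\lambda_1\ge\mu_1$:
\begin{itemize}
\item Both $I_\lambda\otimes I_\mu$ and the integral are written as sums over $\mu'\preceq\mu$ of lifts of $I_{\lambda^-}\otimes I_{\mu'}$ and of the $(d-1,r-1)$ integral.
\item A lift is the branching embedding $\calQ^{d-1}_{\lambda^-}\otimes\calQ^{d-1}_{\mu'}\hookrightarrow\calQ^d_\lambda\otimes\calQ^d_\mu$ followed by a $U(d)$-twirl, which preserves $\Sep_+$.
\item The hook-content identity for $\lambda^-$ matches the prefactors.
\item What remains is lifts of $\Delta_{\lambda^-\mu'}$ plus lifted identities, whose coefficients are nonnegative because $\mu^-\subseteq\mu'$.
\end{itemize}
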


Here, $\lambda \cup \mu$ is the \emph{union} of the Young diagrams $\lambda$ and $\mu$, viewed as sets of boxes. Alternatively, it is the diagram with $(\lambda \cup \mu)_i = \max( \lambda_i, \mu_i)$. Also, $s_\lambda$ is the Schur polynomial corresponding to $\lambda$. We will circle back to intuition for this bound, and how we prove it in the next part of the technical overview. For now, we will briefly describe how we use this to complete our lower bound proof. From the partially-transposed integral bound, we get
\begin{equation*}
    \eqref{eq:tech_overview_2} \leq \E_{\blambda} \bigg[ \frac{s_{\blambda}(1^d)}{s_{\blambda}(1^r)} \cdot \max_{ \substack{\mu \vdash n+k \\ \ell(\mu) \leq r}} \frac{s_{\blambda \cup \mu}(1^r)}{s_{\blambda \cup \mu}(1^d)} \bigg].
\end{equation*}
We then show that the $\mu$ which achieves the maximal value is $\mu = \blambda + k\cdot e_1$, i.e.\ the diagram obtained by starting with $\blambda$, and adding $k$ boxes to its first row. We thus obtain the inequality
\begin{equation} \label{eq:tech_overview_3}
    1 - \epsilon \leq \E_{\blambda} \bigg[ \frac{s_{\blambda}(1^d)}{s_{\blambda}(1^r)} \cdot \frac{s_{\blambda + k\cdot e_1}(1^r)}{s_{\blambda + k \cdot e_1}(1^d)} \bigg].
\end{equation}
Analyzing this inequality allows us to conclude that $n \geq krd/8\eps$, at least for certain parameters: $d \geq 2$, $r \leq d/2$ and $\eps \leq 1/16$. We note that $d \geq 2$ is necessary, as are $r < d$ and $\eps < 1$, to avoid cases where cloning is trivial, but we have not attempted to optimize our parameter restrictions, as these suffice for our purposes. 

\paragraph{The partially-transposed integral bound.} We begin by giving some intuition for the partially-transposed integral bound. First consider the case when $\lambda \vdash n$ is a subset of $\mu \vdash n+k$. In this case, the bound is easy to show. This is because $q^d_\lambda(P)$ is actually itself a projector on $\calQ^d_\lambda$, and therefore upper bounded in the PSD order by $I_{\calQ^d_\lambda}$. So
\begin{equation} \label{eq:tech_overview_4}
    \int q_{\lambda}^d(P)^T \otimes q_{\mu}^d(P) \cdot \dP \preceq \int I_{\calQ^d_\lambda} \otimes q_{\mu}^d(P) \cdot \dP = \frac{s_\mu(1^r)}{s_\mu(1^d)} \cdot I_{\calQ^d_\lambda} \otimes I_{\calQ^d_\mu} = \frac{s_{\lambda \cup \mu}(1^r)}{s_{\lambda \cup \mu}(1^d)} \cdot I_{\calQ^d_\lambda} \otimes I_{\calQ^d_\mu},
\end{equation}
using Schur's lemma in the second-last step, and that $\mu = \lambda \cup \mu$, since $\lambda \subseteq \mu$, in the last. So the partially-transposed integral bound holds in the special case where $\lambda \subseteq \mu$. 

It is, however, natural to expect that the output diagram $\mu$ really does contain the input diagram $\lambda$, for any good cloner. Indeed, since in cloning our input is of the form $\rho^{\otimes n}$, we can always weak Schur sample without loss of generality as our first step, obtaining some $\blambda \vdash n$. Cloning reduces then, very informally, to a problem of specifying how we want to map from the irrep $\blambda$ to all the possible irreps $\mu \vdash n+k$, in such a way as to attain a state close to $\rho^{\otimes (n+k)}$, averaged over the randomness of $\blambda$. But it seems reasonable that, no matter what strategy we decide on, we would always want to preserve the information encoded in $\blambda$ as much as possible, and that we should ``build $\mu$ from $\blambda$'' without discarding any part of the diagram.


Nevertheless, we still must show the bound for every $\mu$, and we now describe our proof technique. Our route is somewhat indirect, and avoids the mixed Schur--Weyl theory \cite{GBO24,Ngu24,Gri25} one would expect to arise from a direct proof, since the transposed factor transforms under the dual representation. Instead of reasoning about the bound directly, we begin by considering the partial-transpose of the bound, to obtain a candidate ``un-transposed integral bound'':
\begin{equation*}
    \int q^d_\lambda(P) \otimes q^d_\mu(P) \cdot \dP \preceq \frac{s_{\lambda \cup \mu}(1^r)}{s_{\lambda \cup \mu}(1^d)} \cdot I_{\calQ^d_\lambda} \otimes I_{\calQ^d_\mu}.
\end{equation*}
It turns out that this bound is relatively easy to show; it follows directly from the Littlewood--Richardson rule, which describes how $\calQ^d_\lambda \otimes \calQ^d_\mu$ decomposes into irreps of $U(d)$. However, this is not useful to us, since $X \preceq Y$ does not imply $X^\Gamma \preceq Y^\Gamma$.\footnote{Here, $\Gamma$ denotes the partial transpose of the first factor.} 

We work around this by proving a stronger result: that the quantity 
\begin{equation*}
    \Delta_{\lambda \mu} \coloneq \frac{s_{\lambda \cup \mu}(1^r)}{s_{\lambda \cup \mu}(1^d)} \cdot I_{\calQ^d_\lambda} \otimes I_{\calQ^d_\mu} - \int q^d_\lambda(P) \otimes q^d_\mu(P) \cdot \dP
\end{equation*}
is not only PSD, but also separable. That is, $\Delta_{\lambda \mu}$ lies in the set $\Sep_+(\calQ^d_\lambda : \calQ^d_\mu)$ consisting of nonnegative combinations of matrices $A \otimes B$, where $A$ acts on $\calQ^d_\lambda$, $B$ acts on $\calQ^d_\mu$, and both $A$ and $B$ are PSD. From here, the partially-transposed integral bound, equivalent to the statement $\Delta_{\lambda \mu}^\Gamma \succeq 0$, really does follow directly. This is because $\Sep_+(\calQ^d_\lambda : \calQ^d_\mu)^\Gamma \subseteq \Sep_+(\calQ^d_\lambda : \calQ^d_\mu)$, since if $A$ is PSD, then so too is $A^T$. 

We prove that $\Delta_{\lambda \mu} \in \Sep_+(\calQ^d_\lambda : \calQ^d_\mu)$ by an induction which simultaneously reduces both $r$ and $d$. The base case is when $r=1$ (and $d \geq r$ is arbitrary). When $r = 1$, $\lambda$ and $\mu$ are one-row diagrams, or empty. Then, one of $\lambda$ or $\mu$ necessarily is a subset of the other, and a simple argument similar to \Cref{eq:tech_overview_4} works. Arguing the inductive step is more involved, and is the most technical part of our argument. The branching rule for $U(d) \big \downarrow_{U(d-1)}$ allows the spaces $\calQ^{d-1}_{\lambda'}$ and $\calQ^{d-1}_{\mu'}$ to be embedded into $\calQ^d_\lambda$ and $\calQ^d_\mu$, for $\lambda' \preceq \lambda$ and $\mu' \preceq \mu$. By embedding an operator on $\calQ^{d-1}_{\lambda'} \otimes \calQ^{d-1}_{\mu'}$ into $\calQ^d_\lambda \otimes \calQ^d_\mu$, and averaging over all conjugations by unitaries, we obtain a ``lifting'' operation which maps $\Sep_+(\calQ^{d-1}_{\lambda'} : \calQ^{d-1}_{\mu'})$ into $\Sep_+(\calQ^d_\lambda : \calQ^d_\mu)$, and whose outputs are unitarily invariant. We then use this lift to derive a recursion expressing $\Delta_{\lambda \mu}$ as a nonnegative linear combination of lifts of $\Delta_{\lambda' \mu'}$, and additional manifestly PSD and separable terms.

\paragraph{A further application: mixed state transposition.} The problem of pure state cloning is closely related to the problem of pure state \emph{transposition} ~\cite{buzek99,Buscemi_2003}. In this problem, one is given $n$ copies of a pure state $\ketbra{u} \in \C^{d \times d}$, and asked to prepare a state which has good fidelity with $\ketbra{u}^{T,\otimes k}$.\footnote{Here, the transpose is to be taken in the computational basis, instead of the Schur basis.} 

A channel that transposes \emph{exactly} is forbidden in quantum mechanics, since it violates complete positivity. Exact cloning is also forbidden, but for a different reason: it violates linearity. Despite this difference, in \cite{BGSQ26}, the authors showed, perhaps surprisingly, that optimal pure state cloning (i.e.\ Werner's cloner) and the optimal pure state transposer are in fact complementary channels, and achieve the same fidelity for all inputs. 

On account of this connection, it seems reasonable that our techniques might also be applicable for analyzing mixed state transposition. In particular, we can define a natural mixed state transposition map by composing the random purification channel with the optimal pure state transposer of \cite{BGSQ26}, and then tracing out the purification registers. We call the resulting map the \emph{PTT transposer}, for its sequence of actions: \emph{purify--transpose--trace}. An almost identical analysis to the cloning upper bound shows that this mixed state transposer can use $n = O(krd/\eps)$ to produce a state with fidelity $1-\eps$ to $\rho^{T,\otimes k}$. 

Again, by considering projector states, we are able to show a matching lower bound of $\Omega(krd/\eps)$. This gives us the following result.

\begin{theorem}[The PTT transposer is sample-optimal] \label{thm:main_result_transposition_intro}
    Suppose $\calT: \states(\C^d)^{\otimes n} \to \states(\C^d)^{\otimes k}$ is a transposition channel such that
    \begin{equation*}
         \Fid \big(  \calT(\rho^{\otimes n}) , \rho^{T,\otimes k}\big) \geq 1-\eps,
    \end{equation*}
    for all states $\rho$ of rank at most $r$. Then we must have $n = \Omega(krd/\eps)$, for $d \geq 2$, and $\epsilon \leq \epsilon_0$, where $\epsilon_0$ is a universal constant. Moreover, the PTT transposer achieves this guarantee with $n = O(krd/\eps)$ copies. Thus, the sample complexity of producing $k$ copies of the transpose of any unknown rank-$r$ input state $\rho^{\otimes n}$ to fidelity $1-\eps$ is $\Theta(krd/\eps)$. 
\end{theorem}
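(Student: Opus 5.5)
Both directions of \Cref{thm:main_result_transposition_intro} follow the template used for cloning.

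\textbf{Upper bound.} Compose the random purification channel $\Purify^{(d,r,n)}$ with the optimal pure state transposer of \cite{BGSQ26} acting on $\C^D$, $D=dr$, and then trace out the auxiliary registers. Since that transposer is complementary to Werner's cloner and attains the same fidelity $D[n]/D[n+k]\ge 1-kD/n$ on every pure input, each fixed purification $\ket{\brho}$ yields fidelity $\ge 1-\eps$ with $\ketbra{\brho}^{T,\otimes k}$ once $n=O(krd/\eps)$. Tracing out $\reg{B}$ sends $\ketbra{\brho}^{T}$ to $\rho^T$, because the transpose is a product over the computational basis of $\C^d\otimes\C^r$, so partial trace commutes with it. Data processing then gives fidelity $\ge 1-\eps$ for each purification, and concavity of the square-root fidelity handles the mixture over $\brho$.

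\textbf{Lower bound.} Restrict to projector states $\rho=P/r$ with rank $\min(r,d/2)$. The overlap inequality $\tr(\sigma\Pi)\ge \Fid(\sigma,\Pi/R)$ with $\Pi=P^{T,\otimes k}$ shows that $1-\eps\le \int \tr(\calT((P/r)^{\otimes n})P^{T,\otimes k})\,\dP$. Passing to the Choi state, this becomes $r^{-n}\tr\big(J(\calT)\int P^{T,\otimes n}\otimes P^{T,\otimes k}\,\dP\big)$. Now transpose the whole operator; the Choi state of the transposed channel is still PSD, and the transposed-output channel is still trace preserving on inputs. The relevant integral is then the \emph{un-transposed} one, $\int q_\lambda(P)\otimes q_\mu(P)\,\dP$, with $\lambda\vdash n$ and $\mu\vdash k$. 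For this we need only the PSD bound $\preceq \frac{s_{\lambda\cup\mu}(1^r)}{s_{\lambda\cup\mu}(1^d)} I\otimes I$, which the overview notes follows directly from Littlewood--Richardson; no separability argument is needed. This is the structural advantage over cloning. Repeating the trace-preserving/weak Schur sampling reduction gives
\begin{equation*}
1-\eps\le \E_{\blambda}\Big[\frac{s_{\blambda}(1^d)}{s_{\blambda}(1^r)}\max_{\mu\vdash k,\ \ell(\mu)\le r}\frac{s_{\blambda\cup\mu}(1^r)}{s_{\blambda\cup\mu}(1^d)}\Big].
\end{equation*}

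\textbf{Main obstacle.} The hard step is maximizing over $\mu$ and then analyzing the resulting expectation. Since $|\mu|=k$ may be much smaller than $|\blambda|$, $\mu$ can sit inside $\blambda$, and then the ratio equals $1$. That would make the bound useless, which is suspicious. So the reduction above must be wrong in detail: the transpose acts on the output only, and the correct object is genuinely the partially transposed integral $\int q_\lambda(P)^T\otimes q_\mu(P)\,\dP$ read as input-dual paired with output $P^T$. I would therefore instead keep the input transposed and the output un-transposed, using $\int P^{T,\otimes n}\otimes P^{T,\otimes k}=\big(\int P^{\otimes n}\otimes P^{\otimes k}\big)^{T}$ on the input factor as a partial transpose. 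I would then invoke \Cref{lem:transposed_integral_tech_overview} in the form needed (with $\mu\vdash k$; its proof is insensitive to $|\mu|$), giving the same expression as above with the $\lambda\cup\mu$ structure but with the true duality.

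If the duality instead produces a bound in terms of $\lambda$ and a dual diagram of $\mu$, I would expect the maximizer to be $\mu=k e_1$ adjoined to the first column side of $\blambda$: adding $k$ boxes in the dual direction forces the dimension ratio $s(1^r)/s(1^d)$ to drop by roughly $(1-\Theta(d/n))^{k}$ per copy. I would then estimate $\E_{\blambda}$ using concentration of weak Schur sampling for the maximally mixed state on $r$ dimensions, $\blambda_i\approx n/r$, exactly as in the cloning analysis. The aim is a bound of the form $1-\eps\le 1-\Omega(krd/n)$, hence $n=\Omega(krd/\eps)$. Identifying the correct maximizing $\mu$ and controlling the expectation is where I expect the real work.
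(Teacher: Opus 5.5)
\textbf{Upper bound and the first reduction are correct.} Your upper bound is the paper's argument. The first half of your lower bound is also correct and matches the paper: overlap reduction, averaging over $P$, the Choi state, and then a full transpose. Both tensor factors really are transposed (the input by the Choi isomorphism, the output by the target $P^{T,\otimes k}$). So only $\norm{\int q^d_\lambda(P)\otimes q^d_\mu(P)\,\dP}_\infty$ with $\lambda\vdash n$, $\mu\vdash k$ matters.

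\textbf{The gap.} The $\lambda\cup\mu$ bound is true, but, as you noticed, it is trivial when $\mu\subseteq\lambda$. The fault lies in that bound, not in the reduction. The missing idea is the size constraint in the Littlewood--Richardson rule: every irrep $q^d_\tau$ occurring in $q^d_\lambda\otimes q^d_\mu$ has $\tau\supseteq\lambda$ and $|\tau|=|\lambda|+|\mu|=n+k$. Since $\int q^d_\tau(P)\,\dP=\frac{s_\tau(1^r)}{s_\tau(1^d)}\, I_\tau$ by Schur's lemma, this gives the $\mu$-independent bound
\[
\int q^d_\lambda(P)\otimes q^d_\mu(P)\,\dP\preceq\Big(\max_{\tau\supseteq\lambda,\;|\tau\setminus\lambda|=k}\frac{s_\tau(1^r)}{s_\tau(1^d)}\Big)\cdot I_\lambda\otimes I_\mu .
\]
The hook-content argument of \Cref{lem:best_mu_is_lambda+ke1} places the maximum at $\tau=\lambda+k\cdot e_1$, with the $k$ new boxes in the first row where contents are largest. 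Your weak Schur sampling expression then becomes exactly the cloning inequality $1-\eps\le\E_{\blambda}\big[\prod_{j=1}^k\frac{r+\blambda_1+j-1}{d+\blambda_1+j-1}\big]$. \Cref{lem:lower_bound_from_WSS} then gives $n\ge krd/(8\eps)$ for $d\ge 2$, $r\le d/2$, $\eps\le 1/16$. No separability argument is needed.

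\textbf{Why your proposed repair cannot work.} Assume $k\le n$. The partially transposed integral $\int q^d_\lambda(P)^T\otimes q^d_\mu(P)\,\dP$ with $\mu\vdash k$ is exactly what the Choi-state argument produces for the task of outputting $P^{\otimes k}$ (untransposed) from $(P/r)^{\otimes n}$. Discarding $n-k$ copies achieves overlap $1$ for that task, so the resulting upper bound is necessarily at least $1$ and says nothing. Indeed, \Cref{lem:transposed_integral} with $|\mu|=k$ just returns the $\lambda\cup\mu$ ratio you had already found useless. Dropping the output transpose also changes which problem you are bounding.

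\textbf{The closing heuristic is also off.} Placing the new boxes toward the first column lowers $\frac{r+c(\Box)}{d+c(\Box)}$, so that choice is the minimizer, not the maximizer. With $\blambda_1\approx n/r$, each first-row box costs a factor $1-\frac{d-r}{d+\blambda_1+j-1}=1-\Theta(rd/n)$, not $1-\Theta(d/n)$. A per-box loss of $\Theta(d/n)$ would only give $\Omega(kd/\eps)$.
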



Our lower bound follows the same overall strategy of the cloning lower bound. In particular, we show that rank-$r$ projector states form a family of hard instances. However, there is one major simplification in our argument for transposition: since we aim to produce a state with high fidelity to the \emph{transposed} projector, the analogue of \Cref{eq:tech_overview_2} obtained in this case has a transpose on each factor, and is instead:
\begin{equation*}
\E_{\blambda} \bigg[ \frac{s_{\blambda}(1^d)}{s_{\blambda}(1^r)} \cdot \max_{ \substack{\mu \vdash k \\ \ell(\mu) \leq r}} \norm{ \int_P q_{\blambda}(P)^T \otimes q_\mu(P)^T \cdot \dP }_\infty \bigg] = \E_{\blambda} \bigg[ \frac{s_{\blambda}(1^d)}{s_{\blambda}(1^r)} \cdot \max_{ \substack{\mu \vdash k \\ \ell(\mu) \leq r}} \norm{ \int_P q_{\blambda}(P) \otimes q_\mu(P) \cdot \dP }_\infty \bigg].
\end{equation*}
Analyzing the \emph{un-transposed} integral is much simpler than the partially-transposed integral, and we can give an easier argument that nevertheless leads to the same inequality, \Cref{eq:tech_overview_3}, from before. As for cloning, \Cref{eq:tech_overview_3}  implies our desired lower bound. 

\subsection{Open problems}

Our work leaves open many problems. Here we describe some interesting directions for future work.  

\paragraph{Exact optimality.} 

In this work, we have shown the PCT cloner uses a number of copies $n$ that scales optimally in the parameters $k, d, r, \eps$, as is customary for mixed state algorithms. 
We do not, however, determine the exact optimal fidelity, or find a channel that attains it, as Werner did for the pure state case.
For the mixed state case, such questions of exact optimality are somewhat less canonical than in the pure state case. For pure states, fidelity is essentially the only figure of merit; for mixed states, one has to pick a figure of merit, and thus the exactly optimal protocol may depend on this choice. It is not entirely clear how natural such algorithms might be, since it is also not clear how natural any given figure of merit really is.

Nevertheless, there is an intriguing analogy with state tomography here. Recent work \cite{GMPW26a,GMPW26b} shows that, in a certain information-theoretic sense, Keyl's algorithm is the optimal algorithm for state estimation, rather than the algorithm due to \cite{PSTW25} which applies the random purification channel, and then uses the optimal pure state learning algorithm to learn the purification. The PCT cloner is the direct analogue of the latter construction, but there is no known analogue for the former. Could there be a ``Keyl's algorithm for cloning'': a natural, intrinsically mixed-state cloner which improves upon the PCT cloner at the level of exact fidelity? If so, perhaps this cloner is directly related to Keyl's algorithm, just as optimal pure state cloning and learning are related \cite{Chi11}. Similarly, such a cloner might be related to an exact optimal algorithm for mixed state transposition; in the pure-state case, the optimal algorithms are complementary channels \cite{BGSQ26}.

\paragraph{Information-theoretic lower bounds.} In this work, we give a representation theoretic proof via studying projector cloning. In many ways, our argument is the cloning analogue of the lower bound for learning in fidelity given in \cite{SSW25}. This tomography lower bound was subsequently recovered by Chen, Zhang and Yu using information-theoretic arguments\footnote{Here, we mean ``information-theoretic'' in the sense of using tools from information theory, rather than in the sense of an absence of computational restrictions.} \cite{CZY26}.\footnote{In fact, Chen, Zhang and Yu obtain a more general lower bound on learning channels in diamond distance. When the input dimension is taken to be $1$, this reduces to a lower bound on learning states in trace distance, which also implies a fidelity lower bound.} Such arguments have previously also been used to derive quantitative limitations on cloning \cite{LW17}. Can information-theoretic methods be used to show the optimal $\Omega(krd/\eps)$ cloning lower bound as well?

\paragraph{Broadcasting.} A version of cloning commonly considered in the literature is \emph{broadcasting}, in which we judge the quality of the output not by how close it is to $\rho^{\otimes (n+k)}$, but by how close the one-register marginals are to $\rho$ \cite{BCF+96,DMP05,Chiribella_2006,DF07}. After symmetrizing, this amounts to minimizing $n$ subject to
\begin{equation*}
    \min_{\rho} \Big[ \Fid \big(  \calC(\rho^{\otimes n})_{\reg{1}} , \rho\big)\Big] \geq 1-\eps,
\end{equation*}
In \cite{KW99}, Keyl and Werner showed that Werner's pure state cloner is not only optimal for cloning, but also for pure state broadcasting. We conjecture that the PCT cloner is likewise sample-optimal for mixed state broadcasting.

\paragraph{When is cloning as hard as learning?} 

Our results show that, for arbitrary rank-$r$ inputs, cloning even a single additional copy has the same $\Theta(rd/\eps)$ scaling as learning the input state in fidelity \cite{PSW25,Yue23,SSW25}. The same phenomenon was recently established for stabilizer states, a structured sub-class of pure states, in \cite{BCM26}. However, it is not always true that \emph{any} family of states requires as many samples to clone as to learn. For example, classical probability distributions can be embedded as diagonal mixed states, and for such states cloning requires fewer samples than learning \cite{AGSV24,AGH+24}. It would be interesting to understand when the two complexities are equal, or to give a natural family of genuinely ``quantum'' states for which cloning is asymptotically easier than learning.\footnote{Two families of structured states for which we strongly expect cloning to be as hard as learning are the bosonic and fermionic Gaussian states. On account of similarities between the tomography of Gaussian states and the tomography of general states, we conjecture that $\Theta(m^2/\eps)$ copies are required to clone $m$-mode Gaussian states. We expect that a Gaussian version of the random purification channel \cite{CFF+26} can be composed with a Gaussian version of the Werner cloner to yield an optimal upper bound, and that considering pure Gaussian states should suffice to prove a lower bound, but leave this question to the v2.}



\paragraph{Channel cloning.} 
For many questions about quantum states, there is an analogous question concerning quantum channels. With respect to cloning, we can ask: given access to $n$ uses of an unknown quantum channel $\Phi$, how large does $n$ have to be in order to approximate $n+k$ uses of the channel? Here, it is reasonable to measure success in diamond distance. That is, one seeks a superchannel $\calC$ such that
\begin{equation*}
    \sup_{\Phi} \norm{\calC(\Phi^{\otimes n})-  \Phi^{\otimes (n+k)} }_{\diamond}  \leq \eps.
\end{equation*}
The problem of cloning channels has been considered in the unitary special case \cite{CDP08,D_r_2015,CYH15} and the general channel setting \cite{SGKS25}. However, even for unitary channels, this problem is not entirely understood, though the $1 \to 2$ case was solved in \cite{CDP08} and the asymptotic conversion rate has been established (albeit with respect to the average-case error). This question is particularly timely in light of a recent flurry towards settling the optimal sample complexity of channel tomography \cite{MB25,CYZ25,GMZ+25,YNU+25,CGO+26}. Is channel cloning as hard as channel learning, as we find in the quantum state case? Does channel cloning exhibit a Heisenberg-to-classical phase transition in the sample complexity, as suggested by asymptotic results \cite{SGKS25}?


\subsection{Concurrent and independent work}
In the final stages of preparing this manuscript, we became aware of the concurrent and independent work by \cite{JSO26}. In this work, the authors also proved a sample complexity of $\Theta(krd/\eps)$, using the purify--clone--trace construction for their upper bound, and analyzing projector states for their lower bound. While our upper-bound arguments appear to be essentially the same, our lower-bound proofs appear to take different routes. We additionally prove the optimal sample complexity for mixed state transposition. On the other hand, \cite{JSO26} establish a reduction from $k=1$ cloning to tomography, showing that cloning is no harder than tomography.

%% file: prelims.tex
Some general notation: Throughout, we use \textbf{boldface} for random variables, and \textsf{sans serif} for registers (i.e.\ subsystems of larger quantum systems). For example, if $\rho = \rho_{\reg{A}_1 \dots \reg{A}_n}$ is a state on $n$ registers, where the $i$-th register is labeled $\reg{A}_i$, we write $\rho_{\reg{A}_1} = \tr_{\reg{A}_2\dots \reg{A}_n}(\rho)$ for the reduced state on the first register. We write $[d]$ as shorthand for the set of integers $\{1, 2, \dots, d\}$.

Some quantum information notation and conventions: First, for a Hilbert space $\calH$, we write $\states(\calH)$ for the set of quantum states on that space, i.e.\ the set of positive semi-definite Hermitian operators with unit trace. Next, we take fidelity to be the quantity $\Fid(\rho, \sigma) = \norm{ \sqrt{\rho} \sqrt{\sigma} }_1^2$. In particular, for pure state inputs we have $\Fid(u, v ) = | \braket{u}{v} |^2$. Lastly, in the context of the Choi state, we will take the transpose in the Schur basis, which is described in \Cref{sec:Schur-Weyl_duality}.

\subsection{The symmetric subspace} \label{sec:symmetric_subspace}

In this section, we define the symmetric subspace, and collect a few of its basic properties. For much more, see e.g.\ \cite{Har13} or \cite{Mel24}. 

Denote the symmetric group on $n$ objects by $S_n$. The symmetric group has a natural action $\calP^{(d,n)}$ on the $n$-copy Hilbert space $(\C^d)^{\otimes n}$ by permuting the tensor factors. Formally, $\calP^{(d,n)}$ is defined by its action on standard basis elements as
\begin{equation*}
    \calP^{(d,n)}(\pi) \cdot \ket{i_1, \dots, i_n} \coloneq \ket*{i_{\pi^{-1}(1)}, \dots, i_{\pi^{-1}(n)}}
\end{equation*}
for all $(i_1, \dots, i_n) \in [d]^n$. We will often drop the superscript $(d,n)$ when these parameters are clear from context.

The \emph{symmetric subspace}, denoted $\Sym{d}{n}$, is the subspace of $(\C^d)^{\otimes n}$ which is invariant under all permutations. That is, 
\begin{equation*}
    \Sym{d}{n} \coloneq \mathrm{span}\big\{ \ket{\Psi} \in (\C^d)^{\otimes n} \, : \, \calP(\pi) \cdot \ket{\Psi} = \ket{\Psi}, \, \forall \pi \in S_n\big\}. 
\end{equation*}
Equivalently, it is the subspace spanned by multi-copy pure states:
\begin{equation*}
    \Sym{d}{n} \coloneq \mathrm{span}\big\{ \ket{\psi}^{\otimes n} \in (\C^d)^{\otimes n} \big\}.
\end{equation*}
We will denote the projector onto the symmetric subspace by $\Pisym{d}{n}$, and the dimension of the subspace by $d[n]$. We have
\begin{equation*}
    d[n] \coloneq \dim\big( \Sym{d}{n} \big) = \tr(\Pisym{d}{n}) = \binom{n+d-1}{n}. 
\end{equation*}
We also have the following simple, but useful, bound.
\begin{lemma} \label{lem:ratio_of_symmetric_subspace_dimensions}
    Let $n,d \geq 1$ and $k \geq 0$. The following inequality holds:
    \begin{equation*}
        \frac{d[n]}{d[n+k]} \geq 1 - \frac{kd}{n}. 
    \end{equation*}
\end{lemma}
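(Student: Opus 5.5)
We will write the ratio explicitly as a telescoping product and bound it by a union-bound style inequality. Since $d[m] = \binom{m+d-1}{m}$, we have
\begin{equation*}
    \frac{d[m]}{d[m+1]} = \frac{\binom{m+d-1}{m}}{\binom{m+d}{m+1}} = \frac{m+1}{m+d} = 1 - \frac{d-1}{m+d},
\end{equation*}
and therefore
\begin{equation*}
    \frac{d[n]}{d[n+k]} = \prod_{j=0}^{k-1} \frac{d[n+j]}{d[n+j+1]} = \prod_{j=0}^{k-1} \Big(1 - \frac{d-1}{n+j+d}\Big).
\end{equation*}

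Each factor lies in $[0,1]$, because $d-1 \leq n+j+d$. We can therefore apply the elementary inequality $\prod_j (1-x_j) \geq 1 - \sum_j x_j$ for $x_j \in [0,1]$, which follows by a one-line induction on the number of factors. This gives
\begin{equation*}
    \frac{d[n]}{d[n+k]} \geq 1 - \sum_{j=0}^{k-1} \frac{d-1}{n+j+d}.
\end{equation*}

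Next we bound each term crudely: $\frac{d-1}{n+j+d} \leq \frac{d}{n}$, since $d-1 \leq d$ and $n+j+d \geq n$. Summing the $k$ terms yields $\frac{d[n]}{d[n+k]} \geq 1 - \frac{kd}{n}$.

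If $k=0$, the product is empty and the claim reads $1 \geq 1$. If $kd/n > 1$, the right-hand side is negative and the inequality is trivial anyway. No step here presents a real obstacle. The only point needing care is checking that each $x_j \in [0,1]$ before invoking the product inequality.
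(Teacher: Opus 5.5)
Your proof is correct and follows essentially the paper's route: the same telescoping product $\prod_{j=0}^{k-1}\bigl(1-\frac{d-1}{n+d+j}\bigr)$, linearized by a Bernoulli-type inequality, then bounded crudely by $kd/n$. The only difference is cosmetic. The paper first lower-bounds every factor by the smallest one, $\frac{n+1}{n+d}$, and then applies Bernoulli's inequality to $\bigl(1-\frac{d-1}{n+d}\bigr)^k$. You instead apply the product inequality $\prod_j(1-x_j)\geq 1-\sum_j x_j$ directly to the non-uniform factors.
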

\begin{proof}
    We have
    \begin{equation*}
        \frac{d[n]}{d[n+k]} = \frac{(n+k)!}{n!} \cdot \frac{(n+d-1)!}{(n+k+d-1)!} = \frac{(n+1)\cdots(n+k)}{(n+d)\cdots(n+d+k-1)} = \prod_{j=1}^{k} \frac{n+j}{n+d-1+j} \geq \Big(\frac{n+1}{n+d}\Big)^k. 
    \end{equation*}
    From here, we use
    \begin{equation*}
        \Big(\frac{n+1}{n+d}\Big)^k = \Big( 1 - \frac{d-1}{n+d} \Big)^k \geq 1 - \frac{k(d-1)}{n+d} \geq 1 - \frac{kd}{n}.
    \end{equation*}
    This completes the proof. 
\end{proof}

\subsection{Werner's pure state cloner}

In \cite{Wer98}, Werner constructed the optimal channel for cloning pure states. Here, we review this construction. \emph{Werner's pure state cloner} is the channel $\Werner^{(d,n,k)}: \states(\Sym{d}{n}) \to \states(\Sym{d}{n+k})$ given by
\begin{equation*} \label{eq:Werners_cloner}
    \Werner^{(d,n,k)}(\rho) \coloneq \frac{d[n]}{d[n+k]} \cdot \Pisym{d}{n+k} \cdot \big( \rho \otimes I^{\otimes k} \big) \cdot \Pisym{d}{n+k}
\end{equation*}
for any $\rho \in \states(\Sym{d}{n})$. 
This construction is clearly completely positive; trace preservation is less obvious, but also holds.\footnote{One way to see trace preservation is to note that Werner's cloner is actually an example of a \emph{Petz recovery map}. In this case, it is Petz recovery from the channel which traces out $k$ registers, with respect to the state $\sigma = \E \ketbra{\bu}^{\otimes (n+k)} = \Pisym{d}{n+k}/d[n+k]$. Petz recovery maps are trace-preserving on the support of $\Tr_{[n+k]\setminus [k]}\sigma = \Pisym{d}{n}/d[n]$, i.e.\ on the support of the symmetric subspace..}

\begin{lemma}[Werner's cloner is optimal \cite{Wer98}] \label{lem:Werner_optimality}
    For any input state $\ket{u} \in \C^d$, we have
    \begin{equation*}
        \Fid\Big( \Werner\big(\ketbra{u}^{\otimes n} \big), \ketbra{u}^{\otimes (n+k)} \Big) = \frac{d[n]}{d[n+k]}.
    \end{equation*}
    Moreover, this is the optimal worst-case value. That is, for any pure state cloning channel $\calC$, we have
    \begin{equation*}
        \min_{\ket{u}} \Big[ \Fid\Big( \calC\big(\ketbra{u}^{\otimes n} \big), \ketbra{u}^{\otimes (n+k)} \Big)\Big] \leq \frac{d[n]}{d[n+k]}. 
    \end{equation*}
\end{lemma}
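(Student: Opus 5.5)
The statement to prove is Werner's optimality lemma, \Cref{lem:Werner_optimality}. It has two parts: computing the fidelity of $\Werner$ on pure inputs, and showing that no channel does better in the worst case.

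\textbf{Fidelity of $\Werner$.} Since the target $\ketbra{u}^{\otimes(n+k)}$ is pure, the fidelity reduces to an expectation:
\begin{equation*}
\Fid = \bra{u}^{\otimes(n+k)}\Werner(\ketbra{u}^{\otimes n})\ket{u}^{\otimes(n+k)}.
\end{equation*}
The vector $\ket{u}^{\otimes(n+k)}$ lies in $\Sym{d}{n+k}$, so both symmetric projectors act trivially on it. The expression therefore collapses to
\begin{equation*}
\frac{d[n]}{d[n+k]}\cdot |\braket{u}{u}|^{2n}\cdot \braket{u}{u}^k=\frac{d[n]}{d[n+k]}.
\end{equation*}

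\textbf{Upper bound for an arbitrary channel.} I follow the standard averaging argument. Fix any channel $\calC$. The worst-case fidelity is at most the Haar average
\begin{equation*}
\int \bra{u}^{\otimes(n+k)}\calC(\ketbra{u}^{\otimes n})\ket{u}^{\otimes(n+k)}\,du.
\end{equation*}

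I rewrite this using the Choi operator $J(\calC)$ with the transpose taken in the computational basis. The average becomes $\tr\big(J(\calC)\cdot \int \ketbra{\bar u}^{\otimes n}\otimes \ketbra{u}^{\otimes(n+k)}\,du\big)$. I then apply the following PSD bound:
\begin{equation*}
\ketbra{\bar u}^{\otimes n}\otimes \ketbra{u}^{\otimes(n+k)}\preceq \ketbra{\bar u}^{\otimes n}\otimes I^{\otimes n}\otimes \ketbra{u}^{\otimes k}.
\end{equation*}
This is not quite right, so instead I use the pairing differently. Write the $n$ output copies together with the input as the maximally-entangled-like projector. Concretely, the operator
\begin{equation*}
\ketbra{\bar u}^{\otimes n}\otimes \ketbra{u}^{\otimes n}
\end{equation*}
is rank one on $\Sym{d}{n}\otimes \Sym{d}{n}$. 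Its Haar average is what I need to control.

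The cleaner route uses symmetrisation and Schur's lemma. The integral
\begin{equation*}
M=\int \ketbra{\bar u}^{\otimes n}\otimes\ketbra{u}^{\otimes(n+k)}du
\end{equation*}
commutes with $\bar U^{\otimes n}\otimes U^{\otimes(n+k)}$. Its operator norm can be bounded by $1/d[n+k]$ as follows. For any unit vector $\ket{\phi}$, write $\ket{\phi}$ as a map $\Sym{d}{n}\to\Sym{d}{n+k}$, which I call $A$, with Hilbert--Schmidt norm $1$. Then
\begin{equation*}
\bra{\phi}M\ket{\phi}=\int |\bra{u}^{\otimes(n+k)}A\ket{u}^{\otimes n}|^2du.
\end{equation*}
By Cauchy--Schwarz, $|\bra{u}^{\otimes(n+k)}A\ket{u}^{\otimes n}|^2\le \bra{u}^{\otimes(n+k)}AA^\dagger\ket{u}^{\otimes(n+k)}$. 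Its Haar average equals $\tr(AA^\dagger\Pisym{d}{n+k})/d[n+k]\le 1/d[n+k]$.

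Now trace preservation of $\calC$ on $\states(\Sym{d}{n})$ gives $\tr_{\mathrm{out}}J=I_{\Sym{d}{n}}$, and hence $\tr J=d[n]$. Combining this with the norm bound gives
\begin{equation*}
\tr(JM)\le \|M\|_\infty\tr J\le d[n]/d[n+k].
\end{equation*}

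\textbf{Main obstacle.} The step most likely to fail is the norm bound on $M$. That step, the Cauchy--Schwarz estimate of $\bra{\phi}M\ket{\phi}$ combined with the Haar average over the symmetric subspace, is where I expect difficulty. The Choi bookkeeping I expect to be routine.
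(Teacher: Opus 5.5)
The paper does not prove this lemma; it simply cites \cite{Wer98}. Your proposal is a correct, self-contained proof of what the lemma states.

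The first part is right as written, since $\Pisym{d}{n+k}$ fixes $\ket{u}^{\otimes(n+k)}$. For the second part, the chain ``worst case $\le$ Haar average $=\tr(JM)\le\norm{M}_\infty\cdot d[n]$'' is sound. Your Cauchy--Schwarz estimate $\norm{M}_\infty\le 1/d[n+k]$ is valid, so the step you flagged as the main obstacle is not one. You should delete the false start: the first PSD inequality you wrote is actually true, just too lossy to give the right bound.

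Your argument is essentially the $r=1$ case of the paper's own lower-bound template. There, one replaces the worst case by an average, passes to the Choi operator, and bounds by the trace of the Choi block times the operator norm of the twirled integral (\Cref{lem:symmetric_channels_perfomance_upper_bounded}). For $r=1$ one has $\lambda=(n)\subseteq\mu=(n+k)$, which is the easy case \Cref{eq:tech_overview_4}.

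That easy case gives a one-line replacement for your Cauchy--Schwarz step: discard the \emph{input} factor using $\ketbra{\bar u}^{\otimes n}\preceq\Pisym{d}{n}$. This yields
\[
M\preceq\Pisym{d}{n}\otimes\int\ketbra{u}^{\otimes(n+k)}\,\mathrm{d}u=\frac{1}{d[n+k]}\,\Pisym{d}{n}\otimes\Pisym{d}{n+k}.
\]

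This form also fixes a small bookkeeping slip. You define $J$ in the computational basis on all of $(\C^d)^{\otimes n}$, where $\tr J=d^n$ rather than $d[n]$. The display instead gives directly
\[
\tr(JM)\le\tr\big(J(\Pisym{d}{n}\otimes I)\big)/d[n+k]=\tr\big(\Pisym{d}{n}\,\tr_{\mathrm{out}}(J)\big)/d[n+k]=d[n]/d[n+k],
\]
using only $J\succeq0$ and $\tr_{\mathrm{out}}(J)=I$.

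What the citation provides that your argument does not is Werner's uniqueness of the optimal cloner, which the introduction mentions. The lemma as stated does not claim uniqueness, so nothing is missing.
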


We now give a simple analysis of its sample complexity. 

\begin{lemma}[Sample complexity of pure state cloning]
    With $n = O(kd/\eps)$ copies of a pure state $\ket{u} \in \C^d$, Werner's cloner $\Werner^{(d,n,k)}$ produces an $(n+k)$-copy state with fidelity at least $1-\eps$ with $\ket{u}^{\otimes (n+k)}$. \label{lem:sample_complexity_Werner}
\end{lemma}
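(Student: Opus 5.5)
This follows directly by combining \Cref{lem:Werner_optimality} with \Cref{lem:ratio_of_symmetric_subspace_dimensions}. By the first part of \Cref{lem:Werner_optimality}, for every pure input $\ket{u} \in \C^d$ Werner's cloner achieves fidelity exactly
\begin{equation*}
    \Fid\Big( \Werner^{(d,n,k)}\big(\ketbra{u}^{\otimes n}\big), \ketbra{u}^{\otimes (n+k)} \Big) = \frac{d[n]}{d[n+k]}.
\end{equation*}
It is therefore enough to show that this ratio is at least $1-\eps$ once $n$ is a suitable constant multiple of $kd/\eps$.

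By \Cref{lem:ratio_of_symmetric_subspace_dimensions}, which holds for $n,d \geq 1$ and $k \geq 0$, we have $d[n]/d[n+k] \geq 1 - kd/n$. We choose $n = \lceil kd/\eps \rceil$. Then $kd/n \leq \eps$, and so the fidelity is at least $1-\eps$, with $n = O(kd/\eps)$ as claimed.

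The degenerate cases need only a brief check. If $k = 0$, Werner's cloner acts as the identity on the symmetric subspace, and the fidelity is $1$. If $\eps \geq 1$, the statement is vacuous, and any $n \geq 1$ works. Neither case needs separate handling once $n \geq 1$, because $\lceil kd/\eps \rceil$ is replaced by $\max(1, \lceil kd/\eps \rceil)$.

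There is no real obstacle here. The only point to check is that the input $\ketbra{u}^{\otimes n}$ lies in $\states(\Sym{d}{n})$, so that Werner's channel, which is defined on the symmetric subspace, applies. This holds because $\ket{u}^{\otimes n}$ is permutation invariant.
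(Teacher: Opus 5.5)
Your proposal is correct and is essentially the paper's proof: take Werner's exact fidelity $d[n]/d[n+k]$ from \Cref{lem:Werner_optimality}, bound it below by $1-kd/n$ using \Cref{lem:ratio_of_symmetric_subspace_dimensions}, and choose $n=\lceil kd/\eps\rceil$. Your use of $\max(1,\lceil kd/\eps\rceil)$, which keeps $n\geq 1$ when $k=0$, is a harmless refinement that the paper omits.
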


\begin{proof}
On any input, Werner's cloner achieves fidelity $d[n]/d[n+k]$, by \Cref{lem:Werner_optimality}. Thus, 
\begin{equation*}
    \min_{\ket{u}} \Big[ \Fid\Big( \Werner\big(\ketbra{u}^{\otimes n} \big), \ketbra{u}^{\otimes (n+k)} \Big)\Big] = \frac{d[n]}{d[n+k]} \geq 1 - \frac{kd}{n},
\end{equation*}
using \Cref{lem:ratio_of_symmetric_subspace_dimensions}. If we take $n=\lceil kd/\eps \rceil$, this is at least $1- \eps$. 
\end{proof}

\subsection{The random purification channel}

The random purification channel was introduced in \cite{TWZ25} and has since proved to be an extremely useful tool in quantum information. Informally, the channel acts by mapping copies of a mixed state to copies of a consistent, but random, purification of that mixed state. 

More formally, Tang, Wright, and Zhandry showed that there exists a channel $\Purify^{(d,r, n)}: \states(\C^d)^{\otimes n} \to \states(\C^d \otimes \C^r)^{\otimes n}$ which acts as follows. Given any input of the form $\rho^{\otimes n}$, where $\rho \in \states(\C^d)$ is any state of rank at most $r$, we have
\begin{equation*}
    \Purify^{(d,r,n)}(\rho^{\otimes n}) = \E_{\ket{\brho}} \big[ \ketbra{\brho}^{\otimes n} \big].
\end{equation*}
Here, $\ket{\brho} \in \C^d \otimes \C^r$ is a \emph{random purification} of $\rho$. By \emph{random}, we mean a state of the form $\ket{\brho} = (I \otimes \bU) \cdot \ket{\rho_0}$, where $\ket{\rho_0}$ is any fixed purification of $\rho$, and $\bU$ is Haar random. Thus, the action is equivalently described as
\begin{equation*}
    \Purify^{(d,r,n)}(\rho^{\otimes n}) = \E_{\bU \sim \mathrm{Haar}} \big[ (I \otimes \bU)^{\otimes n} \cdot \ketbra{\rho_0}^{\otimes n} \cdot (I \otimes \bU^\dagger)^{\otimes n} \big].
\end{equation*}
We will also denote the original $d$-dimensional registers as $\reg{A}_1, \dots, \reg{A}_n$, and the collection of all such registers as $\reg{A}$. Similarly, the auxiliary $r$-dimensional registers will be denoted $\reg{B}_1, \dots, \reg{B}_n$, and collectively as $\reg{B}$. 

\subsection{Representation theory}

In this section, we give a minimal review of the representation theory we need. For much more on representation theory for quantum information, see e.g.\ \cite{Har05,Wri16,Gri25}.

\subsubsection{Young diagrams and tableaux}

A \emph{partition} of $n$ is a tuple $\lambda = (\lambda_1, \dots, \lambda_d)$ such that $\sum_{i=1}^d \lambda_i = n$, and $\lambda_1 \geq \dots \geq \lambda_d \geq 0$. We write $\Par(n,d)$ for the set of all partitions of size $n$ into $d$ parts. Sometimes we will want to fix only either $n$ or $d$, in which case we will write $\Par(n,-)$ or $\Par(-,d)$. In the former case, we identify partitions that differ only by trailing zeroes, e.g.\ $(1)$ and $(1,0)$ refer to the same partition, when viewed as elements of $\Par(1,-)$. We will sometimes also write $\ell(\lambda)$ for the \emph{length} of the partition $\lambda$, which is the number of nonzero entries in $\lambda$, and $\lambda \vdash n$ to denote that $\lambda$ is a partition of $n$. 

Partitions can be represented diagrammatically with \emph{Young diagrams}. Young diagrams consist of stacked rows of left-justified boxes (or \emph{cells}), such that the $i$-th row contains $\lambda_i$ boxes. 

\begin{figure}[h!]
\centering
\ytableausetup{boxframe=normal}

\begin{ytableau}
\, & \, & \, & \, \\
\, & \, \\
\,
\end{ytableau}

\caption{The Young diagram corresponding to $\lambda = (4,2,1)$.}
\label{fig:Young_fig_1}
\end{figure}

We can also think of a Young diagram as a subset of all unit cells in the plane, identifying the box in row $i$ and column $j$ with the cell at position $(i,j)$. With this viewpoint, it is natural to index boxes in a diagram $\lambda$ with e.g.\ $\Box \in \lambda$ or $(i,j) \in \lambda$, or to consider set operations on Young diagrams, such as $\lambda \cup \mu$, $\lambda \cap \mu$, or $\lambda \setminus \mu$, for two Young diagrams $\lambda$ and $\mu$. Note that while the union and intersection of diagrams is also a Young diagram, the set difference is not. 

\begin{figure}[h!]
\centering
\ytableausetup{boxframe=normal}

\begin{ytableau}
\, & \, & \,  \\
\, 
\end{ytableau}
\hspace{3cm}
\begin{ytableau}
*(gray!30)~ & *(gray!30)~ & *(gray!30)~ & \, \\
*(gray!30)~ & \, \\
\,
\end{ytableau}

\caption{On the left, $\mu = (3,1)$. On the right, $\lambda = (4,2,1)$. We have $\mu \subseteq \lambda$ -- the boxes of $\mu$ are shaded inside $\lambda$.}
\label{fig:Young_fig_2}
\end{figure}

We will use the following definitions and conventions.
\begin{itemize}
    \item The \emph{content} of a cell, which is the number $c(i,j) \coloneq j-i$. See \Cref{fig:Young_fig_3}.
    \item the \emph{hook-length} of a cell in a diagram $\lambda$, denoted $h_\lambda(i,j)$, which counts the number of cells either vertically below $(i,j)$, or horizontally to the right of $(i,j)$, including $(i,j)$ itself. See \Cref{fig:Young_fig_3}.
    \item If a diagram $\mu$ can be obtained by adding a box in row $i$ to $\lambda$, we write e.g.\ $\mu = \lambda + e_i$. Similarly, if we add $t$ boxes to the $i$-th row, we write $\lambda+t \cdot e_i$ for the resulting diagram. See \Cref{fig:Young_fig_3}.
    \item For two Young diagrams $\lambda, \lambda' \in \Par(-,d)$, such that for each $i$, $\lambda_{i+1} \leq \lambda'_i \leq \lambda_i$ (taking $\lambda_{d+1} = 0$), we say that $\lambda'$ \emph{interlaces} $\lambda$, and write $\lambda' \preceq \lambda$. For example, in \Cref{fig:Young_fig_2}, $\mu \preceq \lambda$. 
\end{itemize}

\begin{figure}[h!]
\centering
\ytableausetup{boxframe=normal}

\begin{ytableau}
0 & 1 & 2 & 3 & 4\\
-1 & 0  & 1 & 2\\
-2 & -1
\end{ytableau}
\hspace{3cm}
\begin{ytableau}
\, & \, & \, & \, & \, \\
*(gray!30)~ & *(gray!30)~  & *(gray!30)~ & *(gray!30)~\\
*(gray!30)~ & \,
\end{ytableau}
\hspace{3cm}
\begin{ytableau}
\, & \, & \, & \, & \, & *(gray!30)~ & *(gray!30)~ \\
\, & \,  & \, & \,\\
\, & \,
\end{ytableau}

\caption{On the left, the diagram $\sigma = (5,4,2)$, with each cell filled with its content. In the middle, $\sigma$ with the cells counted by the hook-length of $(2,1)$ shaded. Thus, $h_\sigma(2,1) = 5$. On the right, the diagram $\sigma + 2 \cdot e_1$. The new boxes are shaded.}
\label{fig:Young_fig_3}
\end{figure}


A Young tableau is a Young diagram with a symbol labeling each box. A \emph{standard Young tableau} (SYT) is a Young tableau of shape $\lambda \vdash n$ where the boxes are labeled with unique symbols in $[n]$, such that the labels strictly increase to the right along rows, and down columns. A \emph{semistandard Young tableau} (SSYT) with \emph{alphabet $[d]$} is a Young tableau where the boxes are labeled with symbols in $[d]$, such that labels are weakly increasing to the right along rows, and strictly increasing down columns.

\begin{figure}[h!]
\centering
\ytableausetup{boxframe=normal}

\begin{ytableau}
1 & 2 & 4 & 7 \\
3 & 5 \\
6
\end{ytableau}
\hspace{3cm}
\begin{ytableau}
1 & 1 & 1 & 2 \\
2 & 3 \\
3
\end{ytableau}

\caption{On the left, an SYT $S$ of shape $\tau = (4,2,1)$. On the right, an SSYT $T$ of shape $\tau$ and alphabet $[3]$.}
\label{fig:YD_fig_5}
\end{figure}

For an SSYT $T$ of shape $\lambda$ and alphabet $[d]$, we write $\lambda = \sh(T)$. We can also construct the SSYT $T_{\leq t}$ which consists of keeping only those boxes of $T$ which contain symbols at most $t$. A key property of SSYTs is that $\sh(T_{\leq (t-1)}) \preceq \sh(T_{\leq t})$ for all $t$. 

\begin{figure}[h!]
\centering
\ytableausetup{boxframe=normal}

\begin{ytableau}
1 & 1 & 1 & 2 \\
2 
\end{ytableau}

\caption{The SSYT $T_{\leq 2}$ obtained from $T$ in \Cref{fig:YD_fig_5}. We also have $\sh(T_{\leq 2}) = (3,1)$.}
\end{figure}

\subsubsection{Irreps of the symmetric and unitary groups}

We denote the symmetric group on $n$ objects by $S_n$. The irreducible representations of the symmetric group are labeled by Young diagrams $\lambda$ with $\lambda \in \Par(n, -)$. The irrep space corresponding to $\lambda$ is written $\calP_\lambda$, and the action is denoted $p_\lambda$. There is a basis of $\calP_\lambda$ with elements labeled by SYTs of shape $\lambda$.

We denote the $d$-dimensional unitary group by $U(d)$. The polynomial irreducible representations of the unitary group are labeled by Young diagrams $\lambda$ with $\ell(\lambda) \leq d$, called the \emph{Gelfand--Tsetlin basis}. The irrep space corresponding to $\lambda$ is written $\calQ^d_\lambda$, and the action is denoted $q^d_\lambda$. There is a basis of $\calQ^d_\lambda$ with elements labeled by SSYTs of shape $\lambda$ and alphabet $[d]$. There is a formula for the dimension of this space --- equivalently, for $|\SSYT(\lambda,d)|$ --- called the \emph{hook-content formula} \cite[Theorem 7.21.2]{Sta99}:
\begin{equation}
        \dim(\calQ^d_\lambda) = \prod_{ \Box \in \lambda } \frac{d + c(\Box)}{h_\lambda(\Box)}. \label{eq:hook_content_formula}
\end{equation}
The character corresponding to $\lambda$ is 
\begin{equation*}
    \tr( q^d_\lambda(U) ) = s_\lambda(\alpha_1, \dots, \alpha_d),
\end{equation*}
where $s_\lambda$ is the \emph{Schur polynomial} corresponding to $\lambda$, and $(\alpha_1, \dots, \alpha_d)$ are the eigenvalues of $U$. Another expression for $s_\lambda$ is
\begin{equation*}
    s_\lambda(\alpha_1, \dots, \alpha_d) = \sum_{T \in \SSYT(\lambda,d)} \alpha_1^{w_1(T)} \cdots \alpha_d^{w_d(T)},
\end{equation*}
where $w_i(T)$ is the number of $i$'s in $T$. Accordingly, if $\alpha_i \geq 0$ for all $i$, then $s_\lambda(\alpha)$ is positive unless there are no such $T$. 
In particular, $s_\lambda(1^d) = \dim(\calQ^d_\lambda)$. For any $t \in [d]$, we will write $s_\lambda(1^t)$ as shorthand for $s_\lambda(1^t, 0^{d-t})$, and note that this quantity is equal to $\dim(\calQ^t_\lambda)$. The following is a useful fact. 

\begin{lemma} \label{lem:ratio_of_s's_subset_diagrams}
    Let $\sigma \in \Par(-,d)$ and $\tau \in \Par(-,d)$, such that $\sigma \subseteq \tau$. Then for any $t \in [d]$,
    \begin{equation*}
        \frac{s_\sigma(1^t)}{s_\sigma(1^d)} \geq \frac{s_\tau(1^t)}{s_\tau(1^d)}. 
    \end{equation*}
\end{lemma}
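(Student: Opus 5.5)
The plan is to use the hook-content formula \eqref{eq:hook_content_formula}. Since $s_\lambda(1^t) = \dim(\calQ^t_\lambda)$, we get, for any diagram $\lambda$ with $\ell(\lambda) \leq d$,
\begin{equation*}
    \frac{s_\lambda(1^t)}{s_\lambda(1^d)} = \prod_{\Box \in \lambda} \frac{t + c(\Box)}{d + c(\Box)}.
\end{equation*}
The hook lengths cancel because they depend only on $\lambda$, not on $t$ or $d$. This identity holds even when $\ell(\lambda) > t$. In that case the box $(t+1,1) \in \lambda$ has content $-t$, so the numerator vanishes, matching $s_\lambda(1^t) = 0$ (there are no SSYTs of shape $\lambda$ with alphabet $[t]$). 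The denominator never vanishes, since $d + c(\Box) \geq d - \ell(\lambda) + 1 \geq 1$ for every box of $\lambda$.

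With this product form the comparison is immediate. Since $\sigma \subseteq \tau$, we can split
\begin{equation*}
    \frac{s_\tau(1^t)}{s_\tau(1^d)} = \frac{s_\sigma(1^t)}{s_\sigma(1^d)} \cdot \prod_{\Box \in \tau \setminus \sigma} \frac{t + c(\Box)}{d + c(\Box)}.
\end{equation*}
It then suffices to show that each factor $\frac{t+c(\Box)}{d+c(\Box)}$ lies in $[0,1]$.

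Two cases arise.
\begin{itemize}
    \item If $t + c(\Box) \geq 0$, the factor is nonnegative, and it is at most $1$ because $t \leq d$ and $d + c(\Box) > 0$.
    \item If $t + c(\Box) < 0$ for some box in $\tau$, then $\tau$ has a box with content at most $-t-1$, which forces $\ell(\tau) > t$. Then $s_\tau(1^t) = 0$, and the inequality holds trivially because the left-hand side is nonnegative.
\end{itemize}
So I would treat the case $\ell(\tau) > t$ separately at the start. In the remaining case every box of $\tau$ has $c(\Box) \geq -(t-1)$, so all factors lie in $(0,1]$ and the product bound gives the claim.

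The only mild obstacle is this bookkeeping with zeros and signs, namely making sure the product formula remains valid when $s(1^t)$ vanishes. Handling the case $\ell(\tau) > t$ first avoids it. Beyond that, the argument is routine.
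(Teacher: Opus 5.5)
Your proposal is correct and follows essentially the same route as the paper. Both dispose of the case $\ell(\tau) > t$ first, where the right-hand side vanishes. Both then use the hook-content formula to write the ratio as the product $\prod_{\Box \in \tau \setminus \sigma} \frac{t + c(\Box)}{d + c(\Box)}$, each of whose factors lies in $[0,1]$ because $c(\Box) \geq -(t-1)$ and $t \leq d$.
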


\begin{proof}
By the hook-content formula $s_\lambda(1^t) = 0$ iff $\ell(\lambda) > t$. Indeed, there are no SSYTs of shape $\lambda$ with alphabet $[t]$, since in particular the first column, which would contain more than $t$ symbols, has to be an increasing subsequence of $[t]$. If $\ell(\lambda) \leq t$, there is at least one such SSYT, obtained by filling the $i$-th row with symbol $i$, and so $s_{\lambda}(1^t) > 0$. 

Now, if $\ell(\tau) > t$, then we have nothing to prove, as the right-hand side of the desired inequality is zero, and the left-hand side is nonnegative. So assume $\ell(\tau) \leq t$. Using the hook-content formula, \Cref{eq:hook_content_formula}, we have 
     \begin{equation*}
        \frac{s_\tau(1^{t})}{s_\tau(1^{d})} \cdot  \frac{s_\sigma(1^{d})}{s_\sigma(1^{t})} = \bigg( \prod_{\Box \in \tau} \frac{t + c(\Box)}{d+c(\Box)} \bigg) \cdot \bigg( \prod_{\Box \in \sigma} \frac{d + c(\Box)}{t+c(\Box)} \bigg) = \prod_{\Box \in \tau \setminus \sigma} \frac{t + c(\Box)}{d+c(\Box)},
    \end{equation*}
    However, each term in the right-most product is in $[0,1]$ for all $\Box \in \tau$, since $c(\Box) > -\ell(\tau)$, and since $t \leq d$. Thus the product is at most $1$, and we conclude
    \begin{equation*}
         \frac{s_\sigma(1^{t})}{s_\sigma(1^{d})} \geq \frac{s_\tau(1^{t})}{s_\tau(1^{d})}.
    \end{equation*} 
    This completes the proof.
\end{proof}

In our proof of the lower bound, we will need to know a little about the branching rule for $U(d)$. View $U(d-1) \subseteq U(d)$ by letting $V \in U(d-1)$ act nontrivially only on the first $d-1$ elements. Then the representation $(\calQ^d_\lambda, q^d_\lambda)$ restricts to $U(d-1)$ as
\begin{equation*}
    \calQ^d_\lambda \big \downarrow_{U(d-1)} \cong \bigoplus_{\substack{\lambda' \in \Par(-,d-1) \\ \lambda' \preceq \lambda}} \calQ^{d-1}_{\lambda'}. 
\end{equation*}
This branching rule is key to the construction of the Gelfand--Tsetlin basis, whose basis elements are indexed by SSYTs of shape $\lambda$. The vectors appearing in the block corresponding to $\lambda'$ are those SSYTs such that $\sh(T_{\leq (d-1)}) = \lambda'$.


\subsubsection{Schur--Weyl duality} \label{sec:Schur-Weyl_duality}

The symmetric group, $S_n$, and the unitary group, $U(d)$, have natural unitary actions on the space $(\C^d)^{\otimes n}$. We already described this natural action for $S_n$ in \Cref{sec:symmetric_subspace}. It is the action $\calP^{(d,n)}$ given by
\begin{equation*}
    \calP^{(d,n)}(\pi) \cdot \ket{i_1, \dots, i_n} \coloneq \ket*{i_{\pi^{-1}(1)}, \dots, i_{\pi^{-1}(n)} },
\end{equation*}
for all permutations $\pi \in S_n$, and all basis elements $(i_1, \dots, i_n) \in [d]^n$. The natural action of the unitary group is the action $\calQ^{(d,n)}$ with
\begin{equation*}
    \calQ^{(d,n)}(U) \cdot \ket{i_1, \dots, i_n} \coloneq U \ket{i_1} \otimes \dots \otimes U \ket{i_n}, 
\end{equation*}
for all unitaries $U \in U(d)$, and again all basis elements. We will drop the superscripts $(d,n)$ when these are clear from context. 

These two actions commute, and so we obtain a representation of $S_n \times U(d)$ from their product. A result known as \emph{Schur--Weyl duality} tells us how this product action decomposes into the irreps of the two constituent groups. It says:
\begin{equation*}
    (\C^d)^{\otimes n} \cong \bigoplus_{\lambda \in \Par(n,d)} \calP_\lambda \otimes \calQ_\lambda^d.
\end{equation*}
In particular, there exists a unitary change-of-basis on $(\C^d)^{\otimes n}$ which implements this isomorphism. We call such a unitary the \emph{Schur transform} and denote it $\Schur^{(d,n)}$. That is, for all $\pi \in S_n$ and $U \in U(d)$, we have
\begin{equation*}
    \Schur^{(d,n)} \cdot \Big( \calP^{(d,n)}(\pi) \cdot \calQ^{(d,n)}(U) \Big) \cdot \Schur^{(d,n), \dagger} =  \bigoplus_{\substack{\lambda \vdash n \\ \ell(\lambda) \leq d}} p_\lambda(\pi) \otimes q^d_\lambda(U).
\end{equation*}
The basis of $(\C^d)^{\otimes n}$ on the right-hand side is known as the \emph{Schur basis}.

These unitary representations extend from $U(d)$ to all operators on $\C^d$. Viewing $\rho^{\otimes n}$ as $\calQ^{(d,n)}(\rho)$, we obtain
\begin{equation*}
    \Schur^{(d,n)} \cdot \rho^{\otimes n} \cdot \Schur^{(d,n), \dagger} =  \bigoplus_{\substack{\lambda \vdash n \\ \ell(\lambda) \leq d}} I_{\calP_\lambda} \otimes q^d_\lambda(\rho).
\end{equation*}
Thus, all states of the form $\rho^{\otimes n}$ are simultaneously block-diagonalized in the Schur basis. Moreover, any such state is thus a \emph{classical mixture} over states of the form 
\begin{equation*}
    \frac{I_{\calP_{\lambda}}}{\dim(\calP_{\lambda})} \otimes \frac{q^d_\lambda(\rho)}{s_\lambda(\rho)},
\end{equation*}
supported in the individual $\lambda$ blocks.
Here, $s_\lambda(\rho)$ is $s_\lambda$ evaluated at the eigenvalues of $\rho$. The measurement which projects into one of the blocks is called \emph{weak Schur sampling}, and the probability that we obtain block $\lambda$ is $\dim(\calP_{\lambda}) \cdot s_\lambda(\rho)$.

%% file: cloning.tex

\subsection{Upper bounds via the random purification channel}

Combining the random purification channel with Werner's pure state cloner straightforwardly gives a simple mixed state cloner. 

{
\floatstyle{boxed} 
\restylefloat{figure}
\begin{figure}[H]
Given $n$ copies of a rank-$r$ mixed state $\rho_{\reg{A}}$:
\begin{enumerate}
    \item Apply $\Purify^{(d,r,n)}$ to prepare $n$ copies of a random purification $\ket{\brho}_{\reg{AB}} \in \C^d \otimes \C^r$. Set $D = d \cdot r$. 
    \item Apply $\Werner^{(D, n, k)}$, yielding a mixed state $\bsigma_{\reg{AB}} \in \states( \C^d \otimes \C^r)^{\otimes (n+k)}$. 
    \item Trace out the auxiliary registers, and output $\Tr_{\reg{B}}(\bsigma_{\reg{AB}}) \in \states(\C^d)^{\otimes (n+k)}$. 
\end{enumerate}
\caption{A mixed state cloning channel, which we call the \emph{PCT cloner}, due to its action: \emph{purify--clone--trace}. We denote the PCT cloner by $\PCT^{(d,n,k)}$. As usual, we will sometimes drop the superscript's parameters when these are clear from context.}
\label{fig:reduction}
\end{figure}
}

This natural construction has already appeared in the literature, in e.g.\ \cite{LTHC26}. Here we give an analysis of its sample complexity. This result, when combined with our lower bounds, shows that the PCT cloner is sample-optimal, despite its simplicity. 

\begin{proposition}[Sample complexity of the PCT cloner] \label{prop:upper_bound_PCT}
    With $n = O(krd/\eps)$ copies of a mixed state $\rho \in \states(\C^d)$, the PCT cloner $\PCT^{(d,n,k)}$ produces an $(n+k)$-copy state with fidelity at least $1-\eps$ with $\rho^{\otimes (n+k)}$. 
\end{proposition}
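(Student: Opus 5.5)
The plan is to follow the three steps of the PCT cloner and track fidelity through each one. Fix a state $\rho$ of rank at most $r$ and set $D = d \cdot r$. By the defining property of the random purification channel,
\begin{equation*}
    \Purify^{(d,r,n)}(\rho^{\otimes n}) = \E_{\bU}\big[\ketbra{\brho}^{\otimes n}\big], \qquad \ket{\brho} = (I\otimes \bU)\ket{\rho_0}.
\end{equation*}
Each $\ket{\brho}$ is a unit vector in $\C^d \otimes \C^r \cong \C^D$. Hence $\ketbra{\brho}^{\otimes n} \in \states(\Sym{D}{n})$, so $\Werner^{(D,n,k)}$ is well defined on every term. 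Since $\Werner^{(D,n,k)}$ and $\Tr_{\reg{B}}$ are linear, the output is
\begin{equation*}
    \PCT(\rho^{\otimes n}) = \E_{\bU}\big[\bsigma_{\bU}\big], \qquad \bsigma_U \coloneq \Tr_{\reg{B}}\Big(\Werner^{(D,n,k)}\big(\ketbra{\rho_U}^{\otimes n}\big)\Big).
\end{equation*}

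The first step is a bound for each fixed purification $\ket{\rho_U}$. \Cref{lem:Werner_optimality} in dimension $D$ gives
\begin{equation*}
    \Fid\Big(\Werner^{(D,n,k)}\big(\ketbra{\rho_U}^{\otimes n}\big), \ketbra{\rho_U}^{\otimes (n+k)}\Big) = \frac{D[n]}{D[n+k]},
\end{equation*}
and \Cref{lem:ratio_of_symmetric_subspace_dimensions} bounds the right-hand side below by $1 - kD/n$. Next we apply the data-processing inequality for fidelity under the partial trace $\Tr_{\reg{B}}$. Because $\Tr_{\reg{B}}\big(\ketbra{\rho_U}^{\otimes(n+k)}\big) = \rho^{\otimes(n+k)}$ for every $U$, this yields $\Fid\big(\bsigma_U, \rho^{\otimes(n+k)}\big) \geq 1 - kD/n$ uniformly in $U$.

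The second step passes from the individual branches to the mixture. The square-root fidelity $\sqrt{\Fid(\cdot,\tau)} = \norm{\sqrt{\cdot}\,\sqrt{\tau}}_1$ is concave in its first argument for fixed $\tau$; this is the standard joint concavity of root fidelity, and it extends from finite mixtures to Haar integrals by continuity. Taking $\tau = \rho^{\otimes(n+k)}$,
\begin{equation*}
    \sqrt{\Fid\big(\E_{\bU}\bsigma_{\bU}, \tau\big)} \geq \E_{\bU}\sqrt{\Fid(\bsigma_{\bU},\tau)} \geq \sqrt{1 - kD/n}.
\end{equation*}
Squaring gives $\Fid\big(\PCT(\rho^{\otimes n}), \rho^{\otimes(n+k)}\big) \geq 1 - krd/n$. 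Choosing $n = \lceil krd/\eps\rceil$ makes this at least $1-\eps$, which proves the proposition.

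No step here is a real obstacle. The only point needing care is the mixture step. Fidelity itself is not concave in the sense we need, so we must argue with the square root of fidelity and square only at the end; the uniform per-branch bound makes this work. One should also note that we use only that the RPC output is some mixture of $n$-fold product purifications in dimension $D$. The exact Haar distribution plays no role.
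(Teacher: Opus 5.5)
Your proposal is correct and follows the paper's proof essentially step for step. Like the paper, you write the output as a mixture over purifications, use Werner's per-branch fidelity $D[n]/D[n+k] \geq 1 - krd/n$, apply data processing under $\mathrm{Tr}_{\reg{B}}$, pass to the mixture via concavity of the square-root fidelity, and take $n = \lceil krd/\epsilon \rceil$.

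One harmless inaccuracy is in your closing remark. For a fixed second argument, the squared fidelity is itself concave: by the Fuchs--Caves measurement formula it is a minimum over measurements of $\bigl(\sum_i \sqrt{p_i q_i}\bigr)^2$, which is concave in $p$. So working with the square root is a convenience here, not a necessity.
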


\begin{proof}
    We start by writing
    \begin{equation*}
        \PCT(\rho^{\otimes n}) = \Tr_{\reg{B}} \Big(\Werner  \big(\Purify \big(\rho^{\otimes n}\big) \big) \Big) = \Tr_{\reg{B}} \Big(\Werner  \Big(\E_{\ket{\brho}} \big[\ketbra{\brho}^{\otimes n}\big] \Big) \Big) = \E_{\ket{\brho}} \Big[ \Tr_{\reg{B}} \Big( \Werner\big(\ketbra{\brho}^{\otimes n} \big) \Big) \Big].
    \end{equation*}
    By concavity of the ``square-root fidelity'' \cite[Corollary 3.26]{Wat18}, we then have
    \begin{equation*}
        \sqrt{\Fid \Big(  \PCT(\rho^{\otimes n}) , \rho^{\otimes (n+k)}\Big)} \geq \E_{\ket{\brho}}\Bigg[ \sqrt{\Fid \Big( \Tr_{\reg{B}} \Big( \Werner\big(\ketbra{\brho}^{\otimes n} \big) \Big), \rho^{\otimes (n+k)} \Big)} \Bigg].
    \end{equation*}
    Next, by data processing of fidelity \cite[Theorem 3.27]{Wat18}, 
    \begin{align*}
        \Fid \Big( \Tr_{\reg{B}} \Big( \Werner\big(\ketbra{\brho}^{\otimes n} \big) \Big), \rho^{\otimes (n+k)} \Big) & = \Fid \Big( \Tr_{\reg{B}} \Big( \Werner\big(\ketbra{\brho}^{\otimes n} \big) \Big), \Tr_{\reg{B}}\big(\ketbra{\brho}^{\otimes (n+k)}\big) \Big) \\
        & \geq \Fid \Big( \Werner\big(\ketbra{\brho}^{\otimes n} \big) , \ketbra{\brho}^{\otimes (n+k)}\Big).
    \end{align*}
    Then, by \Cref{lem:Werner_optimality,lem:ratio_of_symmetric_subspace_dimensions}, we have
    \begin{equation*}
        \Fid \Big( \Werner\big(\ketbra{\brho}^{\otimes n} \big) , \ketbra{\brho}^{\otimes (n+k)}\Big) = \frac{D[n]}{D[n+k]} \geq 1 - \frac{kD}{n} = 1 - \frac{krd}{n}. 
    \end{equation*}
    Since this holds for all $\ket{\brho}$, we obtain
    \begin{equation*}
        \Fid \Big(  \PCT(\rho^{\otimes n}) , \rho^{\otimes (n+k)}\Big) \geq \Bigg( \E_{\ket{\brho}} \Bigg[ \sqrt{\Fid \Big( \Werner\big(\ketbra{\brho}^{\otimes n} \big) , \ketbra{\brho}^{\otimes (n+k)}\Big)} \Bigg] \Bigg)^2 \geq 1 - \frac{krd}{n}. 
    \end{equation*}
    Taking $n = \lceil krd/\eps \rceil$ makes this at least $1 - \eps$. This completes the proof.
\end{proof}

\subsection{Lower bounds via projector cloning}

In this section we prove the following result, which shows that rank-$r$ projector states are among the hardest rank-$r$ states to clone. Throughout, $P \in \C^{d \times d}$ will be a rank-$r$ projector, and $\rho = P/r$ the corresponding projector state. We will continue to use the shorthand $m = n+k$.

\begin{proposition}[A lower bound for cloning projector states; \Cref{prop:projector_lower_bound_tech_overview}, restated]\label{prop:projector_lower_bound}
Any channel $\calC: \states(\C^d)^{\otimes n} \to \states(\C^d)^{\otimes m}$ which clones rank-$r$ projector states to fidelity $1-\eps$ requires at least $n = \Omega(krd/\epsilon)$ copies as input, for $d \geq 2$, $r \leq d/2$, and $\epsilon \leq 1/16$. 
\end{proposition}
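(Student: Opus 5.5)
We follow the route sketched in the technical overview, in four steps. The first three reduce the problem to the scalar inequality \eqref{eq:tech_overview_3}; the fourth extracts the bound on $n$ from it.

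\textbf{Step 1: from fidelity to overlap, then averaging.} For any state $\sigma$ and rank-$R$ projector $\Pi$, Cauchy--Schwarz gives
\begin{equation*}
\Fid(\sigma,\Pi/R)=\tfrac1R\norm{\sqrt\sigma\,\Pi}_1^2\le \tr(\sigma\Pi).
\end{equation*}
Apply this with $\Pi=P^{\otimes m}$ and $\sigma=\calC(\rho^{\otimes n})$, $\rho=P/r$. The fidelity hypothesis then implies \eqref{eq:tech_overview_1} for every rank-$r$ projector $P$.

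Next, bound the minimum over $P$ by the Haar average over $P$. Rewrite that average through the Choi state $J(\calC)$, with all transposes taken in the Schur basis. Both $P^{T,\otimes n}$ and $P^{\otimes m}$ are block diagonal in the Schur basis, so the averaged operator decomposes as
\begin{equation*}
\bigoplus_{\lambda,\mu} I\otimes I\otimes \int q_\lambda(P)^T\otimes q_\mu(P)\,\dP .
\end{equation*}

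\textbf{Step 2: use trace preservation to reach \eqref{eq:tech_overview_2}.} Trace preservation says that $\tr_{\mathrm{out}} J(\calC)=I$. Consequently, the part of $J(\calC)$ supported on the input block $\lambda$ has total trace $\dim\calP_\lambda\dim\calQ^d_\lambda$. Bound the overlap block by block by this trace times the largest operator norm over output blocks $\mu$.

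The factor $q_\lambda(P)$ vanishes unless $\ell(\lambda)\le r$. The factor $q_\mu(P)$ vanishes unless $\ell(\mu)\le r$, so only such $\mu$ contribute. Dividing by $r^n$ gives the weak Schur sampling distribution $\dim\calP_\lambda\, s_\lambda(1^r)/r^n$, times the ratio $s_\lambda(1^d)/s_\lambda(1^r)$. This is exactly \eqref{eq:tech_overview_2}.

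\textbf{Step 3: the partially-transposed integral bound and the maximizing $\mu$.} Insert \Cref{lem:transposed_integral_tech_overview}. This is the main obstacle. I would prove it as outlined: show that
\begin{equation*}
\Delta_{\lambda\mu}=\frac{s_{\lambda\cup\mu}(1^r)}{s_{\lambda\cup\mu}(1^d)}I-\int q_\lambda(P)\otimes q_\mu(P)\,\dP
\end{equation*}
is separable. Separability is preserved by partial transpose, so the transposed bound follows.

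Separability is proved by induction on $(r,d)$. In the base case $r=1$ the two diagrams are nested, and \eqref{eq:tech_overview_4} applies. For the inductive step, use the $U(d)\downarrow U(d-1)$ branching rule and a unitarily averaged lift. With it, express $\Delta_{\lambda\mu}$ as a nonnegative combination of lifts of $\Delta_{\lambda'\mu'}$, plus explicitly separable PSD terms. Finding this recursion with the right coefficients is where I expect most of the work.

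Once the lemma holds, maximize $s_{\lambda\cup\mu}(1^r)/s_{\lambda\cup\mu}(1^d)$ over $\mu\vdash m$ with $\ell(\mu)\le r$. By \Cref{lem:ratio_of_s's_subset_diagrams} the ratio is monotone decreasing in the diagram. By the hook-content formula, the ratio equals $\prod_{\Box\in\lambda\cup\mu}\frac{r+c(\Box)}{d+c(\Box)}$. Each factor $\frac{r+c}{d+c}$ increases with the content $c$, so it is best to place every new box at the largest possible content. Here $\lambda\cup\mu\supseteq\lambda$ has at least $k$ boxes outside $\lambda$. Taking $\mu=\lambda+k e_1$ adds exactly $k$ boxes, all in the first row where contents are largest. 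An exchange argument should show this choice is optimal, which yields \eqref{eq:tech_overview_3}.

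\textbf{Step 4: analyze \eqref{eq:tech_overview_3}.} With $\lambda_1$ the first row length, the ratio is
\begin{equation*}
\prod_{j=\lambda_1}^{\lambda_1+k-1}\frac{r+j}{d+j}\le\Big(\frac{r+\lambda_1+k}{d+\lambda_1+k}\Big)^{k}.
\end{equation*}
Under Schur sampling of $(P/r)^{\otimes n}$, the first row satisfies $\E\lambda_1\le n/r+O(\sqrt n)$, with standard concentration. Each factor is at most $1-\frac{(d-r)}{d+\lambda_1+k}$, and $d-r\ge d/2$. Hence on the likely event $\lambda_1\lesssim 2n/r$, the ratio is at most about $\exp(-kdr/(4n+\dots))$.

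Averaging, $1-\eps\le 1-c\min(1,krd/n)$, so $n\ge krd/(8\eps)$ once $\eps\le1/16$. For the expectation step I would apply Jensen to the convex map $x\mapsto 1-\frac{c}{x}$ using $\E\lambda_1$, which avoids tail estimates.
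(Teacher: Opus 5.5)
Your proposal is correct and follows the paper's proof essentially step for step: the fidelity-to-overlap reduction via Cauchy--Schwarz, Haar averaging with the Choi/Schur-block bound from $\tr_{\mathrm{out}}J(\calC)=I$, the partially-transposed integral bound via separability of $\Delta_{\lambda\mu}$, the maximizer $\mu=\lambda+k\cdot e_1$, and Jensen with $\E[\blambda_1]\le n/r+2\sqrt{n}$. You can indeed skip the paper's reduction to permutation-invariant channels, because trace preservation alone already bounds the trace of each input block.

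One small correction in Step 4: $x\mapsto 1-c/x$ is concave, not convex. It is concavity that makes Jensen give $\E[1-c/\mathbf{X}]\le 1-c/\E[\mathbf{X}]$, once you have put the pointwise bound in that form (e.g.\ via $e^{-y}\le 1/(1+y)$, where the paper uses $\prod_j(1-u_j)\le 1/(1+k\min_j u_j)$).
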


We note that restrictions of the form $d \geq 2$, $r < d$ and $\epsilon < 1$ are necessary to avoid cases where you can easily clone for trivial reasons: $d=1$ or $r = d$ or $\epsilon = 1$. We have not tried to optimize either of the restrictions on $r$ or $\eps$, as these will suffice for our purposes.

The remainder of the section is devoted to this statement's proof. Our argument is organized into five main steps.

\paragraph{Step 1.} We first reduce from fidelity, which is nonlinear, to a linear figure of merit, which will be easier to work with. For a quantum state $\sigma$ and a projector $\Pi$, we define the \emph{overlap} as the quantity $\tr(\sigma \cdot \Pi )$. The overlap is a simple quantity, and its utility lies in the following fact. 

\begin{lemma}\label{lem:reduction_from_fidelity_to_overlap}
    Let $\Pi \in \C^{d \times d}$ be a projector of rank $R$, and $\sigma \in \C^{d \times d}$ a quantum state. Then we have 
    \begin{equation*}
        \Fid( \sigma, \Pi/R) \leq \tr( \sigma \cdot \Pi).
    \end{equation*}
\end{lemma}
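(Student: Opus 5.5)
We prove the bound by writing the fidelity with the pure-looking quantity $\sqrt{\Pi/R} = \Pi/\sqrt{R}$ explicitly, and then applying Cauchy--Schwarz. Since $\Pi$ is a projector, $\sqrt{\Pi/R} = \Pi/\sqrt{R}$, so
\begin{equation*}
    \Fid(\sigma, \Pi/R) = \norm{\sqrt{\sigma}\,\Pi}_1^2 / R.
\end{equation*}
It therefore suffices to show $\norm{\sqrt{\sigma}\,\Pi}_1^2 \leq R \cdot \tr(\sigma \Pi)$.

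First we rewrite the trace norm. The operator $X = \sqrt{\sigma}\,\Pi$ satisfies $X = X\Pi$, so $X$ has rank at most $R$, and hence at most $R$ nonzero singular values $s_1, \dots, s_R$. Applying Cauchy--Schwarz to these singular values (equivalently, the inequality $\norm{X}_1 \leq \sqrt{\mathrm{rank}(X)}\,\norm{X}_2$) gives
\begin{equation*}
    \norm{X}_1^2 = \Big(\sum_{i=1}^R s_i\Big)^2 \leq R \sum_i s_i^2 = R\,\norm{X}_2^2.
\end{equation*}

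Next we compute the Hilbert--Schmidt norm:
\begin{equation*}
    \norm{X}_2^2 = \tr(X^\dagger X) = \tr(\Pi \sigma \Pi) = \tr(\sigma \Pi),
\end{equation*}
using cyclicity of the trace and $\Pi^2 = \Pi$. Combining the two displays yields
\begin{equation*}
    \Fid(\sigma, \Pi/R) \leq \frac{R \cdot \tr(\sigma\Pi)}{R} = \tr(\sigma \Pi),
\end{equation*}
as claimed.

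The only point needing care is the rank bound used in the Cauchy--Schwarz step. The singular values must be counted for $\sqrt{\sigma}\,\Pi$, whose rank is at most $\mathrm{rank}(\Pi) = R$; counting them for $\sigma$ would give the wrong factor. Everything else is routine.
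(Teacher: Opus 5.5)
Your proof is correct and is essentially the paper's argument. The singular values of $\sqrt{\sigma}\,\Pi$ are exactly the square roots of the eigenvalues of $\Pi\sigma\Pi$ that the paper works with, and both proofs apply Cauchy--Schwarz to at most $R$ nonzero terms, then identify the resulting sum of squares with $\tr(\Pi\sigma\Pi) = \tr(\sigma\Pi)$.
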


\begin{proof}
    Let $\{\lambda_i\}_{i \in [d]}$ be the eigenvalues of the Hermitian matrix $\Pi \cdot \sigma \cdot \Pi$, ordered decreasingly. Since $\Pi$ has rank $R$, at most $R$ of these eigenvalues are nonzero. Then by Cauchy-Schwarz,
    \begin{align*}
        \Fid(\sigma, \Pi/R) = \tr( \sqrt{\sqrt{ \Pi/R} \cdot \sigma \cdot \sqrt{\Pi/R}} )^2 = \frac{1}{R} \tr( \sqrt{\Pi \cdot \sigma \cdot \Pi})^2 = \frac{1}{R} \Big( \sum_{i=1}^R \sqrt{\lambda_i}\Big)^2 \leq \sum_{i=1}^{R} \lambda_i  & = \tr( \Pi \cdot \sigma \cdot \Pi) \\ & = \tr(\sigma \cdot \Pi). \qedhere
    \end{align*}
\end{proof}

With cloning, we seek to maximize the fidelity between the projector state $\rho^{\otimes m}$ and the output of our channel $\calC(\rho^{\otimes n})$. So, setting $\Pi \leftarrow P^{\otimes m}$ and $\sigma \leftarrow \calC(\rho^{\otimes n})$, we find 
\begin{equation*}
    \Fid \Big( \calC( \rho^{\otimes n}), \rho^{\otimes m} \Big) \leq \tr\Big( \calC (\rho^{\otimes n} ) \cdot P^{\otimes m} \Big).
\end{equation*}
In particular, if we want to prove that $\Fid (\calC( \rho^{\otimes n}), \rho^{\otimes m} ) \geq 1-\eps$ (for all $\rho$) implies a lower bound on $n$, it suffices to show instead that $\tr(\calC (\rho^{\otimes n} ) \cdot P^{\otimes m} ) \geq 1-\eps$ (for all $\rho$) implies a lower bound on $n$. 

\paragraph{Step 2.} Our next step is to reduce from generic channels to \emph{permutation-invariant} channels. This restriction is without loss of generality, due to the following lemma, which shows that for any channel $\calC$, there exists a permutation-invariant channel $\overline{\calC}$ which performs the same on any input $\rho^{\otimes n}$. 

\begin{lemma} \label{lem:reduction_to_perm_invariant}
    Let $\calC: \states(\C^d)^{\otimes n} \to \states(\C^d)^{\otimes m}$ be any channel. There exists a permutationally invariant channel $\overline{\calC}$ such that for any $P$,
    \begin{equation*}
        \tr\Big( \overline{\calC}(\rho^{\otimes n}) \cdot P^{\otimes m} \Big) = \tr\Big( \calC(\rho^{\otimes n}) \cdot P^{\otimes m} \Big).
    \end{equation*}
    In particular, $\calC$ and $\overline{\calC}$ have the same minimum overlap, and the same average overlap.
\end{lemma}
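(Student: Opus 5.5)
We prove \Cref{lem:reduction_to_perm_invariant} by twirling $\calC$ over the symmetric groups acting on its input and output. We define
\begin{equation*}
    \overline{\calC}(X) \coloneq \frac{1}{n!\, m!} \sum_{\pi \in S_n} \sum_{\tau \in S_m} \calP^{(d,m)}(\tau) \cdot \calC\Big( \calP^{(d,n)}(\pi) \cdot X \cdot \calP^{(d,n)}(\pi)^\dagger \Big) \cdot \calP^{(d,m)}(\tau)^\dagger.
\end{equation*}
This is a convex combination of compositions of channels: a unitary conjugation, then $\calC$, then another unitary conjugation. Hence $\overline{\calC}$ is again a channel, i.e.\ completely positive and trace-preserving.

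Next we check that $\overline{\calC}$ is permutation-invariant. For $\sigma \in S_n$ and $\sigma' \in S_m$ we want
\begin{equation*}
    \overline{\calC}\big(\calP(\sigma) X \calP(\sigma)^\dagger\big) = \overline{\calC}(X) = \calP(\sigma')\, \overline{\calC}(X)\, \calP(\sigma')^\dagger .
\end{equation*}
Since $\calP$ is a representation, $\calP(\pi)\calP(\sigma) = \calP(\pi\sigma)$, and $\pi \mapsto \pi\sigma$ is a bijection of $S_n$, so the input sum is unchanged. The same reindexing $\tau \mapsto \sigma'\tau$ on the output sum gives the second equality. Equivalently, the Choi state $J(\overline{\calC})$ commutes with $\calP(\pi)^{T} \otimes \calP(\tau)$ for all $\pi,\tau$. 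This is the form of invariance used later in the Schur-basis decomposition.

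For the overlap identity, take input $\rho^{\otimes n}$. It is fixed by conjugation by every $\calP^{(d,n)}(\pi)$, so the inner argument is always just $\rho^{\otimes n}$. By cyclicity of the trace,
\begin{equation*}
    \tr\big(\calP(\tau)\, Y\, \calP(\tau)^\dagger \cdot P^{\otimes m}\big) = \tr\big(Y \cdot \calP(\tau)^\dagger P^{\otimes m} \calP(\tau)\big) = \tr\big(Y \cdot P^{\otimes m}\big),
\end{equation*}
because $P^{\otimes m}$ commutes with all permutations. Every term in the average therefore equals $\tr(\calC(\rho^{\otimes n}) \cdot P^{\otimes m})$, and so does the average.

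Since the identity holds pointwise in $P$, the minimum overlap and the Haar-average overlap over $P$ agree for $\calC$ and $\overline{\calC}$. There is no real obstacle here. The only point needing care is that the group-average reindexing uses the homomorphism property of $\calP$, which holds by its definition in \Cref{sec:symmetric_subspace}.
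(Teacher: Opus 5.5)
Your proposal is correct and follows essentially the same argument as the paper: symmetrize $\calC$ by permutations on both the input ($S_n$) and output ($S_m$), observe that $\rho^{\otimes n}$ is fixed by the input twirl, and use cyclicity of the trace together with the permutation invariance of $P^{\otimes m}$. Your additional checks (that the twirl is again a channel, the explicit reindexing, and the commutation of the Choi state with $\calP(\pi)^{T} \otimes \calP(\tau)$) simply make explicit details that the paper leaves implicit.
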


\begin{proof}
    To obtain $\overline{\calC}$ from $\calC$, conjugate the input and output by random permutations of the appropriate size. That is:
    \begin{equation*}
        \overline{\calC}(X) \coloneq \frac{1}{m! \cdot n!} \sum_{\substack{\pi \in S_{m}\\ \sigma \in S_n}} \calP(\pi) \cdot \Big( \calC \big( \calP(\sigma) \cdot X \cdot \calP(\sigma)^\dagger \big)\Big) \cdot \calP(\pi)^\dagger.
    \end{equation*}
    Note that any permutation of the input and output registers can be absorbed into $\sigma$ and $\pi$ respectively, so that $\overline{\calC}$ is permutation-invariant. For inputs of the form $X = \rho^{\otimes n}$, we have
        $\calP(\sigma) \cdot \rho^{\otimes n} \cdot \calP(\sigma)^\dagger = \rho^{\otimes n}, $
    so that $\overline{\calC}(\rho^{\otimes n})$ simplifies:
    \begin{equation*}
        \overline{\calC}(\rho^{\otimes n}) =\frac{1}{m! } \sum_{\pi \in S_{m}} \calP(\pi) \cdot  \calC ( \rho^{\otimes n} )\cdot \calP(\pi)^\dagger.
    \end{equation*}
    However, since 
        $\calP(\pi)^\dagger \cdot P^{\otimes m} \cdot \calP(\pi) = P^{\otimes m}$
    as well, we have 
    \begin{equation*}
        \tr\Big( \overline{\calC}(\rho^{\otimes n}) \cdot P^{\otimes m} \Big) = \frac{1}{m!} \sum_{\pi \in S_m} \tr\Big( \calC(\rho^{\otimes n}) \cdot \big(\calP(\pi)^\dagger \cdot P^{\otimes m} \cdot \calP(\pi)\big) \Big) = \tr\Big( \calC(\rho^{\otimes n}) \cdot P^{\otimes m} \Big).
    \end{equation*}
    This completes the proof. 
\end{proof}

Therefore, the best possible worst-case overlap is attained by a channel which is permutational-invariant. So, we can assume an overlap-optimal cloner has this property.

\paragraph{Step 3.} We now show that the cloning performance of a permutation-invariant channel can be upper bounded cleanly.

\begin{lemma} \label{lem:symmetric_channels_perfomance_upper_bounded}
    Let $\calC: \states(\C^d)^{\otimes n} \to \states(\C^d)^{\otimes m}$ be any permutation-invariant channel. Then we have
    \begin{equation*}
        \min_P \Big[ \tr\Big( \calC(\rho^{\otimes n}) \cdot P^{\otimes m} \Big) \Big] \leq \sum_{\substack{\lambda \vdash n \\ \ell(\lambda) \leq r}} \frac{\dim(\calP_\lambda) \cdot \dim(\calQ^d_\lambda)}{r^n} \cdot \max_{ \substack{\mu \vdash m \\ \ell(\mu) \leq r}} \norm{ \int_P q_\lambda(P)^T \otimes q_\mu(P) \cdot \dP }_\infty.
    \end{equation*}
\end{lemma}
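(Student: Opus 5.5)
The plan is to bound the worst-case overlap by its Haar average over projectors, pass to the Choi state of $\calC$, and then use Schur--Weyl block structure together with trace preservation and permutation invariance.

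\textbf{Step A: average and Choi form.} Bound the minimum over $P$ by the average over the Haar-induced measure $\dP$ on rank-$r$ projectors. Write the channel via its Choi operator $J(\calC)$, with transposes taken in the Schur basis, so that
\begin{equation*}
\int \tr\big(\calC(\rho^{\otimes n}) P^{\otimes m}\big)\,\dP = \frac{1}{r^n}\tr\Big(J(\calC)\cdot \int P^{T,\otimes n}\otimes P^{\otimes m}\,\dP\Big).
\end{equation*}

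\textbf{Step B: block decomposition.} In the Schur basis we have $P^{\otimes n}\cong \bigoplus_\lambda I_{\calP_\lambda}\otimes q^d_\lambda(P)$. Since $q^d_\lambda(P)=0$ unless $\ell(\lambda)\le r$, only $\lambda,\mu$ of length at most $r$ survive. The averaged operator is therefore
\begin{equation*}
\bigoplus_{\lambda,\mu} I_{\calP_\lambda}\otimes I_{\calP_\mu}\otimes M_{\lambda\mu},\qquad M_{\lambda\mu}=\int q_\lambda(P)^T\otimes q_\mu(P)\,\dP,
\end{equation*}
where each $M_{\lambda\mu}\succeq 0$ because it is an average of tensor products of PSD operators. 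Permutation invariance of $\calC$ means $J(\calC)$ commutes with $p_\lambda(\sigma)^{\bar{}}\otimes p_\mu(\pi)$ on each block. Hence only the compression of $J$ onto the diagonal $(\lambda,\mu)$ blocks matters, and that compression may be averaged over $S_n\times S_m$ without changing the trace. By Schur's lemma the averaged block is $\frac{I_{\calP_\lambda}}{\dim\calP_\lambda}\otimes\frac{I_{\calP_\mu}}{\dim\calP_\mu}\otimes J_{\lambda\mu}$, with $J_{\lambda\mu}$ PSD on $\calQ_\lambda\otimes\calQ_\mu$, up to the trace normalization. The objective then becomes
\begin{equation*}
\frac{1}{r^n}\sum_{\lambda,\mu}\tr\big(J_{\lambda\mu}M_{\lambda\mu}\big)\le \frac{1}{r^n}\sum_\lambda \sum_\mu \tr(J_{\lambda\mu})\,\norm{M_{\lambda\mu}}_\infty,
\end{equation*}
where $J_{\lambda\mu}$ now denotes the full $(\lambda,\mu)$ block of $J$ traced over the $\calP$ factors.

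\textbf{Step C: trace preservation.} Trace preservation gives $\tr_{\mathrm{out}} J(\calC)=I_{\mathrm{in}}$. Restricting to the input block $\lambda$ yields
\begin{equation*}
\sum_\mu \tr(J_{\lambda\mu})=\tr(I_{\calP_\lambda}\otimes I_{\calQ^d_\lambda})=\dim\calP_\lambda\dim\calQ^d_\lambda.
\end{equation*}
Bounding $\sum_\mu \tr(J_{\lambda\mu})\norm{M_{\lambda\mu}}_\infty$ by $\big(\max_\mu\norm{M_{\lambda\mu}}_\infty\big)\sum_\mu\tr(J_{\lambda\mu})$ then gives exactly the claimed bound. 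One detail to handle is that $\lambda$ need only satisfy $\ell(\lambda)\le d$ a priori, but those with $\ell(\lambda)>r$ contribute zero.

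\textbf{Main obstacle.} The delicate point is the bookkeeping in Steps B--C. The Schur-basis transpose must be compatible with the block decomposition: it maps $q_\lambda(P)$ to $q_\lambda(P)^T$ and $I_{\calP_\lambda}$ to itself. It must also be checked that off-diagonal blocks between different $\lambda$ (or $\mu$) sectors do not contribute. This holds because the averaged operator is block diagonal, so only the diagonal blocks of $J$ enter the trace, and those blocks are PSD. In fact the $S_n\times S_m$ twirl is not strictly needed: $\tr\big(J_{\mathrm{block}}(I\otimes I\otimes M)\big)\le \norm{M}_\infty\tr(J_{\mathrm{block}})$ already holds for any PSD block $J_{\mathrm{block}}$, so permutation invariance serves only to simplify the presentation.
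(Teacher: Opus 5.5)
Your proposal is correct and follows essentially the same route as the paper. Both arguments replace the minimum by the Haar average, pass to the Choi operator with Schur-basis transposes, and block-decompose $\int P^{T,\otimes n}\otimes P^{\otimes m}\,\dP$ as $\bigoplus_{\lambda,\mu} I_{\calP_\lambda}\otimes I_{\calP_\mu}\otimes M_{\lambda\mu}$; each block is then bounded by $\norm{M_{\lambda\mu}}_\infty$, and trace preservation gives $\sum_\mu \tr(J_{\lambda\mu})=\dim(\calP_\lambda)\dim(\calQ^d_\lambda)$. Your closing remark is also right: because the test operator is already block-diagonal with identity on the $\calP$ factors, only the PSD diagonal blocks of $J(\calC)$ enter, so the bound holds for arbitrary channels and the paper's permutation-invariance reduction (Step 2) is not needed for this lemma.
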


\begin{proof}
We begin by passing to Choi-Jamiołkowski (CJ) states:
\begin{equation*}
    \tr \Big( \calC (\rho^{\otimes n}) \cdot P^{\otimes m} \Big) = \tr\Big( J({\calC}) \cdot \big(\rho^{T, \otimes n} \otimes P^{\otimes m} \big)\Big) = \frac{1}{r^n} \cdot \tr\Big( J({\calC}) \cdot \big(P^{T, \otimes n} \otimes P^{\otimes m}\big) \Big),
\end{equation*}
where $J({\calC})$ is the CJ state corresponding to $\calC$. Now, since $\calC$'s \emph{worst-case overlap} is at most its \emph{average-case} overlap,
\begin{equation}
    \min_P \Big[ \tr\Big( \calC(\rho^{\otimes n}) \cdot P^{\otimes m} \Big) \Big] \leq \frac{1}{r^n} \cdot \tr( J({\calC}) \cdot \int_P P^{T,\otimes n} \otimes P^{\otimes m} \cdot \dP ). \label{eq:trace_choi_projector_integral}
\end{equation}
Here, $\dP$ is the measure on projectors induced by the Haar measure. We will rewrite the final trace in the Schur basis, starting with
\begin{equation*}
    \Big( \Schur^{(d,n),*}  \otimes \Schur^{(d,m)}\Big) \cdot \Big( P^{T,\otimes n} \otimes P^{\otimes m} \Big) \cdot \Big( \Schur^{(d,n),*}  \otimes \Schur^{(d,m)}\Big)^\dagger = \bigoplus_{\substack{\lambda \vdash n \\ \ell(\lambda) \leq r}}\bigoplus_{\substack{\mu \vdash m \\ \ell(\mu) \leq r}} I_{\calP_\lambda} \otimes I_{\calP_\mu} \otimes q_\lambda(P)^T \otimes q_\mu(P).
\end{equation*}
In the same basis, since $\calC$ is permutation invariant, $J({\calC})$ commutes with permutations on both the input and output registers, and is therefore proportional to the identity on the permutation registers by Schur's lemma. Thus, we have
\begin{equation*}
    \Big( \Schur^{(d,n),*}  \otimes \Schur^{(d,m)}\Big) \cdot  J(\calC) \cdot \Big( \Schur^{(d,n),*}  \otimes \Schur^{(d,m)}\Big)^\dagger = \bigoplus_{\substack{\lambda \vdash n \\ \mu \vdash m}} I_{\calP_\lambda} \otimes \frac{I_{\calP_\mu}}{\dim(\calP_\mu)} \otimes J(\calC)_{\lambda \mu},
\end{equation*}
with $J(\calC)_{\lambda \mu}$ some PSD matrix on $\calQ^d_\lambda \otimes \calQ^d_\mu$. We have inserted the factor of $\dim(\calP_\mu)$ for convenience: now trace-preservation of $\calC$ is equivalent to $\sum_{\mu} \tr_{\calQ^d_\mu} (J(\calC)_{\lambda \mu}) = I_{\calQ^d_\lambda}$ by \cite[Theorem 2.36, Item 3]{Wat18}. Hence we obtain
\begin{align*}
    \eqref{eq:trace_choi_projector_integral} & = \sum_{\substack{\lambda \vdash n \\ \ell(\lambda) \leq r}} \sum_{\substack{\mu \vdash m \\ \ell(\mu) \leq r}} \frac{\dim(\calP_\lambda)}{r^n} \cdot \tr \Big( J(\calC)_{\lambda \mu} \cdot \int_P q_\lambda(P)^T \otimes q_\mu(P) \cdot \dP\Big) \\
    & \leq \sum_{\substack{\lambda \vdash n \\ \ell(\lambda) \leq r}} \sum_{\substack{\mu \vdash m \\ \ell(\mu) \leq r}} \frac{\dim(\calP_\lambda)}{r^n} \cdot \tr \Big( J(\calC)_{\lambda \mu}\Big) \cdot \max_{\substack{\mu \vdash m \\ \ell(\mu) \leq r}} \norm{\int_P q_\lambda(P)^T \otimes q_\mu(P) \cdot \dP}_\infty \\
    & \leq \sum_{\substack{\lambda \vdash n \\ \ell(\lambda) \leq r}}  \frac{\dim(\calP_\lambda)}{r^n} \cdot \tr\big( I_{\calQ^d_\lambda}\big) \cdot \max_{\substack{\mu \vdash m \\ \ell(\mu) \leq r}} \norm{\int_P q_\lambda(P)^T \otimes q_\mu(P) \cdot \dP}_\infty \\
    & \leq  \sum_{\substack{\lambda \vdash n \\ \ell(\lambda) \leq r}} \frac{\dim(\calP_\lambda) \cdot \dim(\calQ^d_\lambda)}{r^n}  \cdot \max_{\substack{\mu \vdash m \\ \ell(\mu) \leq r}} \norm{\int_P q_\lambda(P)^T \otimes q_\mu(P) \cdot \dP}_\infty.
\end{align*}
This completes the proof.
\end{proof}

We can also reinterpret the bound of \Cref{lem:symmetric_channels_perfomance_upper_bounded} in terms of weak Schur sampling. Given a state of the form $P/r$, the probability of sampling $\blambda \sim \WSS_n(1^r/r)$ is $\dim(\calP_\lambda) \cdot \dim(\calQ^r_\lambda)/r^n$ for any $\lambda$ with $\ell(\lambda) \leq r$, so that the bound from this lemma becomes:
\begin{equation} \label{eq:bound_in_terms_of_WSS}
        \min_P \Big[ \tr\Big( \calC(\rho^{\otimes n}) \cdot P^{\otimes m} \Big) \Big] \leq \E_{\blambda} \bigg[ \frac{s_{\blambda}(1^d)}{s_{\blambda}(1^r)} \cdot \max_{ \substack{\mu \vdash m \\ \ell(\mu) \leq r}} \norm{ \int_P q_{\blambda}(P)^T \otimes q_\mu(P) \cdot \dP }_\infty \bigg].
\end{equation}

\paragraph{Step 4.} We now show that the choice of $\mu$ maximizing this upper bound, given $\lambda$, is $\mu = \lambda + k \cdot e_1$. Our main tool for this step is the following bound (in the PSD order) on the partially-transposed integral. We defer the proof to \Cref{sec:partially_transposed_integral_bound}, since it is relatively involved.

\begin{lemma}[The partially-transposed integral bound] \label{lem:transposed_integral}
    Let $\lambda \in \Par(n,d)$ and $\mu \in \Par(m,d)$, with $\ell(\lambda) \leq r$ and $\ell(\mu) \leq r$. Then we have
    \begin{equation*}
        \int q_{\lambda}^d(P)^T \otimes q_{\mu}^d(P) \cdot \dP \preceq \frac{s_{\lambda \cup \mu}(1^r)}{s_{\lambda \cup \mu}(1^d)} \cdot I_{\calQ^d_\lambda} \otimes I_{\calQ^d_\mu}. 
    \end{equation*}
\end{lemma}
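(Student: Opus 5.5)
We follow the route sketched in the technical overview: instead of bounding the partially-transposed integral directly, we prove a stronger statement about the un-transposed integral. For $\lambda,\mu$ with $\ell(\lambda),\ell(\mu)\le r$, set
\begin{equation*}
    \Delta^{(d,r)}_{\lambda\mu} \coloneq \frac{s_{\lambda\cup\mu}(1^r)}{s_{\lambda\cup\mu}(1^d)}\, I_{\calQ^d_\lambda}\otimes I_{\calQ^d_\mu} - \int_P q^d_\lambda(P)\otimes q^d_\mu(P)\,\dP .
\end{equation*}
We will show that $\Delta^{(d,r)}_{\lambda\mu}\in\Sep_+(\calQ^d_\lambda:\calQ^d_\mu)$. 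The lemma then follows at once. Taking the partial transpose in the Schur basis of the first factor maps $\int q_\lambda(P)\otimes q_\mu(P)\,\dP$ to $\int q_\lambda(P)^T\otimes q_\mu(P)\,\dP$ and fixes the identity term. It also sends each product $A\otimes B$ with $A,B\succeq 0$ to $A^T\otimes B$, which is again PSD. Hence $(\Delta^{(d,r)}_{\lambda\mu})^\Gamma\succeq0$, and this is exactly the claimed operator inequality.

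The separability claim is proved by induction on $r$, with $d\ge r$ arbitrary; the inductive step lowers $d$ and $r$ together.

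\textbf{Base case $r=1$.} Here $\lambda$ and $\mu$ are single rows, so one contains the other. Suppose $\lambda\subseteq\mu$. Then $q_\lambda(P)$ is a projector (indeed $q_\lambda(P)^2=q_\lambda(P^2)=q_\lambda(P)$), so
\begin{equation*}
    \int q_\lambda(P)\otimes q_\mu(P)\,\dP = \int \big(I-(I-q_\lambda(P))\big)\otimes q_\mu(P)\,\dP .
\end{equation*}
By Schur's lemma, $\int q_\mu(P)\,\dP=\tfrac{s_\mu(1^r)}{s_\mu(1^d)}I$. Therefore $\Delta=\int (I-q_\lambda(P))\otimes q_\mu(P)\,\dP$, which is a mixture of products of PSD operators and hence lies in $\Sep_+$. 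The case $\mu\subseteq\lambda$ is symmetric. This argument works verbatim whenever $\lambda\subseteq\mu$ or $\mu\subseteq\lambda$, for any $r$.

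\textbf{Inductive step.} We use the branching rule $\calQ^d_\lambda\downarrow_{U(d-1)}\cong\bigoplus_{\lambda'\preceq\lambda}\calQ^{d-1}_{\lambda'}$. For an operator $X$ on $\calQ^{d-1}_{\lambda'}\otimes\calQ^{d-1}_{\mu'}$, define
\begin{equation*}
    \Lift(X)\coloneq\int_U \big(q_\lambda(U)\otimes q_\mu(U)\big)\,\iota(X)\,\big(q_\lambda(U)\otimes q_\mu(U)\big)^\dagger\,\dU ,
\end{equation*}
where $\iota$ is the embedding given by the branching rule. This map sends $\Sep_+$ into $\Sep_+$, and its outputs are $U(d)$-invariant.

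To set up the recursion, write a Haar-random rank-$r$ projector as $P=U(P_0\oplus 1)U^\dagger$, where $P_0$ is a random rank-$(r-1)$ projector in dimension $d-1$ and the last coordinate is a fixed vector $e_d$. In the Gelfand--Tsetlin basis, $q_\lambda(P_0\oplus1)$ is block diagonal over the interlacing $\lambda'\preceq\lambda$. Block $\lambda'$ carries $q^{d-1}_{\lambda'}(P_0)$, and the weight $1^{|\lambda|-|\lambda'|}=1$ is automatic. Two constraints apply: $P_0$ has rank $r-1$, which forces $\ell(\lambda')\le r-1$, and the projector structure forces $\lambda'$ to differ from $\lambda$ by a horizontal strip. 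This gives
\begin{equation*}
    \int q_\lambda(P)\otimes q_\mu(P)\,\dP=\sum_{\lambda',\mu'}\Lift\Big(\int q^{d-1}_{\lambda'}(P_0)\otimes q^{d-1}_{\mu'}(P_0)\,\dP_0\Big).
\end{equation*}
Cross terms between different GT blocks vanish, since $P_0\oplus1$ is block diagonal.

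Now substitute the induction hypothesis for each $(\lambda',\mu')$ with $(d-1,r-1)$:
\begin{equation*}
    \int q^{d-1}_{\lambda'}(P_0)\otimes q^{d-1}_{\mu'}(P_0)\,\dP_0 = c_{\lambda'\mu'}\, I\otimes I - \Delta^{(d-1,r-1)}_{\lambda'\mu'}, \qquad c_{\lambda'\mu'}=\frac{s_{\lambda'\cup\mu'}(1^{r-1})}{s_{\lambda'\cup\mu'}(1^{d-1})}.
\end{equation*}
This expresses $\Delta^{(d,r)}_{\lambda\mu}$ as a sum of two kinds of terms:
\begin{itemize}
    \item $\sum\Lift(\Delta^{(d-1,r-1)}_{\lambda'\mu'})$, which is separable by induction and the separability-preserving property of $\Lift$;
    \item a remainder $c\,I - \sum_{\lambda',\mu'} c_{\lambda'\mu'}\,\Lift(\Pi_{\lambda'}\otimes\Pi_{\mu'})$, where $\Pi_{\lambda'}$ denotes the projector onto the $\lambda'$ block.
\end{itemize}

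\textbf{Main obstacle.} It remains to show that this remainder is separable. We will use $\sum_{\lambda'}\Pi_{\lambda'}=I$ to rewrite the remainder as
\begin{equation*}
    \sum_{\lambda',\mu'}\big(c-c_{\lambda'\mu'}\big)\,\Lift(\Pi_{\lambda'}\otimes\Pi_{\mu'}) + (\text{terms from }\lambda'\text{ not meeting the length constraint}).
\end{equation*}
Each $\Lift(\Pi_{\lambda'}\otimes\Pi_{\mu'})$ is a product-state average and hence separable. So it suffices that the coefficients are nonnegative, i.e.\ that $c_{\lambda'\mu'}\le c$ for every pair that occurs, or else that any negative coefficients can be absorbed into positive ones. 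Checking this needs hook-content comparisons of the type in \Cref{lem:ratio_of_s's_subset_diagrams}, applied to $\lambda\cup\mu$ versus $\lambda'\cup\mu'$ with one fewer row and one fewer dimension.

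If some coefficients turn out to be negative, we will first try to regroup them pairwise, using the identity $\int q_\mu(P)\,\dP=\tfrac{s_\mu(1^r)}{s_\mu(1^d)}I$ restricted to each block, so that the total stays a nonnegative combination of lifted product projectors. We expect this coefficient bookkeeping to be the delicate, most technical part of the proof.
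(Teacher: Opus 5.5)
Your reduction to separability of $\Delta_{\lambda\mu}$, the partial-transpose step, the base case, and the symmetric recursion $S^{(d,r)}_{\lambda\mu}=\sum_{\lambda'\preceq\lambda,\,\mu'\preceq\mu}\Lift^{\lambda\mu}_{\lambda'\mu'}\big(S^{(d-1,r-1)}_{\lambda'\mu'}\big)$ are all correct. The inductive step has a genuine gap, though: the sufficient condition you propose for the remainder is false.

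Write $c=s_{\lambda\cup\mu}(1^r)/s_{\lambda\cup\mu}(1^d)$. Your remainder is $\sum_{\lambda',\mu'}(c-c_{\lambda'\mu'})\,\Lift^{\lambda\mu}_{\lambda'\mu'}(I_{\lambda'}\otimes I_{\mu'})$. The pair $(\lambda^-,\mu^-)$, obtained by deleting first rows, always occurs, with both lengths at most $r-1$. The hook-content formula gives $c=c_{\lambda^-\mu^-}\cdot\prod_{j=0}^{(\lambda\cup\mu)_1-1}\frac{r+j}{d+j}$. Hence $c<c_{\lambda^-\mu^-}$ whenever $r<d$ and $\lambda\cup\mu\neq\varnothing$, so a negative coefficient is unavoidable in this decomposition. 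For instance, with $d=3$, $r=2$ and $\lambda=\mu=(1)$, one has $c=2/3$ but $c_{\varnothing\varnothing}=1$.

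The operators $\Lift^{\lambda\mu}_{\lambda'\mu'}(I_{\lambda'}\otimes I_{\mu'})$ for different pairs are different twirled product projectors. Cancelling a negative multiple of one against positive multiples of others would have to preserve separability, not merely positivity. So ``regrouping pairwise'' is not yet an argument, and you do not say which identity would make it work.

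The missing idea is to make the recursion asymmetric. Assume $\lambda_1\ge\mu_1$ and keep only the single block $\lambda'=\lambda^-$ in the first factor. Averaging over $W\in U(r)$ acting on the support of the fixed projector $P_0$ gives $\int W_\lambda\,\Pi^{(r)}_{\lambda^-\subseteq\lambda}\,W_\lambda^\dagger\,\mathrm{d}W=\frac{s_{\lambda^-}(1^{r-1})}{s_\lambda(1^r)}\,\Pi^{(r)}_\lambda$. This yields the refinement $S^{(d,r)}_{\lambda\mu}=\frac{s_\lambda(1^r)}{s_{\lambda^-}(1^{r-1})}\sum_{\mu'\preceq\mu}\Lift^{\lambda\mu}_{\lambda^-\mu'}\big(S^{(d-1,r-1)}_{\lambda^-\mu'}\big)$. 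In other words, each $\lambda'$-slice of your double sum is by itself proportional to $S^{(d,r)}_{\lambda\mu}$.

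Combine this with $\sum_{\mu'}\Lift^{\lambda\mu}_{\lambda^-\mu'}(I_{\lambda^-}\otimes I_{\mu'})=\frac{s_{\lambda^-}(1^{d-1})}{s_\lambda(1^d)}\,I_\lambda\otimes I_\mu$. Also use the first-row hook-content identity, which applies because $\lambda$ and $\lambda\cup\mu$ share their first row. Then the constant in front of each lifted identity becomes exactly $c_{\lambda^-\mu^-}$ rather than $c$. The remainder coefficients become $c_{\lambda^-\mu^-}-c_{\lambda^-\mu'}$. These are nonnegative by \Cref{lem:ratio_of_s's_subset_diagrams}, since $\mu^-\subseteq\mu'$ for every $\mu'\preceq\mu$.
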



We now analyze the products that appear in this upper bound. 

\begin{lemma} \label{lem:best_mu_is_lambda+ke1}
    Fix $\lambda \in \Par(n,d)$ with $\ell(\lambda) \leq r$. Then the ratio
    $s_{\lambda \cup \mu}(1^r)/s_{\lambda \cup \mu}(1^d)$
    is maximized over all $\mu \in \Par(n+k,d)$ by the choice $\mu = \lambda + k \cdot e_1$. For this $\mu$, we have
    \begin{equation*}
         \frac{s_{\lambda \cup \mu}(1^r)}{s_{\lambda \cup \mu}(1^d)} = \frac{s_{\lambda}(1^r)}{s_{\lambda }(1^d)} \cdot \prod_{j=1}^{k} \frac{r + \lambda_1 + j - 1}{d + \lambda_1 + j - 1}.
    \end{equation*}
\end{lemma}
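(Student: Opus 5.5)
The plan is to use the hook-content formula in the form already used in \Cref{lem:ratio_of_s's_subset_diagrams}. For any diagram $\tau \supseteq \lambda$ with $\ell(\tau) \leq r$, it gives
\begin{equation*}
    \frac{s_{\tau}(1^r)}{s_{\tau}(1^d)} = \frac{s_{\lambda}(1^r)}{s_{\lambda}(1^d)} \cdot \prod_{\Box \in \tau \setminus \lambda} \frac{r + c(\Box)}{d + c(\Box)}.
\end{equation*}
If $\ell(\tau) > r$, the ratio is zero, so such $\tau$ are trivially dominated. Put $\tau = \lambda \cup \mu$. Because $|\mu| = n+k$ and $|\lambda| = n$, we have $\tau \supseteq \lambda$ and $|\tau \setminus \lambda| \geq |\mu| - |\lambda \cap \mu| \geq k$. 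Each factor $f(c) = (r+c)/(d+c)$ lies in $[0,1]$ for boxes with $\ell(\tau) \leq r$, since then $c > -r$. It is also increasing in $c$, because $r \leq d$.

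The argument then proceeds in three steps. First, discard all but $k$ boxes of $\tau \setminus \lambda$. This only increases the product, since the factors are at most $1$, so the ratio for $\tau$ is at most $\frac{s_\lambda(1^r)}{s_\lambda(1^d)}$ times a product of $k$ factors $f(c)$ over the chosen boxes. Second, bound the contents of the new boxes. The cells of $\tau \setminus \lambda$ are distinct cells outside $\lambda$, and the key claim is that the $j$-th largest content among them is at most $\lambda_1 + j - 1$. Third, apply monotonicity of $f$: the product is then at most $\prod_{j=1}^k f(\lambda_1 + j - 1)$, which is exactly the value claimed for $\mu = \lambda + k e_1$. For that $\mu$, $\lambda \cup \mu = \mu$, and the added boxes sit in row $1$, columns $\lambda_1 + 1, \dots, \lambda_1 + k$, with contents $\lambda_1, \dots, \lambda_1 + k - 1$. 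So the stated formula follows directly from the displayed identity.

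The main obstacle is the content bound in the second step. We need that no skew shape $\tau/\lambda$ contains $j$ distinct cells all of content at least $\lambda_1 + j - 1$, beyond what the first row allows. I would argue it like this. In any diagram, a given content value $c$ occurs at most once on each diagonal of $\tau/\lambda$ that a valid shape can reach. More usefully, the cells of $\tau \setminus \lambda$ with content $\geq \lambda_1$ must form, along each diagonal, a chain starting at row $1$. Within one diagonal of content $c \geq \lambda_1$, the skew shape $\tau/\lambda$ contains at most one such cell. Indeed, cell $(i, i+c)$ lying outside $\lambda$ in row $i \geq 2$ forces $(i-1, i-1+c) \in \tau$, and that cell has content $c$ too. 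Since $\lambda_{i-1} \leq \lambda_1 \leq c < i - 1 + c$, it is also outside $\lambda$, which would put two cells on one diagonal. So I would instead count row-$1$ dominance: any cell of content $c \geq \lambda_1$ in $\tau$ forces the row-$1$ cell $(1, 1+c) \in \tau \setminus \lambda$. Therefore the cells of $\tau \setminus \lambda$ with content $\geq \lambda_1$ have distinct contents, and each content value $c$ present implies that all of $\lambda_1, \dots, c$ are present in row $1$. Sorting the contents in decreasing order then gives the required bound. Finally, the factors for cells with content $< \lambda_1$ are at most $f(\lambda_1 - 1) < f(\lambda_1 + j - 1)$, which completes the comparison.
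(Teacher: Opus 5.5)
Your overall route is the paper's: the hook-content formula reduces the ratio to $\frac{s_\lambda(1^r)}{s_\lambda(1^d)}\prod_{\text{cells of } \mu\setminus\lambda} f(c)$, with $f(c)=(r+c)/(d+c)\in(0,1]$ increasing. You then keep $k$ factors, bound their contents, and evaluate at $\mu=\lambda+k e_1$, and that evaluation is correct. The gap is in the content bound, which you rightly call the crux but state and justify incorrectly.

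\begin{itemize}
\item The claim as stated is false. You claim the $j$-th \emph{largest} content among the cells of $\tau\setminus\lambda$ is at most $\lambda_1+j-1$. This fails on the extremal shape $\tau=\lambda+k e_1$ itself: its largest new content is $\lambda_1+k-1>\lambda_1$ once $k\geq 2$.
\item The supporting assertion is also false. You assert that the cells of $\tau\setminus\lambda$ with content $\geq\lambda_1$ have distinct contents. For $\lambda=(1,1)$ and $\mu=(3,3)$, the cells $(1,2)$ and $(2,3)$ both lie in $\tau\setminus\lambda$ with content $1=\lambda_1$. Your own ``Indeed'' argument shows that each diagonal of content $c\geq\lambda_1$ meets $\tau\setminus\lambda$ in a chain starting at row $1$. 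Such a chain can have several cells, so that argument refutes your ``at most one cell'' claim rather than proving it. The later ``Therefore \dots\ distinct contents'' does not follow from row-$1$ dominance.
\item Even granting distinctness, the matching is wrong. Pairing contents sorted in decreasing order with the bounds $\lambda_1,\lambda_1+1,\dots$ pairs the largest content with the smallest bound.
\item You never say which $k$ cells to keep, and this choice matters.
\end{itemize}

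The correct statement concerns the $j$-th \emph{smallest} content. Your row-$1$ observation proves it once applied properly. Every cell of $\tau$ has content at most $\tau_1-1$. Fix $j\leq|\tau\setminus\lambda|$.
\begin{itemize}
\item If $\tau_1\geq\lambda_1+j$, then $(1,\lambda_1+1),\dots,(1,\lambda_1+j)$ are $j$ new cells, each with content at most $\lambda_1+j-1$.
\item Otherwise $\tau_1\leq\lambda_1+j-1$, and all of the at least $j$ new cells have content at most $\lambda_1+j-2$.
\end{itemize}
Either way, the $j$-th smallest new content is at most $\lambda_1+j-1$. Keeping the $k$ new cells of smallest content and bounding the other factors by $1$ then gives $\prod_{j=1}^k f(\lambda_1+j-1)$.

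The paper argues this more directly with an addition order. List the cells of $\mu\setminus\lambda$ so that each partial union with $\lambda$ is a Young diagram. The $j$-th added cell has content at most the current first-row length, which is at most $\lambda_1+j-1$. Keep the first $k$ factors and bound the rest by $1$.
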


\begin{proof}
    For fixed $\lambda$, consider the set of products of the form
    \begin{equation*}
        \frac{s_\lambda(1^d)}{s_\lambda(1^r)} \cdot \frac{s_{\lambda \cup \mu}(1^r)}{s_{\lambda \cup \mu}(1^d)}
    \end{equation*}
    for some $\mu \vdash n+k$. The denominator is nonzero since $\ell(\lambda) \leq r$. We can also assume $\ell(\mu) \leq r$, since otherwise $s_{\lambda \cup \mu}(1^r) = 0$. We first show that this product is maximized when $\mu = \lambda + k \cdot e_1$. Using the hook-content formula, Eq. \eqref{eq:hook_content_formula}, we have
    \begin{equation*}
        \frac{s_\lambda(1^d)}{s_\lambda(1^r)} \cdot \frac{s_{\lambda \cup \mu}(1^r)}{s_{\lambda \cup \mu}(1^d)} = \Big( \prod_{\Box \in \lambda} \frac{d + c(\Box)}{r + c(\Box)} \Big) \cdot \Big( \prod_{\Box \in \lambda \cup \mu} \frac{r + c(\Box)}{d + c(\Box)} \Big) = \prod_{\Box \in \mu \setminus \lambda} \frac{r + c(\Box)}{d+ c(\Box)}. 
    \end{equation*}
    We note that (i) each term in the product is positive, since $c(\Box) > -r$ for any $\Box \in \mu \setminus \lambda$, (ii) each term is at most $1$, and (iii) each term is an increasing function of the content. 
    Now consider any ordering of the boxes in $\mu \setminus \lambda$, such that after adding the first $j$ boxes to $\lambda$, we still have a valid Young diagram, for all $j$. The box added in the $j$-th step, $\Box_j$, has content at most $\lambda_1+j-1$, obtained by inserting that box into the first row. For $j \in [k]$, we therefore bound the corresponding term as
    \begin{equation*}
        \frac{r+c(\Box_j)}{d+c(\Box_j)} \leq \frac{r + (\lambda_1+j-1)}{d + (\lambda_1+j-1)}.
    \end{equation*}
    For the remaining boxes, we bound this term trivially by $1$. Thus,
    \begin{equation*}
        \prod_{\Box \in \mu \setminus \lambda} \frac{r + c(\Box)}{d+ c(\Box)} \leq \prod_{j=1}^{k} \frac{r + (\lambda_1 + j - 1)}{d + (\lambda_1 + j - 1)}.
    \end{equation*}
    This is exactly the product attained by setting $\mu = \lambda + k \cdot e_1$, as in this case $|\mu \setminus \lambda| = k$, and the $k$ boxes are in the first row. This completes the proof. \qedhere
\end{proof}


\paragraph{Step 5.} Finally, we prove the proposition. From the previous two steps, we have that
\begin{equation*}
    \min_{P} \Big[ \tr\Big( \calC(\rho^{\otimes n}) \cdot P^{\otimes m} \Big) \Big] \leq \E_{\blambda} \bigg[ \prod_{j=1}^{k} \frac{r + \blambda_1 + j - 1}{d + \blambda_1 + j - 1} \bigg].
\end{equation*}
If we have a rank-$r$ projector cloning channel $\calC$ which clones to global fidelity at least $1-\eps$ for any $P$, then we moreover have
\begin{equation*}
    1 - \epsilon \leq \min_{P} \Big[ \tr\Big( \calC(\rho^{\otimes n}) \cdot P^{\otimes m} \Big) \Big] \leq \E_{\blambda} \bigg[ \prod_{j=1}^{k} \frac{r + \blambda_1 + j - 1}{d + \blambda_1 + j - 1} \bigg]
\end{equation*}
We will now show that this implies the desired lower bound on $n$, for certain ranges of the parameters $d, r, \eps$. 


\begin{lemma} \label{lem:lower_bound_from_WSS}
    Suppose $\blambda \sim \WSS_n(1^r/r)$, and we have the inequality
    \begin{equation*}
    1 - \eps \leq \E_{\blambda} \bigg[ \prod_{j=1}^{k}\frac{r+\blambda_1+j-1}{d + \blambda_1+j-1}\bigg] 
    \end{equation*}
    Then if $d \geq 2$, $r \leq d/2$ and $\epsilon \leq 1/16$, we must have 
    \begin{equation*}
        n \geq \frac{1}{8} \cdot \frac{krd}{\eps}.
    \end{equation*}
\end{lemma}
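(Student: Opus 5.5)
We plan to bound each factor in the product from above by an elementary expression in $\blambda_1$, and then control $\blambda_1$ using the fact that $\E[\blambda_1] \le n/r + O(\sqrt{n})$ under weak Schur sampling of the maximally mixed state on $\C^r$.

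\textbf{Bounding the product.} Each factor satisfies
\[
\frac{r+\blambda_1+j-1}{d+\blambda_1+j-1} = 1 - \frac{d-r}{d+\blambda_1+j-1} \le 1 - \frac{d/2}{d+\blambda_1+k}.
\]
Here we use $r \le d/2$ and $j \le k$. Writing $x = \frac{d/2}{d+\blambda_1+k}\in[0,1/2]$, the product is at most $(1-x)^k \le \frac{1}{1+kx}$. Alternatively, one can use $(1-x)^k \le 1 - \min(kx,1)/2$; either version works, but $(1-x)^k\le 1/(1+kx)$ is cleaner. Hence
\[
1-\eps \le \E\Big[\frac{1}{1+kx}\Big],
\]
so that
\[
\E\Big[\frac{kx}{1+kx}\Big]\le \eps.
\]
Since $t\mapsto t/(1+t)$ is concave, and $kx$ is a convex decreasing function of $\blambda_1$, the composition $\blambda_1\mapsto \frac{kx}{1+kx}$ is convex. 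Indeed,
\[
\frac{kx}{1+kx} = \frac{kd/2}{d+\blambda_1+k+kd/2},
\]
which is convex in $\blambda_1$. By Jensen's inequality,
\[
\eps \ge \frac{kd/2}{d+k+kd/2+\E[\blambda_1]}.
\]
This is the key simplification: we only need an upper bound on $\E[\blambda_1]$.

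\textbf{Bounding $\E[\blambda_1]$.} This is the main obstacle. We use the standard bound $\E[\blambda_1]\le n/r + 2\sqrt{n}$ for $\blambda\sim\WSS_n(1^r/r)$. This follows by viewing $\blambda_1$ as the longest weakly increasing subsequence of a uniform word in $[r]^n$ via RSK, or from the known bound $\E\blambda_1 \le n/r + 2\sqrt n$ (O'Donnell--Wright). If we prefer self-containment, a crude argument suffices: the longest weakly increasing subsequence is at most the maximum over $r-1$ breakpoints of sums of letter counts, and a union/second-moment bound gives $n/r + O(\sqrt{n})$ up to constants that the final constant $1/8$ must absorb. We would check that $\sqrt n \le n/r$ in the relevant regime. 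If instead $n < r^2$, then $\sqrt{n} \ge n/r$, and the claim $n \ge krd/(8\eps)$ must be handled separately. In that regime we use the trivial bound $\blambda_1\le n$, which gives
\[
\eps \ge \frac{kd/2}{d+k+kd/2+n},
\]
and hence $n \ge kd/(2\eps) - d - k - kd/2$. This is useful when $r$ is small, but it is not enough in general. So we really need $\E\blambda_1 \le 3n/r$, or similar, for all $n$. That holds since $\E\blambda_1 \le n/r+2\sqrt n$ and, when $n<r^2$, we can instead use $\blambda_1\le \sqrt{?}$. This case needs care, and we would aim for the uniform bound $\E\blambda_1\le n/r+2\sqrt{n}$ combined with case analysis.

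\textbf{Concluding.} Plugging in gives
\[
\eps\,(d+k+kd/2+n/r+2\sqrt n)\ge kd/2.
\]
Since $\eps\le 1/16$, the terms $\eps(d+k+kd/2)\le \tfrac{1}{16}\cdot\tfrac{3kd}{2}$ consume at most $3/16$ of the right-hand side, using $d\ge2$ and $k\ge1$, so that $d+k \le kd$. We are left with $\eps(n/r+2\sqrt n)\gtrsim kd/4$. Then either $n/r\ge 2\sqrt n$, giving $n \ge krd/(8\eps)$, or else $\sqrt n\ge kd/(16\eps)$, giving
\[
n \ge \frac{k^2d^2}{256\,\eps^2}.
\]
This is at least $krd/(8\eps)$ whenever $kd/\eps \gtrsim 32r$, which holds because $r\le d/2$ and $\eps\le1/16$. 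The constants would be tuned at this last step.
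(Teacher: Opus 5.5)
Your argument is essentially the paper's. You use the same bound $\prod_j(1-u_j)\le 1/(1+k\min_j u_j)$, the same Jensen step on a function of the form $\blambda_1\mapsto c/(a+\blambda_1)$, and the same input $\E[\blambda_1]\le n/r+2\sqrt n$ from O'Donnell--Wright. There are two differences. You substitute $d-r\ge d/2$ at the start, whereas the paper keeps $d-r$ until the end. And you handle the $2\sqrt n$ term by a two-case split, whereas the paper uses AM--GM, $2\sqrt n=2\sqrt{r\cdot n/r}\le r+n/r$, to get $\E[\blambda_1]\le 2n/r+r$ uniformly; the extra $r$ is then absorbed into the $O(krd)$ slack with no case analysis.

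Your final case split is valid. When $n/r<2\sqrt n$ you get $n\ge k^2d^2/(256\eps^2)\ge krd/(8\eps)$, because $32r\eps\le 2r\le d\le kd$. So the muddled middle paragraph should simply be deleted: the one where you say you need $\E[\blambda_1]\le 3n/r$ for all $n$ and leave the bound on $\blambda_1$ unfinished. That uniform bound is false when $n\ll r^2$ (e.g.\ $n=1$, $r\ge 4$), and you don't need it.

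Two small slips to fix:
\begin{itemize}
\item The inequality $d+k\le kd$ fails for $k=1$; you only have $d+k\le kd+1$. Since $kd\ge 2$, this still gives $d+k+kd/2\le 2kd$. The leftover is then at least $kd/2-kd/8\ge kd/4$, so your constant $1/8$ survives unchanged.
\item ``Concave increasing composed with convex decreasing is convex'' is not a valid rule. Your direct computation $\frac{kx}{1+kx}=\frac{kd/2}{d+\blambda_1+k+kd/2}$ settles convexity anyway.
\end{itemize}
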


\begin{proof}
We start by rewriting
\begin{equation*}
    \E_{\blambda} \bigg[ \prod_{j=1}^{k}\frac{r+\blambda_1+j-1}{d + \blambda_1+j-1}\bigg] =  \E \bigg[ \prod_{j=1}^{k} \Big( 1 - \frac{d-r}{d + \blambda_1 + j-1} \Big)\bigg].
\end{equation*}
We will now use the chain of inequalities: for $u_j \in [0,1]$,
\begin{equation*}
    \prod_{j=1}^k ( 1 - u_j ) \leq \prod_{j=1}^{k} \frac{1}{1+u_j} \leq \frac{1}{1 + \sum_{j=1}^k u_j} \leq \frac{1}{1 + k \cdot \min(u_j)}.
\end{equation*}
In particular, we will set $u_j \leftarrow (d-r)/(d + \blambda_1 + j - 1)$, noting that $\blambda_1 + j - 1 \geq 0$ so that $u_j \leq (d-r)/d < 1$ as we require. Note also that $\min u_j = (d-r)/(d+\blambda_1 + k - 1)$. Then we get
\begin{align*}
    \E \bigg[ \prod_{j=1}^{k} \Big( 1 - \frac{d-r}{d + \blambda_1 + j-1} \Big)\bigg] \leq \E \bigg[ \frac{1}{1 + \frac{k(d-r) }{ ( d + \blambda_1 + k-1)}}\bigg] & = 1 - \E \bigg[ \frac{k(d-r)}{d + \blambda_1 + k - 1 + k(d-r)}\bigg] \\
    & \leq 1 - \frac{k(d-r)}{d + \E[\blambda_1] + k-1 + k(d-r)}.
\end{align*}
The last inequality is Jensen's. We now use 
\begin{equation*}
    \E [ \blambda_1] \leq \frac{n}{r} + 2 \sqrt{n} \leq \frac{2n}{r} + r.
\end{equation*}
The first inequality is \cite[Theorem 5.2]{OW17a}, and the second is AM-GM, applied as $2\sqrt{n} = 2\sqrt{ r \cdot (n/r)}$. Then we have
\begin{equation*}
    1 - \epsilon \leq 1 - \frac{k(d-r)}{d + \E[\blambda_1] + k-1 + k(d-r)} \leq 1 - \frac{k(d-r)}{d + \frac{2n}{r}  + r + k-1 + k(d-r)}.
\end{equation*}
We can rearrange this inequality to find a lower bound on $n$:
\begin{equation*}
    n \geq \frac{1}{2} \cdot \bigg(\frac{kr(d-r)}{\eps} - r\Big(d + r + k -1 + k(d-r) \Big) \bigg) \geq \frac{1}{2} \cdot \bigg( \frac{kr (d-r)}{\eps} - 4krd\bigg).
\end{equation*}
Since $r \geq 1$, we need $d \geq 2$ for this bound to be nontrivial. Then, if $r \leq d/2$, we have $kr(d-r)/\eps \geq krd/2\eps$. Finally, because $\eps \leq 1/16$, we have $4krd \leq krd/4\eps$. For these parameter restrictions, we then get
\begin{equation*}
    n \geq \frac{1}{8} \cdot \frac{krd}{\eps}. \qedhere
\end{equation*}
\end{proof}
This lemma completes the proof of \Cref{prop:projector_lower_bound}.

\subsection{Sample-optimal cloning of mixed states}

Here, we combine our ingredients to conclude our main result, which we restate here for the reader's convenience. 


\begin{theorem}[The PCT cloner is sample-optimal; \Cref{thm:main_result_intro}, restated] 
    Suppose $\calC: \states(\C^d)^{\otimes n} \to \states(\C^d)^{\otimes (n+k)}$ is a cloning channel such that
    \begin{equation*}
         \Fid \big(  \calC(\rho^{\otimes n}) , \rho^{\otimes (n+k)}\big) \geq 1-\eps,
    \end{equation*}
    for all states $\rho$ of rank at most $r$. Then we must have $n = \Omega(krd/\eps)$, for $d \geq 2$, and $\epsilon \leq \epsilon_0$, where $\epsilon_0$ is a universal constant. Moreover, the PCT cloner achieves this guarantee with $n = O(krd/\eps)$ copies. Thus, the sample complexity of cloning $k$ additional copies of any unknown rank-$r$ input state $\rho^{\otimes n}$ to fidelity $1-\eps$ is $\Theta(krd/\eps)$. 
\end{theorem}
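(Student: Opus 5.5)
The theorem combines two results proved above, and the plan is simply to put them together, with some care about parameter ranges.

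The upper bound is \Cref{prop:upper_bound_PCT}. Taking $n = \lceil krd/\eps\rceil$, the PCT cloner achieves fidelity at least $1 - krd/n \geq 1-\eps$ on every state of rank at most $r$. This holds for every $\eps$ and $d$, so there is nothing further to check.

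For the lower bound, suppose $\calC$ clones every state of rank at most $r$ to fidelity $1-\eps$.

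\emph{Reduce the rank.} Set $r' = \min(r, \lfloor d/2 \rfloor)$. Since $d \geq 2$, we have $r' \geq 1$ and $r' \leq d/2$. Every projector state $P/r'$ with $\mathrm{rank}(P) = r'$ has rank at most $r$. Hence $\calC$ in particular clones rank-$r'$ projector states to fidelity $1-\eps$.

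\emph{Apply the projector lower bound.} Take $\eps_0 = 1/16$. Then \Cref{prop:projector_lower_bound}, applied with rank $r'$, gives $n \geq kr'd/(8\eps)$. We can do this through \Cref{lem:lower_bound_from_WSS}, whose hypotheses $d\geq 2$, $r' \leq d/2$ and $\eps \leq 1/16$ all hold. That lemma is reached via Steps 1--4: the reduction to overlap, the symmetrization, the Choi bound, and \Cref{lem:transposed_integral} together with \Cref{lem:best_mu_is_lambda+ke1}.

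\emph{Compare $r'$ with $r$.} If $r \leq d/2$, then $r' = r$ and we are done. Otherwise $r > d/2$, and since $d\geq 2$,
\[
r' = \lfloor d/2\rfloor \geq d/3 \geq r/3 .
\]
Hence $n \geq krd/(24\eps) = \Omega(krd/\eps)$.

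Combining the two bounds gives $n = \Theta(krd/\eps)$ for $d\geq 2$ and $\eps \leq \eps_0$.

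There is no real obstacle here: all the heavy lifting is in \Cref{prop:projector_lower_bound}, and in particular in the deferred \Cref{lem:transposed_integral}. The one point needing care is the case $r > d/2$. There the projector family has to be taken of rank $\lfloor d/2\rfloor$ rather than $r$, and we lose only a constant factor, which the $\Omega(\cdot)$ notation absorbs.
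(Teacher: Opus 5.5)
Your proposal is correct and matches the paper's proof: the upper bound comes from \Cref{prop:upper_bound_PCT}, and the lower bound from \Cref{prop:projector_lower_bound} applied to projector states of rank $\min(r,\lfloor d/2\rfloor)$. Your use of the floor and the estimate $\lfloor d/2\rfloor \geq d/3 \geq r/3$ spell out the constant-factor loss that the paper's one-line remark about rank $\min(r,d/2)$ leaves implicit; this step tacitly uses $r \leq d$, which is harmless since no state has rank above $d$.
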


\begin{proof}
    In \Cref{prop:upper_bound_PCT}, we show the PCT cloner attains the upper bound. In \Cref{prop:projector_lower_bound}, we show that even when promised the input is a rank-$r$ projector state, we require $n = \Omega(krd/\eps)$ copies for $d \geq 2$, $r \leq d/2$, and $\epsilon \leq 1/16$. The lower bound for generic rank follows by considering projector states of rank $\min(r, d/2)$. This completes the proof.
\end{proof}

\subsection{The partially-transposed integral} \label{sec:partially_transposed_integral_bound}

 Throughout this section, it will be useful to introduce some shorthand to simplify our expressions. In particular, we will adopt the convention of writing $X_\lambda$ as shorthand for $q_\lambda^d(X)$. The dimension $d$ can be inferred from $\lambda$, since we will always specify the number of parts of $\lambda$, e.g.\ as $\lambda \in \Par(-,d)$. As an example, we will write $I_\lambda$ as shorthand for $q^d_\lambda(I) = I_{\calQ^d_\lambda}$. 

We now circle back to the partially-transposed integral, and prove the bound we claimed earlier, which we restate for convenience.

\begin{lemma}[The partially-transposed integral bound; \Cref{lem:transposed_integral} restated] \label{lem:transposed_integral_2}
    Let $\lambda \in \Par(-,d)$ and $\mu \in \Par(-,d)$. Then
    \begin{equation*}
        \int P^T_\lambda \otimes P_\mu \cdot \dP   \preceq \frac{s_{\lambda \cup \mu}(1^r)}{s_{\lambda \cup \mu}(1^d)} \cdot I_{\lambda} \otimes I_{\mu} = \Phi_{\lambda \cup \mu}^{(d,r)} \cdot I_\lambda \otimes I_\mu. 
    \end{equation*}
\end{lemma}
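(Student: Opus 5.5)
Following the technical overview, we will not attack the partially-transposed integral directly. Instead we define
\begin{equation*}
    \Delta_{\lambda\mu}^{(d,r)} \coloneq \Phi^{(d,r)}_{\lambda\cup\mu}\cdot I_\lambda\otimes I_\mu - \int P_\lambda\otimes P_\mu \cdot \dP
\end{equation*}
and prove the stronger statement $\Delta_{\lambda\mu}^{(d,r)}\in\Sep_+(\calQ^d_\lambda:\calQ^d_\mu)$. The lemma then follows at once. The partial transpose on the first factor (taken in the Schur/Gelfand--Tsetlin basis) maps $I_\lambda\otimes I_\mu$ to itself, and it maps $\int P_\lambda\otimes P_\mu\,\dP$ to $\int P_\lambda^T\otimes P_\mu\,\dP$. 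It also maps $\Sep_+$ into $\Sep_+\subseteq$ PSD, because $A\succeq0$ implies $A^T\succeq0$. Hence $\Delta^\Gamma\succeq 0$, which is exactly the claimed inequality. When $\ell(\lambda\cup\mu)>r$ both sides vanish appropriately: $P_\lambda$ or $P_\mu$ is zero for a rank-$r$ projector, and $\Phi=0$. So we may assume $\ell(\lambda),\ell(\mu)\le r$.

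The proof of separability proceeds by induction on $(d,r)$, decreasing both together.

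\textbf{Base case $r=1$.} Here $\lambda,\mu$ are single rows, so one contains the other, say $\lambda\subseteq\mu$. Then $P_\lambda\preceq I_\lambda$ gives
\begin{equation*}
    \int (I_\lambda-P_\lambda)\otimes P_\mu\,\dP\in\Sep_+,
\end{equation*}
since it is an average of tensor products of PSD operators. By Schur's lemma, $\int I_\lambda\otimes P_\mu\,\dP=\Phi_\mu I\otimes I=\Phi_{\lambda\cup\mu}I\otimes I$, so $\Delta$ equals this separable integral. The same computation works whenever $\lambda\subseteq\mu$ or $\mu\subseteq\lambda$, for any $r$.

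\textbf{Inductive step.} We use the branching rule $\calQ^d_\lambda\downarrow_{U(d-1)}\cong\bigoplus_{\lambda'\preceq\lambda}\calQ^{d-1}_{\lambda'}$. For $X$ on $\calQ^{d-1}_{\lambda'}\otimes\calQ^{d-1}_{\mu'}$ we define the lift
\begin{equation*}
    \Lift(X)=\int_{U(d)} (U_\lambda\otimes U_\mu)\,(\iota X\iota^\dagger)\,(U_\lambda\otimes U_\mu)^\dagger\,\dU,
\end{equation*}
where $\iota$ is the GT embedding. The lift preserves $\Sep_+$, because the embedding is a tensor product of isometries and conjugation by $U_\lambda\otimes U_\mu$ is local. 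Its outputs also commute with $U_\lambda\otimes U_\mu$.

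Next we decompose the Haar-random rank-$r$ projector. Write $P$ as a random unit vector $v\in\C^d$ together with a rank-$(r-1)$ projector $P'$ on $v^\perp\cong\C^{d-1}$, so $P=vv^\dagger+P'$. Rotating $v$ to $e_d$, we have $P=U(e_de_d^\dagger+P')U^\dagger$ with $P'$ Haar on $\C^{d-1}$ of rank $r-1$. For the block-diagonal operator $D=e_de_d^\dagger\oplus P'$, the operator $q^d_\lambda(D)$ acts on the $\lambda'$ block as $1^{|\lambda|-|\lambda'|}\cdot q^{d-1}_{\lambda'}(P')$, because $e_d$ contributes weight $1$ to each box labelled $d$. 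Hence
\begin{equation*}
    \int P_\lambda\otimes P_\mu\,\dP=\sum_{\lambda'\preceq\lambda,\ \mu'\preceq\mu}\Lift\Big(\int P'_{\lambda'}\otimes P'_{\mu'}\,\mathrm dP'\Big)
\end{equation*}
plus off-diagonal block terms $\lambda'_1\ne\lambda'_2$ between different GT blocks. These cross terms vanish, because $D$ is block diagonal in the $U(d-1)\times U(1)$ decomposition.

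Applying the induction hypothesis at $(d-1,r-1)$ replaces each inner integral by $\Phi^{(d-1,r-1)}_{\lambda'\cup\mu'}I\otimes I-\Delta_{\lambda'\mu'}$. Since $\Lift(I\otimes I)$ is a multiple of an invariant operator, we obtain
\begin{equation*}
    \Delta_{\lambda\mu}=\sum\Lift(\Delta_{\lambda'\mu'})+\Big[\Phi_{\lambda\cup\mu}I\otimes I-\sum_{\lambda',\mu'}\Phi_{\lambda'\cup\mu'}\Lift(\Pi_{\lambda'}\otimes\Pi_{\mu'})\Big].
\end{equation*}
The first sum is separable by the inductive hypothesis and the properties of $\Lift$. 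It remains to show the bracket is in $\Sep_+$. Each $\Lift(\Pi_{\lambda'}\otimes\Pi_{\mu'})$ is an average of products of projectors.

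\textbf{Main obstacle.} I expect the bracket to be the main obstacle. My first attempt would use that $\Phi$ is monotone, $\Phi^{(d-1,r-1)}_{\lambda'\cup\mu'}\le$ a suitable multiple, via hook-content ratios as in \Cref{lem:ratio_of_s's_subset_diagrams}. I would then rewrite $\Phi_{\lambda\cup\mu}I\otimes I$ itself as the analogous lifted sum, using $\Phi^{(d,r)}_\nu I=\int P_\nu\,\dP$ and branching on one factor only. This would express the bracket as a nonnegative combination of $\Lift\big((\Pi_{\lambda'}-\text{stuff})\otimes\Pi_{\mu'}\big)$ with PSD local factors.

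If the coefficients do not line up, the fallback is a different recursion. Peel off one row: reduce $r$ alone by conditioning on a vector in the range, and reduce $d$ alone by conditioning on a vector in the kernel, mixing them with weights $r/d$ and $(d-r)/d$. This matches the Pieri-type factor $\frac{r+c}{d+c}$ appearing in $\Phi$.
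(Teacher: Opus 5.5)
Your reduction to separability, the partial-transpose step, the base case, and the decomposition $S^{(d,r)}_{\lambda\mu}=\sum_{\lambda'\preceq\lambda,\,\mu'\preceq\mu}\Lift^{\lambda\mu}_{\lambda'\mu'}\big(S^{(d-1,r-1)}_{\lambda'\mu'}\big)$ are all correct. The gap is the step you flagged, and for this decomposition it cannot be closed. The bracket
\[
B_{\lambda\mu}=\Phi^{(d,r)}_{\lambda\cup\mu}\, I_\lambda\otimes I_\mu-\sum_{\lambda',\mu'}\Phi^{(d-1,r-1)}_{\lambda'\cup\mu'}\,\Lift^{\lambda\mu}_{\lambda'\mu'}(I_{\lambda'}\otimes I_{\mu'})
\]
is in general not separable, and not even PSD.

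Take $d=3$, $r=2$, $\lambda=(2)$, $\mu=(1,1)$. Then $\calQ^3_{(2)}\otimes\calQ^3_{(1,1)}\cong\calQ^3_{(3,1)}\oplus\calQ^3_{(2,1,1)}$ and $\Phi^{(3,2)}_{(2,1)}=1/4$. Only $\mu'=(1)$ contributes. For $\lambda'=(a)$ with $a=2,1,0$ one has $\Phi^{(2,1)}_{(a)\cup(1)}=1/3,\,1/2,\,1/2$, and $\Lift(I_{(a)}\otimes I_{(1)})$ acts on $\calQ^3_{(3,1)}$ as $3/10,\,7/30,\,2/15$. Hence $B$ acts on $\calQ^3_{(3,1)}$ as
\[
\tfrac14-\tfrac1{10}-\tfrac7{60}-\tfrac1{15}=-\tfrac1{30}.
\]
Once you commit to the uniform sum over $\lambda'$, $B$ is a fixed operator. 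So rewriting $\Phi_{\lambda\cup\mu}I\otimes I$ as some other lifted sum, your first attempt, cannot help. Two related points:
\begin{itemize}
\item Term by term, the coefficients $\Phi^{(d,r)}_{\lambda\cup\mu}-\Phi^{(d-1,r-1)}_{\lambda'\cup\mu'}$ are negative whenever $\lambda'\cup\mu'$ is small; for example $\lambda'=\mu'=\varnothing$ gives $\Phi=1$.
\item The individual lifts $\Lift(I_{\lambda'}\otimes I_{\mu'})$ are not multiples of $I_\lambda\otimes I_\mu$. Only the partial sums $\sum_{\mu'}\Lift(I_{\lambda'}\otimes I_{\mu'})=\frac{s_{\lambda'}(1^{d-1})}{s_\lambda(1^d)}I_\lambda\otimes I_\mu$ with $\lambda'$ fixed are.
\end{itemize}
Your fallback recursion is not specified concretely enough to evaluate.

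The missing idea is that the uniform weighting over $\lambda'$ is the wrong one. The fixed projector $P_0$ is invariant under $U(r)$ acting on its range. Schur's lemma for that $U(r)$ then gives, for each single $\lambda'\preceq\lambda$ with $\ell(\lambda')\le r-1$,
\[
S^{(d,r)}_{\lambda\mu}=\frac{s_\lambda(1^r)}{s_{\lambda'}(1^{r-1})}\sum_{\mu'\preceq\mu}\Lift^{\lambda\mu}_{\lambda'\mu'}\big(S^{(d-1,r-1)}_{\lambda'\mu'}\big).
\]
So all the weight can be placed on one block of one factor. The paper assumes $\lambda_1\ge\mu_1$ and takes $\lambda'=\lambda^-=(\lambda_2,\dots,\lambda_d)$. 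The prefactors of the $S$-recursion and the identity recursion then match exactly, by the hook-content identity
\[
\Phi^{(d,r)}_{\lambda\cup\mu}\,\frac{s_\lambda(1^d)}{s_{\lambda^-}(1^{d-1})}=\Phi^{(d-1,r-1)}_{\lambda^-\cup\mu^-}\,\frac{s_\lambda(1^r)}{s_{\lambda^-}(1^{r-1})}.
\]
This identity holds because $\lambda$ and $\lambda\cup\mu$ have the same first row. The result is
\[
\Delta_{\lambda\mu}=\frac{s_\lambda(1^r)}{s_{\lambda^-}(1^{r-1})}\sum_{\mu'\preceq\mu}\Lift^{\lambda\mu}_{\lambda^-\mu'}\Big(\big(\Phi^{(d-1,r-1)}_{\lambda^-\cup\mu^-}-\Phi^{(d-1,r-1)}_{\lambda^-\cup\mu'}\big)I_{\lambda^-}\otimes I_{\mu'}+\Delta_{\lambda^-\mu'}\Big).
\]
The scalar coefficient is nonnegative because $\mu^-\subseteq\mu'$ for every $\mu'\preceq\mu$. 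In the example above this gives $\Delta_{\lambda\mu}=\tfrac32\,\Lift(I_{\varnothing}\otimes I_{(1,1)})$, which is manifestly separable.
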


We take a somewhat indirect approach. We start by considering the partial transpose of this integral, resulting in both terms in the integrand un-transposed. The un-transposed integral bound reads
\begin{equation} \label{eq:transposed_bound}
        \int P_\lambda \otimes P_\mu \cdot \dP \preceq \frac{s_{\lambda \cup \mu}(1^r)}{s_{\lambda \cup \mu}(1^d)} \cdot I_{\lambda} \otimes I_{\mu}.
\end{equation}
This bound itself is actually relatively easy to show, and follows from the Littlewood--Richardson rule. However, even with \Cref{eq:transposed_bound} we cannot conclude \Cref{lem:transposed_integral}, as it is not necessarily true that $X \preceq Y$ implies $X^\Gamma \preceq Y^\Gamma$.\footnote{For instance, take $X = 0$ and $Y = \ketbra{\Omega}$, with $\ket{\Omega} = \ket{00} + \ket{11}$. Then $X^\Gamma = 0$ and $Y^\Gamma = \mathrm{SWAP}$, which has both positive and negative eigenvectors.} We overcome this by showing a stronger structural result: the difference between the right- and left-hand sides is not only PSD, but separable. Defining
\begin{equation*}
    \Delta_{\lambda \mu} \coloneq \frac{s_{\lambda \cup \mu}(1^r)}{s_{\lambda \cup \mu}(1^d)} \cdot I_{\lambda} \otimes I_{\mu} - \int P_\lambda \otimes P_\mu \cdot \dP, 
\end{equation*}
we will show the following result.

\begin{proposition} \label{prop:untransposed_bound}
    Let $\lambda \in \Par(-,d)$ and $\mu \in \Par(-,d)$. Then we have $$\Delta_{\lambda \mu} \in \Sep_+(\calQ_{\lambda}^d : \calQ_{\mu}^d).$$ 
    That is, $\Delta_{\lambda \mu}$ is a nonnegative linear combination of matrices of the form $A \otimes B$, where $A$ acts on $\calQ^d_\lambda$, $B$ acts on $\calQ^d_\mu$, and $A, B \succeq 0$. 
\end{proposition}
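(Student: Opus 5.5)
We argue by induction that simultaneously lowers $r$ and $d$, following the outline in the technical overview. Write $\Phi^{(d,r)}_\nu = s_\nu(1^r)/s_\nu(1^d)$.

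\textbf{Base case $r=1$.} For $r=1$, $P=\ketbra{v}$ is rank one, so $P_\lambda$ vanishes unless $\ell(\lambda)\le 1$. If $\ell(\lambda)>1$, then $\int P_\lambda\otimes P_\mu\,\dP=0$, and $\Delta_{\lambda\mu}$ is a nonnegative multiple of $I_\lambda\otimes I_\mu$, which is separable; the same holds if $\ell(\mu)>1$. Otherwise both diagrams are single rows, so one contains the other, say $\lambda\subseteq\mu$. Then $P_\lambda\preceq I_\lambda$, and
\[
\Delta_{\lambda\mu}=\int (I_\lambda-P_\lambda)\otimes P_\mu\,\dP,
\]
using $\int P_\mu\,\dP=\Phi^{(d,1)}_\mu I_\mu$ (Schur's lemma together with $\tr P_\mu=s_\mu(1^r)$). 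This integral is manifestly a nonnegative combination of tensor products of PSD operators, so it lies in $\Sep_+$. The same argument handles $\lambda\subseteq\mu$ for general $r$; that is, it proves the statement whenever one diagram contains the other.

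\textbf{Lift.} For $\lambda'\preceq\lambda$ and $\mu'\preceq\mu$, let $V_{\lambda'}:\calQ^{d-1}_{\lambda'}\hookrightarrow\calQ^d_\lambda$ be the Gelfand--Tsetlin embedding, and similarly $V_{\mu'}$. Define
\[
\Lift(X)=\int_{U(d)}(U_\lambda\otimes U_\mu)(V_{\lambda'}\otimes V_{\mu'})X(V_{\lambda'}\otimes V_{\mu'})^\dagger(U_\lambda\otimes U_\mu)^\dagger\,\dU .
\]
Because conjugation by $U_\lambda\otimes U_\mu$ and by $V_{\lambda'}\otimes V_{\mu'}$ are both local, $\Lift$ maps $\Sep_+(\calQ^{d-1}_{\lambda'}:\calQ^{d-1}_{\mu'})$ into $\Sep_+(\calQ^d_\lambda:\calQ^d_\mu)$, and its outputs commute with $U_\lambda\otimes U_\mu$.

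\textbf{Recursion.} Sample a rank-$r$ projector $P$ on $\C^d$ in two stages: a uniformly random unit vector $v$, together with a rank-$(r-1)$ projector $P'$ on $v^\perp\cong\C^{d-1}$, so that $P=\ketbra{v}+P'$. By Haar invariance we may take $v=e_d$. Then $P$ commutes with the $U(d-1)\times U(1)$ subgroup, and in the Gelfand--Tsetlin basis the block of $P_\lambda$ labelled by $\lambda'$ acts as $P'_{\lambda'}$ times $1^{|\lambda|-|\lambda'|}$, i.e.\ as $P'_{\lambda'}$ on exactly those blocks ($d$-th content of the SSYT). This gives
\[
\int P_\lambda\otimes P_\mu\,\dP=\sum_{\lambda'\preceq\lambda,\;\mu'\preceq\mu}\Lift\Big(\int P'_{\lambda'}\otimes P'_{\mu'}\,\mathrm{d}P'\Big),
\]
with $P'$ ranging over rank-$(r-1)$ projectors in dimension $d-1$. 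Applying the induction hypothesis to each term writes it as $\Phi^{(d-1,r-1)}_{\lambda'\cup\mu'}\,\Lift(I_{\lambda'}\otimes I_{\mu'})$ minus a separable term. Two facts remain to be checked:
\begin{enumerate}
\item $\Lift(I_{\lambda'}\otimes I_{\mu'})$ is itself separable and unitarily invariant, equal to $\int U\,(\Pi_{\lambda'}\otimes\Pi_{\mu'})\,U^\dagger\,\dU$ with $\Pi$ the branching projectors.
\item The scalar identity/inequality
\[
\Phi^{(d,r)}_{\lambda\cup\mu}\,I_\lambda\otimes I_\mu-\sum_{\lambda',\mu'}\Phi^{(d-1,r-1)}_{\lambda'\cup\mu'}\,\Lift(\Pi_{\lambda'}\otimes\Pi_{\mu'})\in\Sep_+ .
\end{enumerate}
Since $\sum_{\lambda',\mu'}\Pi_{\lambda'}\otimes\Pi_{\mu'}=I$, the difference in item 2 is
\[
\sum_{\lambda',\mu'}\big(\Phi^{(d,r)}_{\lambda\cup\mu}-\Phi^{(d-1,r-1)}_{\lambda'\cup\mu'}\big)\,\Lift(\Pi_{\lambda'}\otimes\Pi_{\mu'}),
\]
and each $\Lift(\Pi\otimes\Pi)$ is separable. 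The difficulty is that individual coefficients can be negative, because $\Phi$ is not monotone in the right direction: $\lambda'\cup\mu'$ is smaller than $\lambda\cup\mu$.

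\textbf{Main obstacle.} The hard step is exactly this sign problem: showing that the negative coefficients are dominated. My first attempt would be to group terms by $\nu'=\lambda'\cup\mu'$ and use the hook-content formula:
\[
\Phi^{(d,r)}_{\nu}=\Phi^{(d-1,r-1)}_{\nu'}\cdot\prod_{\Box\in\nu}\frac{r+c(\Box)}{d+c(\Box)}\cdot\frac{1}{\text{(corresponding product for }\nu'\text{ in }(d-1,r-1))},
\]
so that the weights become explicit. I would then look for a second, manifestly separable decomposition of the excess, of the form
\[
\int (I-P_{\text{part}})\otimes(\cdot)\,\mathrm{d}P,
\]
as in the base case, by splitting the pairs $(\lambda',\mu')$ according to whether the rows deleted from $\lambda$ and from $\mu$ are contained in one another. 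Failing that, I would compute traces against the isotypic projectors of $\calQ_\lambda\otimes\calQ_\mu$ and use the Littlewood--Richardson structure to arrange a telescoping recursion in which every coefficient is nonnegative.
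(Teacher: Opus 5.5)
Your base case, the lift, and the two-sided recursion $S^{(d,r)}_{\lambda\mu}=\sum_{\lambda'\preceq\lambda,\,\mu'\preceq\mu}\Lift^{\lambda\mu}_{\lambda'\mu'}\bigl(S^{(d-1,r-1)}_{\lambda'\mu'}\bigr)$ are correct. One small slip: $P=P'\oplus 1$ does not commute with $U(d-1)\times U(1)$. What you actually use is that it is block diagonal.

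\textbf{The gap.} The proposal stops exactly where the inductive step has content. Your item 2 is never proved; the last paragraph only lists strategies. The sign problem is genuine and tight. The $(\lambda^-,\mu^-)$ term has coefficient $\Phi^{(d,r)}_{\lambda\cup\mu}-\Phi^{(d-1,r-1)}_{(\lambda\cup\mu)^-}\le 0$. Already for $\lambda=\mu=(1)$, your item-2 operator equals $\frac{d-r}{d(d-1)}\bigl(\frac{d-1}{d+1}\Pi_{\mathrm{sym}}+\Pi_{\mathrm{anti}}\bigr)$. This has trace exactly $0$ against the swap, so it lies on the boundary of the separable Werner operators. Hence no domination estimate can work. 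You would need an exact separable decomposition of a signed combination of the operators $\Lift^{\lambda\mu}_{\lambda'\mu'}(I_{\lambda'}\otimes I_{\mu'})$, and none of your three suggestions supplies one.

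\textbf{The missing idea.} Do not sum over the $\lambda$-branch at all.

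\emph{Collapsing to one branch.} The fixed projector $P_0^{(r)}$ onto the last $r$ basis vectors is invariant under the $U(r)$ acting on its support. Averaging over that $U(r)$ inside $S^{(d,r)}_{\lambda\mu}$ (Schur's lemma on the $\calQ^r_\lambda$ block), and absorbing it into the Haar integral, replaces $(P^{(r)}_0)_\lambda$ by the single-branch piece $\Pi^{(r)}_{\lambda'\subseteq\lambda}$. You may pick any one $\lambda'$ with $\ell(\lambda')\le r-1$. This gives
\[
S^{(d,r)}_{\lambda\mu}=\frac{s_\lambda(1^r)}{s_{\lambda'}(1^{r-1})}\sum_{\mu'\preceq\mu}\Lift^{\lambda\mu}_{\lambda'\mu'}\bigl(S^{(d-1,r-1)}_{\lambda'\mu'}\bigr).
\]
Schur's lemma likewise gives $I_\lambda\otimes I_\mu=\frac{s_\lambda(1^d)}{s_{\lambda'}(1^{d-1})}\sum_{\mu'}\Lift^{\lambda\mu}_{\lambda'\mu'}(I_{\lambda'}\otimes I_{\mu'})$. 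Your recursion is the $s_{\lambda'}(1^{r-1})/s_\lambda(1^r)$-weighted average of these one-sided identities, which is exactly why it loses the freedom needed.

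\emph{Matching the prefactors.} Assume $\lambda_1\ge\mu_1$ and choose $\lambda'=\lambda^-$, the diagram with its first row deleted. Use the hook-content identity $\Phi^{(d,r)}_\nu=\Phi^{(d-1,r-1)}_{\nu^-}\prod_{\Box\in\mathrm{Row}_1(\nu)}\frac{r+c(\Box)}{d+c(\Box)}$ for $\nu=\lambda$ and $\nu=\lambda\cup\mu$. These two diagrams share the same first row, so the prefactors match exactly, and
\[
\Delta_{\lambda\mu}=\frac{s_\lambda(1^r)}{s_{\lambda^-}(1^{r-1})}\sum_{\mu'\preceq\mu}\Lift^{\lambda\mu}_{\lambda^-\mu'}\Bigl(\bigl(\Phi^{(d-1,r-1)}_{\lambda^-\cup\mu^-}-\Phi^{(d-1,r-1)}_{\lambda^-\cup\mu'}\bigr)I_{\lambda^-}\otimes I_{\mu'}+\Delta_{\lambda^-\mu'}\Bigr).
\]

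\emph{Nonnegativity.} Every coefficient here is nonnegative, with no cancellation needed. Interlacing $\mu'\preceq\mu$ forces $\mu^-\subseteq\mu'$, and $\Phi$ is monotone decreasing under inclusion of diagrams. The induction hypothesis then finishes the argument.
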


From here, we can prove the partially-transposed integral bound straightforwardly.

\begin{proof}[Proof of \Cref{lem:transposed_integral}]
    Since $A^T$ is PSD if $A$ is, we have
    \begin{equation*}
        \Sep_+(\calQ_{\lambda}^d : \calQ_{\mu}^d)^\Gamma \subseteq \Sep_+(\calQ_{\lambda}^d : \calQ_{\mu}^d).
    \end{equation*}
    Thus, $\Delta_{\lambda \mu}^\Gamma \in \Sep_+(\calQ_{\lambda}^d : \calQ_{\mu}^d)$, and in particular,
    \begin{equation*}
        \Delta_{\lambda \mu}^\Gamma = \frac{s_{\lambda \cup \mu}(1^r)}{s_{\lambda \cup \mu}(1^d)} \cdot I_\lambda \otimes I_\mu - \int P_\lambda^T \otimes P_\mu \cdot \dP \succeq 0.
    \end{equation*}
    This completes the proof.
\end{proof}

The rest of the section is devoted to establishing the proposition. We will start by developing notation, defining useful quantities, and establishing some lemmas.

\subsubsection{Background for the proof}

We will write $S_{\lambda \mu}^{(d,r)}$ as shorthand for the integral appearing in \Cref{eq:transposed_bound} ($S$ is for \emph{sum}). We note that $S^{(d,r)}_{\lambda \mu}$ can be rewritten as an integral over all possible rotations of a \emph{fixed} matrix: for any fixed rank-$r$ projector $P^{(r)}_0 \in \C^{d \times d}$,
\begin{equation} \label{eq:S}
    S^{(d,r)}_{\lambda \mu} = \int_{U \in U(d)} \big( U_\lambda \otimes U_\mu \big) \cdot \big( P^{(r)}_{0,\lambda}  \otimes P^{(r)}_{0,\mu} \big) \cdot \big( U_\lambda \otimes U_\mu \big)^\dagger \cdot \dU, 
\end{equation}
It will be convenient to set $P^{(r)}_0 \leftarrow \sum_{i = d-r+1}^d \ketbra{i}$, i.e.\ $P^{(r)}_0$ is projection onto the \emph{last} $r$ standard basis vectors of $\C^d$.

\paragraph{A few useful projectors on $\calQ^d_\lambda$.} Let $\lambda \in \Par(-,d)$, and $\lambda' \in \Par(-,d)$ such that $\lambda'$ interlaces $\lambda$. We now define a few useful projectors on the space $\calQ^d_\lambda$. First, we define $\Pi_{\lambda' \subseteq \lambda}$ as the projector
\begin{equation*}
    \Pi_{\lambda' \subseteq \lambda} \coloneq \Proj\big\{ T \, : \,\sh(T_{\leq (d-1)}) = \lambda' \big\}. 
\end{equation*}
That is, $\Pi_{\lambda' \subseteq \lambda}$ projects only onto those SSYTs of shape $\lambda$ that, when you look only at the symbols less than $d$, has shape $\lambda'$. Note that $\sum_{\lambda'} \Pi_{\lambda' \subseteq \lambda} = I_\lambda$. Next, we define $\Pi^{(t)}_{\lambda}$ for $t \in [d]$ as the projector 
\begin{equation*}
    \Pi^{(t)}_{\lambda} \coloneq \Proj\big\{ T \, : \sh(T_{\leq(d-t)}) = \varnothing \big\}. 
\end{equation*}
In words: $\Pi^{(t)}_{\lambda}$ projects onto those SSYTs of shape $\lambda$ that are filled only by the \emph{last} $t$ symbols in the alphabet. Note that $\Pi^{(d)}_\lambda = I_\lambda$ and $\Pi^{(r)}_\lambda = (P_0^{(r)})_\lambda$. Finally, we define $\Pi^{(t)}_{\lambda' \subseteq \lambda}$ as the product of the two projectors above. That is, 
\begin{equation*}
    \Pi^{(t)}_{\lambda' \subseteq \lambda} \coloneq \Proj\big\{ T \, : \,\sh(T_{\leq (d-1)}) = \lambda' \; \mathrm{and} \; \sh(T_{\leq(d-t)}) = \varnothing \big\}. 
\end{equation*}

\paragraph{Embedding maps.} We also define embedding maps $E_{\lambda' \subseteq \lambda}$ that include $\calQ^{d-1}_{\lambda'} \hookrightarrow \calQ^d_\lambda$. Concretely
$E_{\lambda' \subseteq \lambda} \cdot \ket{T'} = \ket{T}$,
where $T$ is the SSYT of $\lambda$ such that $T_{\leq (d-1)} = T'$. We note the following relations which we will use later. First, for any $V \in U(d-1)$, viewed as an element of $U(d)$ by letting it act on the first $d-1$ basis vectors as $V$, and fixing $\ket{d}$, we have 
\begin{equation}\label{eq:aux_1}
    E_{\lambda' \subseteq \lambda} \cdot V_{\lambda'} = V_\lambda \cdot E_{\lambda' \subseteq \lambda}.
\end{equation}
Second, we have
\begin{equation}\label{eq:aux_2}
    E_{\lambda' \subseteq \lambda} \cdot I_{\lambda'} \cdot E_{\lambda' \subseteq \lambda}^\dagger = \Pi_{\lambda' \subseteq \lambda}. 
\end{equation}
Finally, 
\begin{equation}\label{eq:aux_3}
    E_{\lambda' \subseteq \lambda} \cdot \Pi^{(t-1)}_{\lambda'} \cdot E_{\lambda' \subseteq \lambda}^\dagger = \Pi^{(t)}_{\lambda' \subseteq \lambda}.
\end{equation}
To understand this last one, note that on the left-hand side, we are embedding all SSYTs of shape $\lambda'$ with alphabet $\{(d-1)-(t-1)+1, \dots, d-1\} = \{d-t+1, \dots, d-1\}$ into $\calQ^d_\lambda$, which gives us the right-hand side. We also note $\Pi^{(t)}_{\lambda} = \sum_{\lambda' \preceq \lambda} \Pi^{(t)}_{\lambda' \subseteq \lambda}$. 

For simpler notation, we will also introduce an \emph{embedding channel}, defined as $\calE_{\lambda' \subseteq \lambda}(X) \coloneq E_{\lambda' \subseteq \lambda} \cdot X \cdot E_{\lambda' \subseteq \lambda}^\dagger$. 

\paragraph{Lifting.} Finally, let $\mu \in \Par(-,d)$ as well, and let $\mu' \in \Par(-,d-1)$ such that $\mu' \preceq \mu$. Given any operator $X$ acting on $\calQ^{d-1}_{\lambda'} \otimes \calQ^{d-1}_{\mu'}$, we can use the embedding maps $E_{\lambda' \subseteq \lambda}$ and $E_{\mu' \subseteq \mu}$ to lift $X$ to an operator on $\calQ^d_\lambda \otimes \calQ^d_\mu$. In our case, it will be convenient to average the resulting operator over all possible rotations as well, so as to bear closer resemblance to $S_{\lambda\mu}$. Thus, we define the $\emph{lift}$ as the map
\begin{equation*}
    \Lift_{\lambda'\mu'}^{\lambda \mu}(X) \coloneq \int_{U \in U(d)} \big(U_{\lambda} \otimes U_\mu\big) \cdot \big(E_{\lambda' \subseteq \lambda} \otimes E_{\mu' \subseteq \mu} \big) \cdot X  \cdot \big(E_{\lambda' \subseteq \lambda} \otimes E_{\mu' \subseteq \mu} \big)^\dagger \cdot \big(U_{\lambda} \otimes U_\mu\big)^\dagger \cdot \dU.
\end{equation*}
A key observation is that if $X \in \Sep^+(\calQ^{d-1}_{\lambda'} : \calQ^{d-1}_{\mu'})$, then $\Lift_{\lambda'\mu'}^{\lambda \mu}(X) \in \Sep^+(\calQ^{d}_\lambda : \calQ^d_\mu)$. 

We can simplify our notation if we further introduce a \emph{twirling-by-unitaries channel}, given by \begin{equation*}
\calU_{\lambda \mu}(X) \coloneq \int_{U \in U(d)} (U_\lambda \otimes U_\mu) \cdot X \cdot (U_\lambda \otimes U_\mu)^\dagger \cdot \dU. 
\end{equation*}
Then 
\begin{equation*}
    \Lift_{\lambda'\mu'}^{\lambda \mu}(X) = \calU_{\lambda \mu} ( \calE_{\lambda' \subset \lambda} \otimes \calE_{\mu' \subset \mu} (X)). 
\end{equation*}

Our first key lemma gives a recursive formula for $I_\lambda \otimes I_\mu$, in terms of the lifts of operators of the form $I_{\lambda'} \otimes I_{\mu'}$, for intertwining diagrams $\lambda' \preceq \lambda$ and $\mu' \preceq \mu$.

\begin{lemma} \label{lem:lift_recursion_identity}
    Fix $\lambda, \mu \in \Par(-,d)$, and $\lambda' \in \Par(-,d-1)$ with $\lambda' \preceq \lambda$. Then
        \begin{equation*}
            \sum_{\mu' \preceq \mu} \Lift^{\lambda \mu}_{\lambda'\mu'} \big( I_{\lambda'} \otimes I_{\mu'} \big) = \frac{s_{\lambda'}(1^{d-1})}{s_{\lambda}(1^d)} \cdot I_{\lambda} \otimes I_{\mu}.
        \end{equation*}
\end{lemma}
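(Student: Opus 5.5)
Both sides are invariant under $U_\lambda \otimes U_\mu$ conjugation: the left because it lies in the image of the twirl $\calU_{\lambda\mu}$, the right trivially. So it suffices to compute the left-hand side before twirling and then apply $\calU_{\lambda\mu}$. We also want to avoid needing to know what $\calU_{\lambda\mu}$ does to a non-identity operator on $\calQ^d_\lambda \otimes \calQ^d_\mu$. The plan is to move the sum over $\mu'$ inside the twirl so that the operator being twirled factorizes as $\Pi_{\lambda'\subseteq\lambda}\otimes I_\mu$. Then the twirl acts on the first factor only, and Schur's lemma on the single irrep $\calQ^d_\lambda$ finishes the computation.

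\textbf{Step 1: sum the embedded operators.} By linearity, and by \eqref{eq:aux_2} applied to both factors,
\begin{equation*}
\sum_{\mu'\preceq\mu}\big(\calE_{\lambda'\subseteq\lambda}\otimes\calE_{\mu'\subseteq\mu}\big)(I_{\lambda'}\otimes I_{\mu'}) = \Pi_{\lambda'\subseteq\lambda}\otimes\sum_{\mu'\preceq\mu}\Pi_{\mu'\subseteq\mu}.
\end{equation*}
The $\Pi_{\mu'\subseteq\mu}$ resolve the identity, $\sum_{\mu'}\Pi_{\mu'\subseteq\mu}=I_\mu$, by the branching rule. So the left-hand side of the lemma equals $\calU_{\lambda\mu}(\Pi_{\lambda'\subseteq\lambda}\otimes I_\mu)$.

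\textbf{Step 2: twirl.} Since $U_\mu I_\mu U_\mu^\dagger=I_\mu$, we get
\begin{equation*}
\calU_{\lambda\mu}(\Pi_{\lambda'\subseteq\lambda}\otimes I_\mu)=\Big(\int U_\lambda\Pi_{\lambda'\subseteq\lambda}U_\lambda^\dagger\,\dU\Big)\otimes I_\mu .
\end{equation*}
The integral commutes with $q^d_\lambda(U(d))$, and $\calQ^d_\lambda$ is irreducible. By Schur's lemma it is therefore $c\, I_\lambda$, where
\begin{equation*}
c=\frac{\tr\Pi_{\lambda'\subseteq\lambda}}{\dim\calQ^d_\lambda}.
\end{equation*}

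\textbf{Step 3: the constant.} The projector $\Pi_{\lambda'\subseteq\lambda}$ is the image of $E_{\lambda'\subseteq\lambda}$, which is an isometry from $\calQ^{d-1}_{\lambda'}$. Hence its trace is $\dim\calQ^{d-1}_{\lambda'}=s_{\lambda'}(1^{d-1})$. Equivalently, it counts the SSYTs $T$ of shape $\lambda$ with $\sh(T_{\leq d-1})=\lambda'$: these correspond bijectively to SSYTs of shape $\lambda'$ on $[d-1]$, because the symbol $d$ must fill exactly the horizontal strip $\lambda/\lambda'$. Combined with $\dim\calQ^d_\lambda=s_\lambda(1^d)$, this gives the claimed constant $s_{\lambda'}(1^{d-1})/s_\lambda(1^d)$.

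There is no serious obstacle. The only point requiring care is that the $\mu'$-sum must be performed before twirling, so that the operator factorizes and Schur's lemma is applied to one irrep rather than to the reducible $\calQ^d_\lambda\otimes\calQ^d_\mu$.
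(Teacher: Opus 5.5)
Your proposal is correct and follows the paper's proof essentially step for step. Both embed via \eqref{eq:aux_2}, collapse the $\mu'$-sum using $\sum_{\mu'\preceq\mu}\Pi_{\mu'\subseteq\mu}=I_\mu$, pull $I_\mu$ out of the twirl, and apply Schur's lemma on $\calQ^d_\lambda$ with $\tr(\Pi_{\lambda'\subseteq\lambda})=s_{\lambda'}(1^{d-1})$; your opening invariance remark is unnecessary, since the left-hand side is a twirl by definition, but it does no harm.
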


\begin{proof}
    From \Cref{eq:aux_2}, we have
    \begin{equation*}
        \big(E_{\lambda' \subseteq \lambda} \otimes E_{\mu' \subseteq \mu} \big) \cdot  \big( I_{\lambda'} \otimes I_{\mu'} \big) \cdot \big(E_{\lambda' \subseteq \lambda} \otimes E_{\mu' \subseteq \mu} \big)^\dagger = \Pi_{\lambda' \subseteq \lambda} \otimes \Pi_{\mu' \subseteq \mu}.
    \end{equation*}
    Thus, 
    \begin{align*}
        \sum_{\mu' \preceq \mu} \Lift^{\lambda \mu}_{\lambda'\mu'} \big( I_{\lambda'} \otimes I_{\mu'} \big) &  = \sum_{\mu' \preceq \mu} \int \big(U_{\lambda} \otimes U_\mu\big) \cdot \big( \Pi_{\lambda' \subseteq \lambda} \otimes \Pi_{\mu' \subseteq \mu}\big) \cdot \big(U_{\lambda} \otimes U_\mu\big)^\dagger \cdot \dU  \\
        & = \int \big(U_{\lambda} \otimes U_\mu\big) \cdot \big( \Pi_{\lambda' \subseteq \lambda} \otimes I_\mu \big) \cdot \big(U_{\lambda} \otimes U_\mu\big)^\dagger \cdot \dU \\
        & = \bigg( \int U_\lambda \cdot \Pi_{\lambda' \subseteq \lambda} \cdot U_\lambda^\dagger \cdot \dU \bigg) \otimes I_{\mu} \\
        & = \frac{s_{\lambda'}(1^{d-1})}{s_{\lambda}(1^d)} \cdot I_{\lambda} \otimes I_{\mu}. 
    \end{align*}
    The second step uses $\sum_{\mu' \preceq \mu} \Pi_{\mu' \subseteq \mu} = I_\mu$. The last step is Schur's lemma, where we have used $\tr(\Pi_{\lambda' \subseteq \lambda}) = \dim(\calQ^{d-1}_{\lambda'}) = s_{\lambda'}(1^{d-1})$. 
\end{proof}

Our second key lemma gives a similar recursive formula, except for the integral $S^{(d,r)}_{\lambda \mu}$. 

\begin{lemma} \label{lem:lift_recursion_integral}
    Fix $\lambda, \mu \in \Par(-,d)$, and $\lambda' \in \Par(-,d-1)$ such that $\lambda' \preceq \lambda$ and $\ell(\lambda') \leq r-1$. Then 
    \begin{equation*}
        S^{(d,r)}_{\lambda \mu} =  \frac{s_{\lambda}(1^{r})}{s_{\lambda'}(1^{r-1})} \sum_{\mu' \preceq \mu} \Lift^{\lambda \mu}_{\lambda'\mu'} \Big( S^{(d-1,r-1)}_{\lambda'\mu'} \Big).
    \end{equation*}
\end{lemma}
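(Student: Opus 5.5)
The plan is to compute the right-hand side directly. First I would show that each lift of $S^{(d-1,r-1)}_{\lambda'\mu'}$ collapses to a twirl of a product of the projectors defined above. Then I would relate that twirl to $S^{(d,r)}_{\lambda\mu}$ by a Schur's lemma argument for the copy of $U(r)$ acting on the last $r$ coordinates.

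\emph{Step 1: simplify each lift.} By \Cref{eq:S} applied in dimension $d-1$ (with $P^{(r-1)}_0$ projecting onto the last $r-1$ basis vectors of $\C^{d-1}$),
\[
S^{(d-1,r-1)}_{\lambda'\mu'} = \int_{V \in U(d-1)} (V_{\lambda'}\otimes V_{\mu'})\big(\Pi^{(r-1)}_{\lambda'}\otimes \Pi^{(r-1)}_{\mu'}\big)(V_{\lambda'}\otimes V_{\mu'})^\dagger \, \mathrm{d}V .
\]
Plugging this into $\Lift^{\lambda\mu}_{\lambda'\mu'}$ and using \Cref{eq:aux_1}, I can move $V$ through the embeddings, so that $E_{\lambda'\subseteq\lambda}V_{\lambda'} = V_\lambda E_{\lambda'\subseteq\lambda}$, and similarly for $\mu$. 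Since $V \in U(d-1) \subseteq U(d)$, left-invariance of the Haar measure on $U(d)$ absorbs $V$ into $U$. Then \Cref{eq:aux_3} with $t=r$ gives
\[
\Lift^{\lambda\mu}_{\lambda'\mu'}\big(S^{(d-1,r-1)}_{\lambda'\mu'}\big) = \calU_{\lambda\mu}\big(\Pi^{(r)}_{\lambda'\subseteq\lambda}\otimes \Pi^{(r)}_{\mu'\subseteq\mu}\big).
\]
Summing over $\mu'\preceq\mu$ and using $\sum_{\mu'}\Pi^{(r)}_{\mu'\subseteq\mu} = \Pi^{(r)}_\mu$, the right-hand side of the lemma becomes
\[
\frac{s_\lambda(1^r)}{s_{\lambda'}(1^{r-1})}\,\calU_{\lambda\mu}\big(\Pi^{(r)}_{\lambda'\subseteq\lambda}\otimes \Pi^{(r)}_\mu\big),
\]
while the left-hand side is $\calU_{\lambda\mu}\big(\Pi^{(r)}_{\lambda}\otimes \Pi^{(r)}_\mu\big)$, since $\Pi^{(r)}_\lambda = (P^{(r)}_0)_\lambda$.

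\emph{Step 2: an inner $U(r)$-twirl.} Let $U(r)$ be embedded in $U(d)$ as acting on the last $r$ basis vectors, fixing the first $d-r$. Every such $W$ commutes with $P^{(r)}_0$, so $W_\mu$ commutes with $\Pi^{(r)}_\mu = (P^{(r)}_0)_\mu$. By invariance of $\dU$, for any $X$ on $\calQ^d_\lambda$ we have
\[
\calU_{\lambda\mu}\big(X\otimes\Pi^{(r)}_\mu\big) = \calU_{\lambda\mu}\Big(\Big(\int_{W} W_\lambda X W_\lambda^\dagger \,\mathrm{d}W\Big)\otimes \Pi^{(r)}_\mu\Big).
\]
I apply this with $X = \Pi^{(r)}_{\lambda'\subseteq\lambda}$.

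The range of $\Pi^{(r)}_\lambda$ is spanned by the Gelfand--Tsetlin vectors whose tableaux use only the last $r$ symbols. Equivalently, it is the sum of the weight spaces with weights supported on the last $r$ coordinates. It is therefore invariant under this $U(r)$, and as a $U(r)$-representation it is isomorphic to $\calQ^r_\lambda$, which is irreducible of dimension $s_\lambda(1^r)$. Here $\ell(\lambda)\le r$ holds because interlacing gives $\ell(\lambda)\le \ell(\lambda')+1\le r$.

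The projector $\Pi^{(r)}_{\lambda'\subseteq\lambda}$ lies under $\Pi^{(r)}_\lambda$. Its trace counts SSYTs of shape $\lambda'$ in the $r-1$ symbols $\{d-r+1,\dots,d-1\}$, so it equals $s_{\lambda'}(1^{r-1})$. Schur's lemma for this irrep then gives
\[
\int_W W_\lambda\, \Pi^{(r)}_{\lambda'\subseteq\lambda}\, W_\lambda^\dagger\,\mathrm{d}W = \frac{s_{\lambda'}(1^{r-1})}{s_\lambda(1^r)}\,\Pi^{(r)}_\lambda .
\]
Substituting this into Step 1, the prefactors cancel and the right-hand side equals $\calU_{\lambda\mu}\big(\Pi^{(r)}_\lambda\otimes\Pi^{(r)}_\mu\big) = S^{(d,r)}_{\lambda\mu}$, which is the claim.

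The main obstacle is the representation-theoretic identification in Step 2. One must check that the range of $(P^{(r)}_0)_\lambda$ is an irreducible $U(r)$-module isomorphic to $\calQ^r_\lambda$. I would argue this via the weight-space description: $q_\lambda(P^{(r)}_0)$ projects onto the weight spaces supported on the last $r$ coordinates, and the branching rule applied $d-r$ times, peeling off the first coordinates, identifies that subspace with $\calQ^r_\lambda$. The subtlety is that this is branching to the \emph{last} $r$ coordinates, whereas the Gelfand--Tsetlin basis is built by restricting to the \emph{first} coordinates. I would resolve it by a permutation of coordinates, which is a unitary in $U(d)$ and so does not affect irreducibility or dimension. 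Only the dimension count and irreducibility are needed; the embedding in Step 1 does not need to respect this identification.
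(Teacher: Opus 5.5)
Your proof is correct and uses the same ingredients as the paper's proof: the $U(r)$-twirl identity via Schur's lemma on the range of $\Pi^{(r)}_\lambda$, the decomposition $\Pi^{(r)}_\mu=\sum_{\mu'\preceq\mu}\Pi^{(r)}_{\mu'\subseteq\mu}$, \Cref{eq:aux_3}, and absorbing $V\in U(d-1)$ into the Haar integral via \Cref{eq:aux_1}. The only difference is direction: you collapse the lifts first, while the paper starts from $S^{(d,r)}_{\lambda\mu}$ and builds the lifts up. Absorbing $V$ is really right-invariance of $\dU$ (the substitution $U\to UV$) rather than left-invariance, which is immaterial by bi-invariance, and your argument that the range of $\Pi^{(r)}_\lambda$ is an irreducible $U(r)$-module isomorphic to $\calQ^r_\lambda$ fills in a point the paper only asserts.
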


\newcommand{\dV}{\mathrm{d}V}
\newcommand{\dW}{\mathrm{d}W}

\begin{proof}

    From \Cref{eq:S} and the fact that $P^{(r)}_{0,\lambda} = \Pi^{(r)}_{\lambda}$, we have
\begin{equation*}
    S_{\lambda \mu}^{(d,r)} = \int_{U \in U(d)} \big( U_\lambda \otimes U_\mu \big) \cdot \big( \Pi^{(r)}_\lambda  \otimes \Pi^{(r)}_{\mu} \big) \cdot \big( U_\lambda \otimes U_\mu \big)^\dagger \cdot \dU.
\end{equation*}
    Embed $U(r) \hookrightarrow U(d)$ by letting $W \in U(r)$ act nontrivially on the \emph{last} $r$ basis elements. Note then that
    \begin{equation} \label{eq:aux_4}
        \int_{W \in U(r)} W_\lambda \cdot \Pi_{\lambda' \subseteq \lambda}^{(r)} \cdot W_\lambda^\dagger \cdot \dW = \frac{\tr\big( \Pi^{(r)}_{\lambda' \subseteq \lambda} \big)}{s_\lambda(1^r)} \cdot \Pi^{(r)}_{\lambda} = \frac{s_{\lambda'}(1^{r-1})}{s_\lambda(1^r)} \cdot \Pi^{(r)}_{\lambda}. 
    \end{equation}
    In the above, we have used that $\Pi_{\lambda' \subseteq \lambda}^{(r)}$ is a sub-projector of $\Pi_{\lambda}^{(r)}$, and Schur's lemma, noting that the integral is $U(r)$-invariant on the support of $\Pi^{(r)}_\lambda$, a subspace isomorphic to $\calQ^r_\lambda$. Thus
    \begin{align}
         \frac{s_{\lambda'}(1^{r-1})}{s_\lambda(1^r)} \cdot S^{(d,r)}_{\lambda \mu} & = \frac{s_{\lambda'}(1^{r-1})}{s_\lambda(1^r)} \cdot  \int_{U \in U(d)} \big(U_{\lambda} \otimes U_{\mu}\big)  \cdot \Big( \Pi^{(r)}_\lambda \otimes \Pi^{(r)}_\mu \Big) \cdot \big(U_{\lambda} \otimes U_{\mu}\big)^\dagger \cdot \dU \nonumber \\
         & = \int_{U \in U(d)} \int_{W \in U(r)} \big((UW)_{\lambda} \otimes U_{\mu}\big)  \cdot \Big( \Pi^{(r)}_{\lambda' \subseteq \lambda} \otimes \Pi^{(r)}_\mu \Big) \cdot \big((UW)_{\lambda} \otimes U_{\mu}\big)^\dagger \cdot \dW \cdot \dU \nonumber \\
         & = \int_{U \in U(d)} \int_{W \in U(r)} \big((UW)_{\lambda} \otimes (UW)_{\mu}\big)  \cdot \Big( \Pi^{(r)}_{\lambda' \subseteq \lambda} \otimes\Pi^{(r)}_\mu \Big) \cdot \big((UW)_{\lambda} \otimes (UW)_{\mu}\big)^\dagger \cdot \dW \cdot \dU \nonumber \\
         & = \int_{U \in U(d)} \big(U_{\lambda} \otimes U_{\mu}\big)  \cdot \Big( \Pi^{(r)}_{\lambda' \subseteq \lambda} \otimes \Pi^{(r)}_\mu \Big) \cdot \big(U_{\lambda} \otimes U_{\mu}\big)^\dagger \cdot \dU. \label{eq:halfway_through}
    \end{align}
    In the second step, we have used \Cref{eq:aux_4}. In the third step, we have used that $P_0^{(r)}$ is invariant under $W \in U(r)$, and hence $(P_0^{(r)})_\mu = \Pi^{(r)}_\mu$ is invariant under $W_\mu$. In the fourth step, we have used the right-invariance of the Haar measure, and redefined $UW \to U$. Finally, we use $\Pi^{(r)}_\mu = \sum_{\mu' \preceq \mu} \Pi^{(r)}_{\mu' \subseteq \mu}$, and \Cref{eq:aux_3} to get:
    \begin{align}
        \eqref{eq:halfway_through} 
        & = \sum_{\mu' \preceq \mu} \int_{U \in U(d)} \big(U_{\lambda} \otimes U_{\mu}\big)  \cdot \Big( \Pi^{(r)}_{\lambda' \subseteq \lambda} \otimes \Pi^{(r)}_{\mu' \subseteq \mu} \Big) \cdot \big(U_{\lambda} \otimes U_{\mu}\big)^\dagger \cdot \dU \nonumber \\
        & = \sum_{\mu' \preceq \mu} \int_{U \in U(d)} \big(U_{\lambda} \otimes U_{\mu}\big)  \cdot \big( E_{\lambda' \subseteq \lambda} \otimes E_{\mu' \subseteq \mu}\big) \cdot \Big( \Pi^{(r-1)}_{\lambda'} \otimes \Pi^{(r-1)}_{\mu'} \Big) \cdot \big( E_{\lambda' \subseteq \lambda} \otimes E_{\mu' \subseteq \mu}\big)^\dagger \cdot \big(U_{\lambda} \otimes U_{\mu}\big)^\dagger \cdot \dU. \label{eq:three-quarters-there}
    \end{align}
    Finally, we use the right-invariance of the Haar measure again, this time to re-insert a unitary $V \in U(d-1)$ as $U \to U \cdot V$ (viewing the action of $V$ on $U(d)$ as acting nontrivially only on the first $d-1$ basis elements), which we can then pass through the embedding maps using \Cref{eq:aux_1}. This gives us 
    \begin{align*}
        \eqref{eq:three-quarters-there} 
        & =\sum_{\mu' \preceq \mu} \int_{U \in U(d)} \int_{V \in U(d-1)} \big((UV)_{\lambda} \otimes (UV)_{\mu}\big)  \cdot \big( E_{\lambda' \subseteq \lambda} \otimes E_{\mu' \subseteq \mu}\big) \cdot \Big( \Pi^{(r-1)}_{\lambda'} \otimes \Pi^{(r-1)}_{\mu'} \Big) \cdot \\
        & \qquad \qquad \qquad \qquad \qquad \qquad  \qquad \qquad \quad \; \qquad \qquad \qquad  \qquad \cdot \big( E_{\lambda' \subseteq \lambda} \otimes E_{\mu' \subseteq \mu}\big)^\dagger \cdot \big((UV)_{\lambda} \otimes (UV)_{\mu}\big)^\dagger \cdot \dV \cdot \dU \\
        & = \sum_{\mu' \preceq \mu} \int_{U \in U(d)}  \big(U_{\lambda} \otimes U_{\mu}\big)  \cdot \big( E_{\lambda' \subseteq \lambda} \otimes E_{\mu' \subseteq \mu}\big) \cdot \\
        & \qquad \cdot \bigg( \int_{V \in U(d-1)} \big(V_{\lambda'} \otimes V_{\mu'}\big) \cdot \Big( \Pi^{(r-1)}_{\lambda'} \otimes \Pi^{(r-1)}_{\mu'} \Big) \cdot \big(V_{\lambda'} \otimes V_{\mu'}\big)^\dagger \cdot \dV \bigg) \cdot \big( E_{\lambda' \subseteq \lambda} \otimes E_{\mu' \subseteq \mu}\big)^\dagger \cdot \big(U_{\lambda} \otimes U_{\mu}\big)^\dagger \cdot \dU \\
        & = \sum_{\mu' \preceq \mu} \int_{U \in U(d)}  \big(U_{\lambda} \otimes U_{\mu}\big)  \cdot \big( E_{\lambda' \subseteq \lambda} \otimes E_{\mu' \subseteq \mu}\big)  \cdot S^{(d-1,r-1)}_{\lambda'\mu'}  \cdot \big( E_{\lambda' \subseteq \lambda} \otimes E_{\mu' \subseteq \mu}\big)^\dagger \cdot \big(U_{\lambda} \otimes U_{\mu}\big)^\dagger \cdot \dU \\
        & = \sum_{\mu' \preceq \mu} \Lift_{\lambda' \mu'}^{\lambda \mu}\Big(S^{(d-1,r-1)}_{\lambda'\mu'}\Big).
    \end{align*}
    This completes the proof. 
\end{proof}


\paragraph{A Schur polynomial identity.} We will need one last technical lemma. For a diagram $\lambda \in \Par( - , d)$, write $\lambda^-$ for the diagram in $\Par(-, d-1)$ with $\lambda^- = (\lambda_2, \lambda_3, \dots, \lambda_d)$. 

\begin{lemma} \label{lem:lambda_minus_identity}
    Suppose $\lambda, \mu \in \Par(-, d)$, with $\lambda_1 \geq \mu_1$, and such that $\ell(\lambda) \leq r$ and $\ell(\mu) \leq r$. Then 
    \begin{equation*}
        \frac{s_{\lambda}(1^r)}{s_{\lambda}(1^d)} \cdot \frac{s_{\lambda^-}(1^{d-1})}{s_{\lambda^-}(1^{r-1})} = \frac{s_{\lambda \cup \mu}(1^r)}{s_{\lambda \cup \mu}(1^d)} \cdot \frac{s_{\lambda^-\cup \mu^-}(1^{d-1})}{s_{\lambda^- \cup \mu^-}(1^{r-1})}. 
    \end{equation*}
    Rearranging this equation gives:
    \begin{equation*}
        \frac{s_{\lambda \cup \mu}(1^r)}{s_{\lambda \cup \mu}(1^d)} \cdot \frac{s_{\lambda}(1^{d})}{s_{\lambda^-}(1^{d-1})} = \frac{s_{\lambda^-\cup \mu^-}(1^{r-1})}{s_{\lambda^- \cup \mu^-}(1^{d-1})} \cdot \frac{s_{\lambda}(1^r)}{s_{\lambda^-}(1^{r-1})}.
    \end{equation*}
\end{lemma}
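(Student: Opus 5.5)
The plan is to compute both sides with the hook-content formula, \Cref{eq:hook_content_formula}, and compare them box by box. For any $\sigma \in \Par(-,d)$ with $\ell(\sigma) \leq r$ we have
\begin{equation*}
\frac{s_\sigma(1^r)}{s_\sigma(1^d)} = \prod_{\Box \in \sigma} \frac{r + c(\Box)}{d + c(\Box)}.
\end{equation*}
The hook lengths cancel in this ratio. The same identity holds for $\sigma^- \in \Par(-,d-1)$ with parameters $(d-1,r-1)$, since $\ell(\sigma^-) \leq r-1$.

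The key observation concerns how contents change when we pass from $\sigma$ to $\sigma^-$. A box $(i,j)$ of $\sigma$ with $i \geq 2$ becomes the box $(i-1,j)$ of $\sigma^-$, so its content rises by one. Hence its factor in the $(d-1,r-1)$ product is
\begin{equation*}
\frac{(r-1) + (c(\Box)+1)}{(d-1) + (c(\Box)+1)} = \frac{r+c(\Box)}{d+c(\Box)},
\end{equation*}
which is exactly its factor in the $(d,r)$ product. Therefore
\begin{equation*}
\frac{s_\sigma(1^r)}{s_\sigma(1^d)} \cdot \frac{s_{\sigma^-}(1^{d-1})}{s_{\sigma^-}(1^{r-1})} = \prod_{j=1}^{\sigma_1} \frac{r + j - 1}{d + j - 1}.
\end{equation*}
In words, only the first row of $\sigma$ survives, and the product depends on $\sigma$ only through $\sigma_1$.

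Next we apply this with $\sigma = \lambda$ and with $\sigma = \lambda \cup \mu$. Two facts are needed. First, $(\lambda \cup \mu)^- = \lambda^- \cup \mu^-$, which holds because union acts rowwise as $\max(\lambda_i,\mu_i)$ and $(\cdot)^-$ simply drops the first row. Second, $(\lambda\cup\mu)_1 = \max(\lambda_1,\mu_1) = \lambda_1$, by the hypothesis $\lambda_1 \geq \mu_1$. Together these show the two sides of the claimed identity are the same first-row product, which proves the first display.

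For nonvanishing, note that $\ell(\lambda\cup\mu) = \max(\ell(\lambda),\ell(\mu)) \leq r$. So every factor has a positive numerator, because $c(\Box) > -r$, and a positive denominator. In particular, all the Schur values appearing are nonzero.

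The rearranged form follows by multiplying both sides by $s_\lambda(1^d)\, s_{\lambda^-\cup\mu^-}(1^{r-1}) / \big(s_{\lambda^-}(1^{d-1})\, s_{\lambda^-\cup\mu^-}(1^{d-1})\big)$. Each of these quantities is nonzero by the length bounds just discussed.

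The main obstacle is bookkeeping: keeping track of the content shift under $\sigma \mapsto \sigma^-$, and checking the degenerate case $r=1$. When $r=1$, the diagrams $\lambda^-$ and $\mu^-$ are empty, $s_\varnothing = 1$, and the formula still holds.
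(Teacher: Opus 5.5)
Your proposal is correct and follows essentially the same route as the paper: both write each ratio as a hook-content product and use the content shift under $\sigma \mapsto \sigma^-$ to cancel every box outside the first row. Both then finish with the facts that $(\lambda\cup\mu)_1 = \lambda_1$ and $(\lambda\cup\mu)^- = \lambda^-\cup\mu^-$. Your explicit nonvanishing check and the $r=1$ remark are small additions that the paper handles only implicitly.
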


\begin{proof}
    By the hook-content formula, \Cref{eq:hook_content_formula}, we have
    \begin{equation*}
        \frac{s_{\lambda}(1^r)}{s_{\lambda}(1^d)} = \prod_{\Box \in \lambda} \frac{r + c(\Box)}{d + c(\Box)} = \bigg(\prod_{\Box \in \mathrm{Row}_1(\lambda)} \frac{r + c(\Box)}{d + c(\Box)}\bigg) \cdot \bigg(\prod_{\Box \notin \mathrm{Row}_1(\lambda)} \frac{r + c(\Box)}{d + c(\Box)}\bigg).
    \end{equation*}
    Here, we have split the product up by whether the box is in the first row of $\lambda$, or not. Now, we also have
    \begin{equation*}
        \frac{s_{\lambda^-}(1^{r-1})}{s_{\lambda^-}(1^{d-1})} = \prod_{\Box \in \lambda^-} \frac{(r-1) + c(\Box)}{(d-1) + c(\Box)} = \prod_{\Box \notin \mathrm{Row}_1(\lambda)} \frac{(r-1) + (c(\Box)+1)}{(d-1) + (c(\Box)+1)} = \prod_{\Box \notin \mathrm{Row}_1(\lambda)} \frac{r + c(\Box)}{d + c(\Box)}. 
    \end{equation*}
    The second equation holds because the boxes of $\lambda^-$ are in one-to-one correspondence with the boxes of $\lambda$ which are not in the first row -- simply shift those boxes of $\lambda$ up once. This shift increases the content of each box by one. Combining these two equations gives us
    \begin{equation*}
         \frac{s_{\lambda}(1^r)}{s_{\lambda}(1^d)} \cdot \frac{s_{\lambda^-}(1^{d-1})}{s_{\lambda^-}(1^{r-1})} = \prod_{\Box \in \mathrm{Row}_1(\lambda)} \frac{r + c(\Box)}{d + c(\Box)}. 
    \end{equation*}
    This ratio is well-defined since $\ell(\lambda) \leq r$, so that $\ell(\lambda^-) \leq r-1$. 
    Moreover, this equation holds not only for $\lambda$, but for any diagram with length at most $r$. In particular, $\lambda \cup \mu$ has this property, and so we also have 
    \begin{equation*}
         \frac{s_{\lambda \cup \mu}(1^r)}{s_{\lambda \cup \mu}(1^d)} \cdot \frac{s_{(\lambda\cup \mu)^-}(1^{d-1})}{s_{(\lambda \cup \mu)^-}(1^{r-1})} = \prod_{\Box \in \mathrm{Row}_1(\lambda \cup \mu)} \frac{r + c(\Box)}{d + c(\Box)}. 
    \end{equation*}
    The lemma then follows from two observations: (i) $\lambda$ and $\lambda \cup \mu$ have the same boxes in their first row, since $\lambda_1 \geq \mu_1$; and (ii) $(\lambda \cup \mu)^- = \lambda^- \cup \mu^-$. 
\end{proof}

\subsubsection{Proving \Cref{prop:untransposed_bound}}

We are now ready to prove the proposition, which we restate here for convenience.

\begin{proposition}[\Cref{prop:untransposed_bound}, restated]
    Let $\lambda \in \Par(-,d)$ and $\mu \in \Par(-,d)$. Then we have $$\Delta_{\lambda \mu} \in \Sep_+(\calQ_{\lambda}^d : \calQ_{\mu}^d).$$ 
\end{proposition}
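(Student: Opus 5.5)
We argue by induction on $r$, with $d \geq r$ arbitrary, exactly along the lines sketched in the technical overview. Write $\Phi^{(d,r)}_\sigma = s_\sigma(1^r)/s_\sigma(1^d)$. By the symmetry $S_{\lambda\mu} \leftrightarrow S_{\mu\lambda}$ (swap of tensor factors preserves $\Sep_+$), we may assume $\lambda_1 \geq \mu_1$. We may also assume $\ell(\lambda),\ell(\mu) \leq r$. Otherwise $P_\lambda = 0$ or $P_\mu = 0$, so $S^{(d,r)}_{\lambda\mu}=0$, and $\Delta_{\lambda\mu}$ is a nonnegative multiple of $I_\lambda \otimes I_\mu$, which is manifestly separable.

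\textbf{Base case $r=1$, and the nested case.} When $r=1$, both diagrams have at most one row, so $\mu \subseteq \lambda$. More generally, whenever $\mu \subseteq \lambda$ we have $\lambda \cup \mu = \lambda$. Using Schur's lemma on the first factor,
\begin{equation*}
\Delta_{\lambda\mu} = \int \big(\Phi_\lambda I_\lambda - P_\lambda\big)\otimes I_\mu \, \dP + \int P_\lambda \otimes (I_\mu - P_\mu)\, \dP .
\end{equation*}
The first term vanishes after integration, since $\int P_\lambda \,\dP = \Phi_\lambda I_\lambda$. The second term is an integral of tensor products of PSD operators, because $P_\mu$ is a projector. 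So $\Delta_{\lambda\mu}$ is separable, and the base case holds.

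\textbf{Inductive step.} Assume $r \geq 2$ and the claim holds for $(d-1,r-1)$. Take $\lambda' = \lambda^-$. This interlaces $\lambda$ and has $\ell(\lambda^-) \leq r-1$. By \Cref{lem:lift_recursion_integral},
\begin{equation*}
S^{(d,r)}_{\lambda\mu} = \frac{s_\lambda(1^r)}{s_{\lambda^-}(1^{r-1})}\sum_{\mu'\preceq\mu}\Lift^{\lambda\mu}_{\lambda^-\mu'}\big(S^{(d-1,r-1)}_{\lambda^-\mu'}\big).
\end{equation*}
Substituting $S^{(d-1,r-1)}_{\lambda^-\mu'} = \Phi^{(d-1,r-1)}_{\lambda^-\cup\mu'} I\otimes I - \Delta_{\lambda^-\mu'}$ gives two pieces. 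The first is $-\sum_{\mu'} c\,\Lift(\Delta_{\lambda^-\mu'})$ with $c \geq 0$; after the sign flip in $\Delta_{\lambda\mu}$ it is separable by the inductive hypothesis and the fact that lifts preserve $\Sep_+$. It remains to show that
\begin{equation*}
\Phi_{\lambda\cup\mu}\, I_\lambda\otimes I_\mu - \frac{s_\lambda(1^r)}{s_{\lambda^-}(1^{r-1})}\sum_{\mu'\preceq\mu}\Phi^{(d-1,r-1)}_{\lambda^-\cup\mu'}\,\Lift\big(I_{\lambda^-}\otimes I_{\mu'}\big)
\end{equation*}
is separable.

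By \Cref{lem:lift_recursion_identity} together with the rearranged identity of \Cref{lem:lambda_minus_identity}, we have
\begin{equation*}
\Phi_{\lambda\cup\mu}\, I\otimes I = \frac{s_\lambda(1^r)}{s_{\lambda^-}(1^{r-1})}\,\Phi^{(d-1,r-1)}_{\lambda^-\cup\mu^-}\sum_{\mu'}\Lift(I\otimes I).
\end{equation*}
The difference therefore becomes
\begin{equation*}
\frac{s_\lambda(1^r)}{s_{\lambda^-}(1^{r-1})}\sum_{\mu'\preceq\mu}\Big(\Phi^{(d-1,r-1)}_{\lambda^-\cup\mu^-}-\Phi^{(d-1,r-1)}_{\lambda^-\cup\mu'}\Big)\Lift\big(I_{\lambda^-}\otimes I_{\mu'}\big).
\end{equation*}
Each $\Lift(I\otimes I)$ is a twirl of $\Pi\otimes\Pi$, hence separable. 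Interlacing gives $\mu'_i \geq \mu_{i+1} = \mu^-_i$, so $\mu^- \subseteq \mu'$ and thus $\lambda^-\cup\mu^- \subseteq \lambda^-\cup\mu'$. \Cref{lem:ratio_of_s's_subset_diagrams} then makes each coefficient nonnegative, so the difference is a nonnegative combination of separable operators. This completes the induction.

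\textbf{Expected obstacle.} The main delicacy is the choice $\lambda'=\lambda^-$ together with the reduction to $\lambda_1\geq\mu_1$. This is exactly what makes \Cref{lem:lambda_minus_identity} applicable and makes the coefficient differences have the right sign. I would also check two edge cases. First, $\mu'$ with $\ell(\mu')>r-1$: there $\Phi^{(d-1,r-1)}_{\lambda^-\cup\mu'}=0$ and the bound only improves. Second, $d=r$, where $\Delta_{\lambda\mu}=0$ trivially.
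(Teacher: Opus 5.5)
Your proposal is correct and follows the paper's proof essentially step for step. It uses the same reductions to $\ell(\lambda),\ell(\mu)\le r$ and $\lambda_1\ge\mu_1$, and the same $r=1$ base case via $\mu\subseteq\lambda$. The inductive step is also the paper's: take $\lambda'=\lambda^-$, combine \Cref{lem:lift_recursion_integral}, \Cref{lem:lift_recursion_identity} and \Cref{lem:lambda_minus_identity}, and obtain nonnegativity of the coefficients from $\mu^-\subseteq\mu'$ together with the subset-monotonicity lemma for $s_\sigma(1^t)/s_\sigma(1^d)$.
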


\begin{proof}
    First, we can assume $\ell(\lambda) \leq r$ and $\ell(\mu) \leq r$, since otherwise $\Delta_{\lambda \mu} = 0$. For example, if $\ell(\lambda) > r$, then $P_\lambda = 0$, which makes the integral vanish, and also $\ell(\lambda \cup \mu) \geq \ell(\lambda) > r$, which makes $s_{\lambda \cup \mu}(1^r) = 0$. Thus $\Delta_{\lambda \mu} = 0$, and the case where $\ell(\mu) > r$ is similar.
    
    We proceed by induction. Our base case will be $r=1$ and $d \geq r$ arbitrary. Assume $\mu_1 \leq \lambda_1$. Then since $\lambda$ and $\mu$ are just one-row diagrams, $\mu \subseteq \lambda$, and moreover, $\lambda \cup \mu = \lambda$. We then have
    \begin{align*}
        \Delta_{\lambda \mu} & = \Big( \frac{s_{\lambda}(1^r)}{s_{\lambda}(1^d)} \cdot I_{\lambda} \otimes I_{\mu}\Big) - \Big( \int q_\lambda^d(P) \otimes q^d_\mu(P) \cdot \dP \Big) \\
        & = \Big( \int q_\lambda^d(P) \cdot \dP \Big) \otimes I_\mu - \Big( \int q_\lambda^d(P) \otimes q^d_\mu(P) \cdot \dP \Big) \\
        & = \int q_\lambda^d(P) \otimes \big( I_\mu - q^d_\mu(P) \big) \cdot \dP. 
    \end{align*}
    Since $q^d_\mu(P) \preceq I_\mu$, we have $\Delta_{\lambda \mu} \in \Sep_+( \calQ^d_\lambda : \calQ^d_\mu)$, as we wanted to show. The case where $\mu_1 \geq \lambda_1$ is symmetric. 

    Now assume we have shown the result for rank $r-1$ and arbitrary dimension at least $r-1$. We will prove the result for rank $r$ and dimension $d \geq r$. We also assume $\lambda_1 \geq \mu_1$ --- the proof is symmetric in the other case. We have
    \begin{align*}
        \Delta_{\lambda \mu} & =  \frac{s_{\lambda \cup \mu}(1^r)}{s_{\lambda \cup \mu}(1^d)} \cdot I_{\lambda} \otimes I_{\mu}  - S^{(d,r)}_{\lambda \mu} \\
        & = \bigg(\frac{s_{\lambda \cup \mu}(1^r)}{s_{\lambda \cup \mu}(1^d)} \cdot \frac{s_{\lambda}(1^{d})}{s_{\lambda^-}(1^{d-1})} \cdot \sum_{\mu'\preceq \mu}\Lift^{\lambda \mu}_{\lambda^-\mu'} \Big( I_{\lambda^-} \otimes I_{\mu'} \Big) \bigg) - \bigg( \frac{s_{\lambda}(1^r)}{s_{\lambda^-}(1^{r-1})} \sum_{\mu' \preceq \mu} \Lift^{\lambda \mu}_{\lambda^-\mu'} \Big(  S^{(d-1,r-1)}_{\lambda^-\mu'}\Big) \bigg) \\
        & = \bigg(\frac{s_{\lambda^-\cup \mu^-}(1^{r-1})}{s_{\lambda^- \cup \mu^-}(1^{d-1})} \cdot \frac{s_{\lambda}(1^r)}{s_{\lambda^-}(1^{r-1})} \cdot \sum_{\mu'\preceq \mu}\Lift^{\lambda \mu}_{\lambda^-\mu'} \Big( I_{\lambda^-} \otimes I_{\mu'} \Big) \bigg) - \bigg( \frac{s_{\lambda}(1^r)}{s_{\lambda^-}(1^{r-1})} \sum_{\mu' \preceq \mu} \Lift^{\lambda \mu}_{\lambda^-\mu'} \Big(  S^{(d-1,r-1)}_{\lambda^-\mu'}\Big) \bigg) \\
        & = \frac{s_{\lambda}(1^r)}{s_{\lambda^-}(1^{r-1})} \cdot \sum_{\mu' \preceq \mu} \Lift^{\lambda \mu}_{\lambda^-\mu'} \bigg( \frac{s_{\lambda^-\cup \mu^-}(1^{r-1})}{s_{\lambda^- \cup \mu^-}(1^{d-1})} \cdot I_{\lambda^-} \otimes I_{\mu'} - S^{(d-1,r-1)}_{\lambda^-\mu'} \bigg).
    \end{align*}
    The second step above follows by applying \Cref{lem:lift_recursion_identity,lem:lift_recursion_integral}, with $\lambda' \leftarrow \lambda^-$. The third step uses \Cref{lem:lambda_minus_identity}. The term appearing inside the summand's lift can be rewritten in terms of $\Delta_{\lambda^- \mu'}$:
    \begin{equation} \label{eq:almost_there}
        \frac{s_{\lambda^-\cup \mu^-}(1^{r-1})}{s_{\lambda^- \cup \mu^-}(1^{d-1})} \cdot I_{\lambda^-} \otimes I_{\mu'} - S^{(d-1,r-1)}_{\lambda^-\mu'}  = \bigg(\frac{s_{\lambda^-\cup \mu^-}(1^{r-1})}{s_{\lambda^- \cup \mu^-}(1^{d-1})} - \frac{s_{\lambda^-\cup \mu'}(1^{r-1})}{s_{\lambda^- \cup \mu'}(1^{d-1})} \bigg) \cdot I_{\lambda^-} \otimes I_{\mu'} + \Delta_{\lambda^- \mu'}.
    \end{equation}
    The term on the far right is in $\Sep_+(\calQ^{d-1}_{\lambda^-} : \calQ^{d-1}_{\mu'})$ by our inductive hypothesis, and therefore lifts to an element of $\Sep_+(\calQ^d_\lambda: \calQ^d_\mu)$. Technically, our inductive hypothesis does not cover those $\mu$ for which $\ell(\mu') > r-1$, but for such diagrams, $\Delta^{(d-1,r-1)}_{\lambda-\mu'} = 0$. The identity does too, provided its prefactor is nonnegative. Thus, it suffices to show that
    \begin{equation} \label{eq:almost_almost_there}
        \frac{s_{\lambda^-\cup \mu^-}(1^{r-1})}{s_{\lambda^- \cup \mu^-}(1^{d-1})} \geq  \frac{s_{\lambda^-\cup \mu'}(1^{r-1})}{s_{\lambda^- \cup \mu'}(1^{d-1})}.
    \end{equation}
    To prove this, we note that if $\mu' \preceq \mu$, then $\mu_{i+1} \leq \mu'_{i} \leq \mu_i$ for all $i \in [d-1]$. In particular, $\mu^-_{i} = \mu_{i+1} \leq \mu'_i$ for all $i \in [d-1]$, and hence $\mu^- \subseteq \mu'$. Thus $\lambda^- \cup \mu^- \subseteq \lambda^- \cup \mu'$ for all $\mu'$. \Cref{lem:ratio_of_s's_subset_diagrams} then implies \Cref{eq:almost_almost_there}, and this completes the proof. \qedhere

\end{proof}

%% file: transposition.tex
\newcommand{\PTT}{\mathrm{PTT}}

Combining the random purification channel with the optimal pure state transposition map of \cite{BGSQ26} straightforwardly gives a simple rank-$r$ mixed state transposer. We write $\calT^{(d,n,k)}: \states(\C^d)^{\otimes n} \to \states(\C^d)^{\otimes k}$ for the optimal pure state transposer.

Two quick remarks before we describe the algorithm. First, the transposition task here is defined with respect to the computational basis. This should not be confused with the transpose in the Schur basis that arises due to the passage to the Choi state later in the lower bound analysis. Second, we define the computational basis of the purification space $\C^d \otimes \C^r$ using a product basis whose first factor is the computational basis of $\C^d$. With this convention, $\Tr_{\reg{B}}( \ketbra{\brho}_{\reg{AB}}^T) = \rho_{\reg{A}}^T$. 

{
\floatstyle{boxed} 
\restylefloat{figure}
\begin{figure}[H]
Given $n$ copies of a rank-$r$ mixed state $\rho_{\reg{A}}$:
\begin{enumerate}
    \item Apply $\Purify^{(d,r,n)}$ to prepare $n$ copies of a random purification $\ket{\brho}_{\reg{AB}} \in \C^d \otimes \C^r$. Set $D = d \cdot r$. 
    \item Apply $\calT^{(D, n, k)}$ yielding a mixed state $\bsigma_{\reg{AB}} \in \states( \C^d \otimes \C^r)^{\otimes k}$.
    \item Trace out the auxiliary registers, and output $\Tr_{\reg{B}}(\bsigma_{\reg{AB}}) \in \states(\C^d)^{\otimes k}$. 
\end{enumerate}
\caption{A mixed state transposition channel, which we call the \emph{PTT transposer}, due to its action: \emph{purify--transpose--trace}. We denote this channel by $\PTT^{(d,n,k)}$. As usual, we will sometimes drop the superscript's parameters when these are clear from context.}
\label{fig:reduction_transpose}
\end{figure}
}

This construction gives us an upper bound on the sample complexity of mixed state transposing.

\begin{proposition}[Sample complexity of the PTT transposer] \label{prop:upper_bound_PTT}
    With $n = O(krd/\eps)$ copies of a mixed state $\rho \in \states(\C^d)$, the PTT transposer $\PTT^{(d,n,k)}$ produces a $k$-copy state with fidelity at least $1-\eps$ with $(\rho^T)^{\otimes k}$. 
\end{proposition}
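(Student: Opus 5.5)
I will follow the proof of \Cref{prop:upper_bound_PCT} essentially verbatim, replacing Werner's cloner with the optimal pure state transposer $\calT^{(D,n,k)}$ of \cite{BGSQ26}. The one new input is a fidelity guarantee for that transposer. I will cite from \cite{BGSQ26} that it is complementary to Werner's cloner and achieves the same fidelity on every pure input. Concretely, for every $\ket{v} \in \C^D$,
\begin{equation*}
    \Fid\Big( \calT^{(D,n,k)}\big(\ketbra{v}^{\otimes n}\big), \big(\ketbra{v}^{T}\big)^{\otimes k} \Big) = \frac{D[n]}{D[n+k]} \geq 1 - \frac{kD}{n},
\end{equation*}
where the inequality is \Cref{lem:ratio_of_symmetric_subspace_dimensions}. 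This step is the main potential obstacle: the guarantee has to be imported correctly, with the transpose taken in the computational basis of $\C^D$. That basis is the product basis fixed in the remarks preceding the statement. If \cite{BGSQ26} states the fidelity differently, for instance as $(n+1)$-type ratios, I will only need that it is $1 - O(kD/n)$, which suffices.

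First, by linearity and the form of the random purification channel,
\begin{equation*}
    \PTT(\rho^{\otimes n}) = \E_{\ket{\brho}}\Big[ \Tr_{\reg{B}}\Big( \calT\big(\ketbra{\brho}^{\otimes n}\big)\Big)\Big],
\end{equation*}
with $\ket{\brho} \in \C^d \otimes \C^r$ and $D = dr$. Next, by concavity of the square-root fidelity \cite[Corollary 3.26]{Wat18}, it suffices to lower bound each term $\Fid\big(\Tr_{\reg{B}}\calT(\ketbra{\brho}^{\otimes n}), \rho^{T,\otimes k}\big)$ uniformly in $\ket{\brho}$.

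For a fixed purification I use the basis convention from the remark: the partial transpose over the full product basis commutes with tracing out $\reg{B}$, so $\Tr_{\reg{B}}(\ketbra{\brho}^T) = \rho^T$. Taking tensor powers gives $\rho^{T,\otimes k} = \Tr_{\reg{B}}\big((\ketbra{\brho}^T)^{\otimes k}\big)$. Data processing of fidelity under $\Tr_{\reg{B}}$ \cite[Theorem 3.27]{Wat18} then gives
\begin{equation*}
    \Fid\Big(\Tr_{\reg{B}}\calT\big(\ketbra{\brho}^{\otimes n}\big), \rho^{T,\otimes k}\Big) \geq \Fid\Big(\calT\big(\ketbra{\brho}^{\otimes n}\big), \big(\ketbra{\brho}^T\big)^{\otimes k}\Big) \geq 1 - \frac{krd}{n}.
\end{equation*}
Squaring the averaged square roots preserves the bound, so $\Fid\big(\PTT(\rho^{\otimes n}), \rho^{T,\otimes k}\big) \geq 1 - krd/n$. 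Choosing $n = \lceil krd/\eps \rceil$ completes the argument.
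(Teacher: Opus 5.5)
Your proposal is correct and matches the paper's proof step for step. Both proceed by writing the RPC output as a mixture, applying concavity of the square-root fidelity, and using data processing under $\Tr_{\reg{B}}$ with the product-basis transpose convention. Both then finish with the fidelity $D[n]/D[n+k]$ imported from \cite{BGSQ26}, combined with \Cref{lem:ratio_of_symmetric_subspace_dimensions}.
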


\begin{proof}
    The proof is analogous to the proof of the sample complexity of the PCT cloner, \Cref{prop:upper_bound_PCT}. We start by writing
    \begin{equation*}
        \PTT(\rho^{\otimes n}) = \Tr_{\reg{B}} \Big(\calT  \big(\Purify \big(\rho^{\otimes n}\big) \big) \Big) = \Tr_{\reg{B}} \Big(\calT  \Big(\E_{\ket{\brho}} \big[\ketbra{\brho}^{\otimes n}\big] \Big) \Big) = \E_{\ket{\brho}} \Big[ \Tr_{\reg{B}} \Big( \calT \big(\ketbra{\brho}^{\otimes n} \big) \Big) \Big].
    \end{equation*}
    By concavity of the ``square-root fidelity'' \cite[Corollary 3.26]{Wat18}, we then have
    \begin{equation*}
        \sqrt{\Fid \Big(  \PTT(\rho^{\otimes n}) , \rho^{T,\otimes k}\Big)} \geq \E_{\ket{\brho}}\Bigg[ \sqrt{\Fid \Big( \Tr_{\reg{B}} \Big( \calT \big(\ketbra{\brho}^{\otimes n} \big) \Big), \rho^{T, \otimes k} \Big)} \Bigg].
    \end{equation*}
    Next, by data processing of fidelity \cite[Theorem 3.27]{Wat18}, 
    \begin{align*}
        \Fid \Big( \Tr_{\reg{B}} \Big( \calT\big(\ketbra{\brho}^{\otimes n} \big) \Big), \rho^{T, \otimes k} \Big) & = \Fid \Big( \Tr_{\reg{B}} \Big( \calT \big(\ketbra{\brho}^{\otimes n} \big) \Big), \Tr_{\reg{B}}\big(\ketbra{\brho}^{T, \otimes k}\big) \Big) \\
        & \geq \Fid \Big( \calT \big(\ketbra{\brho}^{\otimes n} \big) , \ketbra{\brho}^{T, \otimes k}\Big).
    \end{align*}
    In \cite{BGSQ26}, the authors showed that $\calT^{(d,n,k)}$ attains fidelity $d[n]/d[n+k]$ for all pure state inputs. Combining this with \Cref{lem:ratio_of_symmetric_subspace_dimensions}, we obtain
    \begin{equation*}
        \Fid \Big( \calT \big(\ketbra{\brho}^{\otimes n} \big) , \ketbra{\brho}^{T, \otimes k}\Big) = \frac{D[n]}{D[n+k]} \geq 1 - \frac{kD}{n} = 1 - \frac{krd}{n}. 
    \end{equation*}
    Since this holds for all $\ket{\brho}$, we obtain
    \begin{equation*}
        \Fid \Big(  \PTT(\rho^{\otimes n}) , \rho^{T, \otimes k}\Big) \geq \Bigg( \E_{\ket{\brho}} \Bigg[ \sqrt{\Fid \Big( \calT \big(\ketbra{\brho}^{\otimes n} \big) , \ketbra{\brho}^{T, \otimes k}\Big)} \Bigg] \Bigg)^2 \geq 1 - \frac{krd}{n}. 
    \end{equation*}
    Taking $n = \lceil krd/\eps \rceil$ makes this at least $1 - \eps$. This completes the proof.
\end{proof}

We are also able to prove a lower bound in a similar fashion to our cloning lower bound. In this case, the argument turns out to be easier: because the object we want to approximate is transposed, it will turn out that we want an upper bound on 
\begin{equation*}
    \int q_\lambda^d(P) \otimes q_\mu^d(P) \cdot \dP,
\end{equation*}
rather than the partially-transposed integral. As previously noted in the technical overview, this integral is easier to bound in the PSD order directly --- we will see this pan out in the proof below. 

\begin{proposition}[A lower bound for transposing projector states]\label{prop:projector_lower_bound_transpose}
Any channel $\calT: \states(\C^d)^{\otimes n} \to \states(\C^d)^{\otimes k}$ which transposes rank-$r$ projector states to fidelity $1-\eps$ requires at least $n = \Omega(krd/\epsilon)$ copies as input, for $d \geq 2$, $r \leq d/2$, and $\epsilon \leq 1/16$. 
\end{proposition}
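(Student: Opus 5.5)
We follow the five-step template of the cloning lower bound (\Cref{prop:projector_lower_bound}), changing only the target state. \textbf{Step 1:} $\rho^{T,\otimes k} = (P^T)^{\otimes k}/r^k$ and $P^T$ is again a rank-$r$ projector, so \Cref{lem:reduction_from_fidelity_to_overlap} gives $1-\eps \le \tr(\calT(\rho^{\otimes n}) \cdot P^{T,\otimes k})$ for every $P$. \textbf{Step 2:} The symmetrization of \Cref{lem:reduction_to_perm_invariant} applies verbatim, since $P^{T,\otimes k}$ is permutation invariant; so we may assume $\calT$ is permutation invariant.

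\textbf{Step 3:} Pass to the Choi state. The figure of merit becomes $r^{-n}\tr\big(J(\calT)\cdot\int P^{T,\otimes n}\otimes P^{T,\otimes k}\,\dP\big)$, using the worst-case-at-most-average bound. In the Schur basis, with transposes taken as in the paper's convention (conjugating the output Schur transform accordingly), this becomes $\bigoplus_{\lambda,\mu} I\otimes I\otimes q_\lambda(P)^T\otimes q_\mu(P)^T$. The same trace-preservation argument as in \Cref{lem:symmetric_channels_perfomance_upper_bounded} then bounds the overlap by
\[
\E_{\blambda}\Big[\tfrac{s_{\blambda}(1^d)}{s_{\blambda}(1^r)}\max_{\mu\vdash k,\ \ell(\mu)\le r}\Big\|\int q_{\blambda}(P)^T\otimes q_\mu(P)^T\,\dP\Big\|_\infty\Big].
\]
Full transposition preserves the operator norm, so the integrand can be replaced by the untransposed $S^{(d,r)}_{\blambda\mu}$. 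One point needs care here: transposing the input in the Schur basis must be compatible with computational-basis transposition of $\rho$. Since $P^T$ ranges over the same Haar-distributed set of projectors, the integral is unchanged whichever convention is used.

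\textbf{Step 4:} Bound $\|S^{(d,r)}_{\lambda\mu}\|_\infty \le s_{\lambda\cup\mu}(1^r)/s_{\lambda\cup\mu}(1^d)$. This follows immediately from \Cref{prop:untransposed_bound}, since separability implies positivity. The authors promise a direct argument, and I would give one. $S_{\lambda\mu}$ commutes with $U_\lambda\otimes U_\mu$, so by Littlewood--Richardson it acts on each isotypic component $\calQ_\nu$ of $\calQ_\lambda\otimes\calQ_\mu$. On that component it is dominated by $\int q_\nu(P)\,\dP = \frac{s_\nu(1^r)}{s_\nu(1^d)} I$: one has $P_\lambda\otimes P_\mu = (q_\lambda\otimes q_\mu)(P)$, which restricts to $q_\nu(P)$ on each irreducible copy, and averaging over rotations is block diagonal. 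Off-diagonal multiplicity blocks need a moment's thought, but twirling the full operator $(q_\lambda\otimes q_\mu)(P)$, which is a projector, gives a PSD operator bounded by the twirl of $I$ restricted to the block. Every $\nu$ appearing contains $\lambda\cup\mu$, so \Cref{lem:ratio_of_s's_subset_diagrams} finishes this step.

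\textbf{Step 5:} \Cref{lem:best_mu_is_lambda+ke1} holds with $\mu$ ranging over $\Par(k,d)$ rather than $\Par(n+k,d)$. In its proof the bound depends only on $|\mu\setminus\lambda|\le k$, with the content of the $j$-th added box at most $\lambda_1+j-1$. This gives exactly
\[
1-\eps\le \E_{\blambda}\Big[\prod_{j=1}^k \frac{r+\blambda_1+j-1}{d+\blambda_1+j-1}\Big],
\]
and \Cref{lem:lower_bound_from_WSS} yields $n\ge krd/(8\eps)$ under the stated parameter ranges.

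The main obstacle is Step 3: correctly tracking the two transposes (Choi input transpose in the Schur basis, and the target's computational-basis transpose), so that the integral really becomes fully transposed, and hence untransposed in norm. Everything after that reuses earlier results.
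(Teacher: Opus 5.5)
\textbf{Gap in Step 5.} Your Step 5 fails, and the whole bound fails with it. \Cref{lem:best_mu_is_lambda+ke1} needs $|\mu\setminus\lambda|\ge k$, not $\le k$. Its proof bounds the factors of the first $k$ added boxes by $\frac{r+\lambda_1+j-1}{d+\lambda_1+j-1}$ and every further factor by $1$. With fewer than $k$ boxes the product is therefore \emph{larger}, not smaller. In the cloning setting $|\mu|=n+k$ forces at least $k$ new boxes, but here $\mu\vdash k$. Whenever $n\ge k$ you can choose $\mu\vdash k$ with $\mu\subseteq\lambda$, and then automatically $\ell(\mu)\le\ell(\lambda)\le r$. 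For this $\mu$ we have $\lambda\cup\mu=\lambda$. Since $\lambda\subseteq\lambda\cup\mu$ always, \Cref{lem:ratio_of_s's_subset_diagrams} gives $\max_{\mu} s_{\lambda\cup\mu}(1^r)/s_{\lambda\cup\mu}(1^d) = s_\lambda(1^r)/s_\lambda(1^d)$ exactly. Your bracket is then identically $1$, and you only get $1-\epsilon\le 1$.

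The bound in your Step 4, $\|S^{(d,r)}_{\lambda\mu}\|_\infty\le s_{\lambda\cup\mu}(1^r)/s_{\lambda\cup\mu}(1^d)$, is true (it also follows from \Cref{prop:untransposed_bound}). It is just too weak for transposition, because it only records that the irreps in $\calQ_\lambda\otimes\calQ_\mu$ contain $\lambda\cup\mu$.

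\textbf{The fix.} It is already inside your direct Littlewood--Richardson argument; you only need to keep the size information you discarded. Every $\nu$ with $c^\nu_{\lambda\mu}\neq 0$ satisfies $\lambda\subseteq\nu$ and $|\nu|=|\lambda|+|\mu|=n+k$. So $\nu\setminus\lambda$ has exactly $k$ boxes, and the content-ordering argument applies with the correct inequality. This yields the $\mu$-independent bound $\int q_\lambda(P)\otimes q_\mu(P)\,\dP\preceq \frac{s_{\lambda+k e_1}(1^r)}{s_{\lambda+k e_1}(1^d)}\, I_\lambda\otimes I_\mu$, which is exactly what the paper proves in its Step 4. Plugging this into your Step 3 gives the inequality you displayed, and \Cref{lem:lower_bound_from_WSS} finishes.

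\textbf{Minor point.} There are no off-diagonal multiplicity blocks to worry about. The isomorphism $q_\lambda(P)\otimes q_\mu(P)\cong\bigoplus_\nu I_{c^\nu_{\lambda\mu}}\otimes q_\nu(P)$ holds for all matrices $P$, since these are polynomial representations. So the twirl is exactly $\bigoplus_\nu \frac{s_\nu(1^r)}{s_\nu(1^d)}\, I$.
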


\begin{proof}
    The proof follows the same steps as the proof of the cloning lower bound, \Cref{prop:projector_lower_bound}. Fortunately, we have already done almost all of the work necessary in that proof. 

    \paragraph{Step 1:} We start by reducing from fidelity to overlap, using \Cref{lem:reduction_from_fidelity_to_overlap}. This time, we set $\Pi \leftarrow (P^T)^{\otimes k}$ and $\sigma \leftarrow \calT(\rho^{\otimes n})$, finding
    \begin{equation*}
        \Fid \Big( \calT(\rho^{\otimes n}), \rho^{T,\otimes k} \Big) \leq \tr \Big( \calT(\rho^{\otimes n}) \cdot P^{T,\otimes k} \Big). 
    \end{equation*}
    In particular, it suffices to show that if $\tr(\calT (\rho^{\otimes n} ) \cdot P^{T,\otimes k} ) \geq 1-\eps$ for all $\rho$, then we have a lower bound on $n$. 

    \paragraph{Step 2:} By an argument identical to \Cref{lem:reduction_to_perm_invariant}, except with the replacement $P^{\otimes m} \to P^{T,\otimes k}$, we reduce to the case where $\calT$ is a permutation-invariant channel. 

    \paragraph{Step 3:} By an argument identical to \Cref{lem:symmetric_channels_perfomance_upper_bounded}, and the following discussion, except with the replacement $P^{\otimes m} \to P^{T, \otimes k}$, we obtain
    \begin{align}
        \min_P \Big[ \tr\Big( \calT(\rho^{\otimes n}) \cdot P^{T,\otimes k} \Big) \Big] & \leq \sum_{\substack{\lambda \vdash n \\ \ell(\lambda) \leq r}} \frac{\dim(\calP_\lambda) \cdot \dim(\calQ^d_\lambda)}{r^n} \cdot \max_{ \substack{\mu \vdash k \\ \ell(\mu) \leq r}} \norm{ \int_P q_\lambda(P)^T \otimes q_\mu(P)^T \cdot \dP }_\infty \nonumber \\
        & = \E_{\blambda} \bigg[ \frac{s_{\blambda}(1^d)}{s_{\blambda}(1^r)} \cdot \max_{ \substack{\mu \vdash k \\ \ell(\mu) \leq r}} \norm{ \int_P q_{\blambda}(P) \otimes q_\mu(P) \cdot \dP }_\infty \bigg]. \label{eq:min_leq_E_transpose}
    \end{align}

    \paragraph{Step 4:} We now show the following $\mu$-independent bound:
    \begin{equation*}
        \int q^d_\lambda(P) \otimes q^d_\mu(P) \cdot \dP \preceq \frac{s_{\lambda + k\cdot e_1}(1^r)}{s_{\lambda + k \cdot e_1}(1^d)} \cdot I_\lambda \otimes I_\mu.
    \end{equation*}
    There may be a tighter bound which depends on $\mu$, but this bound is easy to show (and has already appeared in \cite{SSW25}), and suffices for our purposes. This follows from the Littlewood--Richardson rule \cite[Corollary 8.3.2(c)]{Ful97}, which tells us that the product representation $q^d_\lambda \otimes q^d_\mu$ decomposes into irreps as
    \begin{equation*}
        q^d_\lambda \otimes q^d_\mu \cong \bigoplus_{\tau} c^\tau_{\lambda \mu} \cdot q^d_\tau.
    \end{equation*}
    Here, $c^\tau_{\lambda \mu} \in \N$ is only possibly nonzero if $\lambda \subseteq \tau$, and $|\tau| = |\lambda| + |\mu| = n+k$. More can be said about these coefficients (see \cite[Section 5.1]{Ful97}), but this is all we will need. We then have
    \begin{equation*}
        \int q^d_\lambda(P) \otimes q^d_\mu(P) \cdot \dP \cong \bigoplus_{\tau} I_{c^{\tau}_{\lambda \mu}} \otimes \bigg(\int q^d_\tau(P) \cdot \dP\bigg) = \bigoplus_\tau \frac{s_\tau(1^r)}{s_\tau(1^d)} \cdot I_{c^{\tau}_{\lambda \mu}} \otimes I_\tau. 
    \end{equation*}
    Here, the ``$\cong$" indicates equality up to a change-of-basis unitary. From this we have
    \begin{equation*}
        \int q^d_\lambda(P) \otimes q^d_\mu(P) \cdot \dP \preceq \bigg(\max_{\tau} \frac{s_\tau(1^r)}{s_\tau(1^d)}\bigg) \cdot I_{\lambda} \otimes I_\mu
    \end{equation*}
    where the maximum is in principle over $\tau$ with $c^\tau_{\lambda \mu}$ nonzero, but can be relaxed to $\tau$ only satisfying $\lambda \subseteq \tau$ and $|\tau \setminus \lambda| = k$. By an argument similar to \Cref{lem:best_mu_is_lambda+ke1}, we can show the maximum is then achieved at $\tau = \lambda+k\cdot e_1$. By the hook-content formula:
    \begin{equation*}
        \frac{s_\lambda(1^d)}{s_\lambda(1^r)} \cdot \frac{s_\tau(1^r)}{s_\tau(1^d)} = \prod_{\Box \in \tau \setminus \lambda} \frac{r+c(\Box)}{d+c(\Box)}.
    \end{equation*}
    Note the ratio is well-defined, since $\ell(\lambda) \leq r$. If we now order the boxes, so that after inserting the first $j$ boxes we have a valid Young diagram, for all $j \in [k]$, then the $j$-th box has content at most $(\lambda_1+j) - 1$, and hence
    \begin{equation*}
        \frac{s_\lambda(1^d)}{s_\lambda(1^r)} \cdot \frac{s_\tau(1^r)}{s_\tau(1^d)} \leq \prod_{j=1}^{k} \frac{r+(\lambda_1+j-1)}{d+(\lambda_1+j-1)}.
    \end{equation*}
    However, this upper bound is attained when $\tau = \lambda+k\cdot e_1$, proving optimality. Thus, 
    \begin{equation}
        \int q^d_\lambda(P) \otimes q^d_\mu(P) \cdot \dP \preceq \frac{s_{\lambda + k\cdot e_1}(1^r)}{s_{\lambda + k \cdot e_1}(1^d)} \cdot I_{\lambda} \otimes I_\mu = \frac{s_{\lambda}(1^r)}{s_\lambda(1^d)} \cdot \bigg(\prod_{j=1}^{k} \frac{r+(\lambda_1+j-1)}{d+(\lambda_1+j-1)} \bigg) \cdot I_\lambda \otimes I_\mu. \label{eq:integral_bound_transpose}
    \end{equation}

    \paragraph{Step 5:} Finally, we combine our ingredients to prove the proposition. From \Cref{eq:min_leq_E_transpose,eq:integral_bound_transpose}, we have
    \begin{equation*}
    \min_{P} \Big[ \tr\Big( \calT(\rho^{\otimes n}) \cdot P^{T,\otimes k} \Big) \Big] \leq \E_{\blambda} \bigg[ \prod_{j=1}^{k} \frac{r + \blambda_1 + j - 1}{d + \blambda_1 + j - 1} \bigg].
    \end{equation*}
    If $\calT$ achieves worst-case overlap $1-\eps$, for all inputs, then the left-hand side is greater than $1-\eps$. Finally, from \Cref{lem:lower_bound_from_WSS}, we obtain a lower bound of 
    \begin{equation*}
        n \geq \frac{1}{8} \cdot \frac{krd}{\eps},
    \end{equation*}
    for $d \geq 2$, $r \leq d/2$, and $\eps \leq 1/16$. This completes the proof. \qedhere
\end{proof}

Lastly, we use our upper and lower bounds to conclude the sample complexity of mixed state transposition.

\begin{theorem}[The PTT transposer is sample-optimal; \Cref{thm:main_result_transposition_intro}, restated.] 
    Suppose $\calT: \states(\C^d)^{\otimes n} \to \states(\C^d)^{\otimes k}$ is a transposition channel such that
    \begin{equation*}
         \Fid \big(  \calT(\rho^{\otimes n}) , \rho^{T,\otimes k}\big) \geq 1-\eps,
    \end{equation*}
    for all states $\rho$ of rank at most $r$. Then we must have $n = \Omega(krd/\eps)$, for $d \geq 2$, and $\epsilon \leq 1/16$. Moreover, the PTT transposer achieves this guarantee with $n = O(krd/\eps)$ copies. Thus, the sample complexity of producing $k$ copies of the transpose of any unknown rank-$r$ input state $\rho^{\otimes n}$ to fidelity $1-\eps$ is $\Theta(krd/\eps)$. 
\end{theorem}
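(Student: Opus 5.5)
The plan is to assemble the theorem from \Cref{prop:upper_bound_PTT} and \Cref{prop:projector_lower_bound_transpose}, in the same way the cloning theorem was assembled.

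\textbf{Upper bound.} This is exactly \Cref{prop:upper_bound_PTT}. Taking $n = \lceil krd/\eps\rceil$ copies, the PTT transposer achieves fidelity at least $1-krd/n \geq 1-\eps$ with $\rho^{T,\otimes k}$ for every $\rho$ of rank at most $r$.

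\textbf{Lower bound.} Suppose $\calT$ transposes every state of rank at most $r$ to fidelity $1-\eps$. Set $r' = \min(r, \lfloor d/2\rfloor)$. Since $d\geq 2$, we have $r' \geq 1$. Every rank-$r'$ projector state $P/r'$ has rank at most $r$, so $\calT$ in particular transposes all rank-$r'$ projector states to fidelity $1-\eps$. The hypotheses of \Cref{prop:projector_lower_bound_transpose} then hold with $r \leftarrow r'$: namely $d \geq 2$, $r' \leq d/2$, and $\eps \leq 1/16$. Concretely, via \Cref{lem:lower_bound_from_WSS}, this gives $n \geq kr'd/(8\eps)$.

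It remains to compare $r'$ with $r$, splitting into two cases.
\begin{itemize}
\item If $r \leq d/2$, then $r' = r$ and there is nothing to do.
\item If $r > d/2$, then $r' = \lfloor d/2\rfloor \geq d/3 \geq r/3$ for $d \geq 2$, using $r \leq d$.
\end{itemize}
In either case $r' \geq r/3$, so $n \geq krd/(24\eps) = \Omega(krd/\eps)$.

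\textbf{Combining.} The two bounds together give the claimed $\Theta(krd/\eps)$ sample complexity.

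\textbf{Main obstacle.} Essentially all of the difficulty was already handled in \Cref{prop:projector_lower_bound_transpose}, through the Littlewood--Richardson bound on the un-transposed integral. The only remaining point to verify is the bookkeeping between $r$ and $r'$. Specifically, one must check that $\lfloor d/2\rfloor$ is a valid rank (it is at least $1$ because $d\geq2$) and that it satisfies $\lfloor d/2\rfloor \le d/2$, as the proposition requires. One should also note that $\eps \leq 1/16$ matches the constant required there, so the universal constant in the theorem can be taken to be $\eps_0 = 1/16$.
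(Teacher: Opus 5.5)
Your proposal is correct and follows the paper's proof, which likewise combines \Cref{prop:upper_bound_PTT} with \Cref{prop:projector_lower_bound_transpose} applied to projector states of rank $\min(r,d/2)$. Your use of $\lfloor d/2\rfloor$ together with the check $r'\ge r/3$ (which relies on the implicit assumption $r\le d$) makes the paper's one-line rank reduction fully rigorous, including when $d$ is odd.
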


\begin{proof}
    In \Cref{prop:upper_bound_PTT}, we show the PTT transposer attains the upper bound. In \Cref{prop:projector_lower_bound_transpose}, we show that even when promised the input is a rank-$r$ projector state, we require $n = \Omega(krd/\eps)$ copies for $d \geq 2$, $r \leq d/2$, and $\epsilon \leq 1/16$. The lower bound for generic rank follows by considering projector states of rank $\min(r, d/2)$. This completes the proof.
\end{proof}